\DeclareMathAlphabet{\mathcal}{OMS}{cmsy}{m}{n}
\g@addto@macro\bfseries{\boldmath}
\DeclareMathAlphabet{\mathbfsf}{\encodingdefault}{\sfdefault}{bx}{n}
\DeclareBoldMathCommand\Fb{F}
\DeclareBoldMathCommand\Mb{M}
\DeclareBoldMathCommand\Nb{N}
\DeclareBoldMathCommand\Pb{P}
\DeclareBoldMathCommand\Ob{O}
\DeclareBoldMathCommand\rb{R}
\DeclareBoldMathCommand\ab{a}
\DeclareBoldMathCommand\bb{b}
\DeclareBoldMathCommand\cb{c}
\DeclareBoldMathCommand\eb{e}
\DeclareBoldMathCommand\ib{i}
\DeclareBoldMathCommand\jb{j}
\DeclareBoldMathCommand\kb{k}
\DeclareBoldMathCommand\pb{p}
\DeclareBoldMathCommand\rb{r}
\DeclareBoldMathCommand\ub{u}
\DeclareBoldMathCommand\vb{v}
\DeclareBoldMathCommand\xb{x}
\newcommand{\dvol}{{\rm dvol}}
\newcommand{\id}{\text{id}}
\def\spa{\hskip -3pt}
\newcommand{\Ac}{{\mathcal{A}}}
\newcommand{\Sol}{\mathbfsf{ Sol}}
\newcommand{\II}{\leavevmode\hbox{\rm{\small1\kern-3.8pt\normalsize1}}}
\newcommand{\supp}{\textrm{supp}\,}
\newcommand{\Bf}{{\mathscr B}}
\newcommand{\Lf}{{\mathscr L}}
\newcommand{\Mc}{{\mathcal M}}
\newcommand{\ogth}{{\mathfrak o}}
\newcommand{\tgth}{{\mathfrak t}}
\newcommand{\nb}{{\boldsymbol{n}}}
\begin{document}

\title*{Algebraic QFT in Curved Spacetime and quasifree Hadamard states: an introduction}
\author{Igor Khavkine and Valter Moretti}
\institute{Igor Khavkine \at Department of Mathematics, University of Trento, via Sommarive 14, I-38123 Povo (Trento), Italy,
\email{igor.khavkine@unitn.it}\\
$\null$\\
Valter Moretti  \at Department of Mathematics, University of Trento, via Sommarive 14, I-38123 Povo (Trento), Italy,
\email{moretti@science.unitn.it}}

\maketitle

\abstract*{Within this chapter (published as \cite{chapt:KM}) we
introduce the overall idea of the algebraic formalism of QFT on a fixed
globally hyperbolic spacetime in the framework of unital $*$-algebras.
We point out some general features of CCR algebras, such as simplicity
and the construction of symmetry-induced homomorphisms. For simplicity,
we deal only with a real scalar quantum field. We discuss some known
general results in curved spacetime like the existence of quasifree
states enjoying symmetries induced from the background, pointing out the
relevant original references. We introduce, in particular, the notion of
a Hadamard quasifree algebraic quantum state, both in the geometric and
microlocal formulation, and the associated notion of Wick polynomials.}

\abstract{Within this chapter (published as \cite{chapt:KM}) we
introduce the overall idea of the algebraic formalism of QFT on a fixed
globally hyperbolic spacetime in the framework of unital $*$-algebras.
We point out some general features of CCR algebras, such as simplicity
and the construction of symmetry-induced homomorphisms. For simplicity,
we deal only with a real scalar quantum field. We discuss some known
general results in curved spacetime like the existence of quasifree
states enjoying symmetries induced from the background, pointing out the
relevant original references. We introduce, in particular, the notion of
a Hadamard quasifree algebraic quantum state, both in the geometric and
microlocal formulation, and the associated notion of Wick polynomials. }

\section{Algebraic formalism}
With this preliminary section we introduce some basic definitions and result about algebraic formulation of quantum theory reviewing some basic definitions and results  about the algebraic machinery.
Most literature devoted to the algebraic approach to QFT is written using $C^*$-algebras, in particular Weyl $C^*$-algebras, when dealing with free fields, nevertheless the ``practical'' literature  mostly uses {\em unbounded} field operators which are encapsulated in the notion of $*$-algebra instead of  $C^*$-algebra, whose additional feature is a multiplicatively compatible norm. Actually, at the level of free theories and quasifree (Gaussian) states the two approaches are technically equivalent.  Since we think more plausible  that the non-expert reader acquainted with QFT in Minkowski spacetime is,  perhaps unconsciously,  more familiar with $*$-algebras than $C^*$-algebras,  in the rest of the chapter we adopt the $*$-algebra framework. 

\begin{definition}[Algebras] An {\bf algebra} ${\cal A}$ is a complex vector space which is equipped with an associative product $${\cal A} \times {\cal A} \ni (a,b) \mapsto ab \in {\cal A}$$ which  is distributive with respect to the vector sum operation and satisfies $$\alpha (ab) = (\alpha a)b = a(\alpha b) \quad \mbox{if $\alpha \in \mathbb C$ and $a,b \in {\cal A}$}\:.$$ 
${\cal A}$ is  a {\bf $*$-algebra} if it admits an {\bf involution}, namely an anti-linear map, ${\cal A} \ni a \mapsto a^*$, which is involutive, that is $(a^*)^* =a$, and such that $(ab)^* = b^* a^*$, for any $a,b \in {\cal A}$. \\ ${\cal A}$ is {\bf unital} if it contains  a multiplicative {\bf unit} $\II\in {\cal A}$, that is $\II a = a \II = a$ for all
$a \in {\cal A}$.

A set $G \subset {\cal A}$ is said to {\bf generate} the algebra ${\cal A}$, and the elements of $G$ are said {\bf generators} of $\Ac$, if each element of  $\Ac$ is a finite complex linear combination of products (with arbitrary number of factors) of elements of $G$.

 The {\bf center}, ${\cal Z}_{\Ac}$, of the algebra $\Ac$ is the set of elements $z\in {\cal A}$ commuting with all elements of ${\cal A}$.
\end{definition}

Regarding morphisms of algebras we shall adopt the following standard definitions 

\begin{definition} [Algebra morphisms]
Consider a map $\beta : {\cal A}_1 \to {\cal A}_2$, where ${\cal A}_i$ are algebras.

{\bf (a)}  $\beta$  is an  {\bf algebra homomorphism} if it is a complex  linear map, preserves the product and, if the algebras are unital, preserves the unit elements. 

{\bf (b)} $\beta$  is a  {\bf $*$-algebra homomorphism}
If ${\cal A}_i$ are $*$-algebras,  $\beta$ is a  {\bf algebra homomorphism}  and  preserves the involution.

{\bf (c)}  $\beta$  is an  {\bf algebra isomorphism} or  a   {\bf $*$-algebra isomorphism} if it is an   {\bf algebra homomorphism} or, respectively, a   {\bf $*$-algebra homomorphism}   and it is bijective.

{\bf (d)}  $\beta$  is an  {\bf algebra automorphism} or  a   {\bf $*$-algebra automorphism}
if it is 
a   {\bf algebra isomorphism} or, respectively, a   {\bf $*$-algebra isomorphism}   and $\Ac_1=\Ac_2$.

\noindent Corresponding {\bf anti-linear morphisms} are defined analogously replacing the linearity condition with anti-linearity.
\end{definition}

\begin{remark}$\null$		

{\bf (1)} The unit  $\II$, if exists, turns out to be  unique.  In $*$-algebras it  satisfies $\II = \II^*$.

{\bf (2)} Although we shall not deal with $C^*$-algebras, we recall the reader that a $*$-algebra 
${\cal A}$ is a {\bf $C^*$-algebra} if it is a Banach space with respect to a norm $||\:\:||$ which satisfies $||ab||\leq ||a||\:||b||$ and $||a^*a||=||a||^2$ if $a,b \in {\cal A}$. It turns out that  $||a^*||=||a||$ and, if the $C^*$-algebra is unital, $||\II||=1$.
A unital $*$-algebra admits at most one norm making it a $C^*$-algebra.
\end{remark}

\begin{definition}[Two-sided ideals]
A {\bf two-sided ideal} of an algebra $\Ac$ is a linear complex subspace ${\cal I} \subset \Ac$ such that $ab \in {\cal I}$ and  $ba \in {\cal I}$ if $a \in \Ac$ and $b \in {\cal I}$.

In a $*$-algebra,  a two-sided ideal ${\cal I}$ is said to be a  {\bf two-sided $*$-ideal} if it is also closed with respect to the involution: $a^* \in {\cal I}$ if $a \in {\cal I}$.

An algebra $\Ac$ is {\bf simple} if it does not admit  two-sided ideals different 
form $\{0\}$ and $\Ac$ itself.
\end{definition}

\begin{remark}
It should be evident that the intersection of a class of two-sided ideals (two-sided $*$-ideals) is a two-sided ideal (resp. two-sided $*$-ideal). 
\end{remark}

\subsection{The general algebraic approach to quantum theories} In the algebraic formulation of a quantum theory \cite{haag}, observables are viewed as abstract self-adjoint objects instead of operators in a given Hilbert space. These observable generate a {\em $*$-algebra} or a {\em $C^*$-algebra} depending on the context. The algebra also includes a formal identity $\II$ and {\em complex} linear combinations of observables which, consequently cannot be interpreted as observables. Nevertheless  the use of complex algebras is mathematically convenient.  The justification of a linear structure for the set of the observables is quite easy, the presence of an associative product is instead much more difficult to justify \cite{strocchi}. However,  {\em a posteriori}, this  approach reveals to be powerful  and it is particularly convenient when the theory encompasses many unitarily inequivalent representation of the algebra of observables, as it happens in quantum field theory.

\subsection{Defining $*$-algebras by generators and relations}\label{sec:genrel}
In the algebraic approach, the $*$-algebra of observables cannot be defined simply as some concrete set of (possibly unbounded) operators on some Hilbert space. Instead, the $*$-algebra must be defined abstractly, using some more basic objects. Below we recall an elementary algebraic construction that will be of use in Section~\ref{sec:kg} in defining the CCR algebra of a scalar field. 

We will construct a $*$-algebra from a presentation by \emph{generators} and \emph{relations}.
As we shall see in the Section \ref{sec:kg},  the CCR algebra is {\em generated} by abstract objects, the {\em smeared fields}, $\phi(f)$ and the unit $\II$. In other words, the elements of the algebra are finite linear combinations of products of these objects. However there also are  {\em relations}  among these objects, e.g. $[\phi(f),\phi(g)] = i E(f,g) \II$. We therefore need an abstract  procedure to define this sort of algebras, starting form generators and imposing relations.   We make each of these concepts precise in a general context.

Let us start with the notion of algebra, ${\cal A}_G$, {\em generated} by a set of generators $G$.
 Intuitively, the algebra ${\cal A}_G$ is the {\em smallest} algebra that contains the elements of the generator set $G$ (yet without any algebraic relations between these generators). The following is an example of a definition by a \emph{universal property}~\cite[\textsection I.11]{lang}. 

\begin{definition}[Free algebra]
Given a set $G$ of {\bf generators} (not necessarily finite or even countable), an algebra ${\cal A}_G$ is said to be {\bf freely generated by $G$} (or {\bf free on $G$}) if there is a map $\gamma\colon G\to {\cal A}_G$ such that, for any other algebra ${\cal B}$ and map $\beta\colon G \to {\cal B}$, there exists a \emph{unique} algebra homomorphism $b\colon {\cal A}_G \to {\cal B}$ such that $\beta = b\circ \gamma$. We use the same terminology for $*$- and unital algebras.
\end{definition}

\begin{remark} $\null$

{\bf (1)} Any two algebras freely generated by $G$, given by say $\gamma\colon G \to {\cal A}_G$ and $\gamma'\colon G \to {\cal A}'_G$, are {\em naturally isomorphic}. In this sense ${\cal A}_G$ is uniquely determined by $G$.
 By definition, there exist unique homomorphisms $a\colon {\cal A}'_G \to {\cal A}_G$ and $a'\colon {\cal A}_G \to {\cal A}'_G$ such that $\gamma = a \circ \gamma'$ and $\gamma' = a' \circ \gamma$. Their compositions satisfy the same kind of identity as $b$ in the above definition, namely $\gamma = \id \circ \gamma = (a\circ a') \circ \gamma$ and $\gamma' = \id \circ \gamma' = (a'\circ a) \circ \gamma'$, where we use $\id$ to denote the identity homomorphism on any algebra. Invoking once again uniqueness shows that $a\circ a' = \id = a'\circ a$ and hence that ${\cal A}_G$ and ${\cal A}'_G$ are naturally isomorphic. So, any representative of this isomorphism class could be called \emph{the} algebra freely generated by $G$.

{\bf (2)} To make the above definition useful we must prove that a pair  $({\cal A}_G, \gamma)$ exists for every  set $G$. 
 Consider the complex vector space spanned by the basis $\{e_S\}$, where $S$ runs through all finite ordered sequences of the elements of $G$, say $S = (g_1,\ldots g_k)$, with $k>0$. Define multiplication on be basis elements by concatenation, $e_S e_T = e_{ST}$, where $(g_1,\ldots, g_k) (g'_1,\ldots, g'_l) = (g_1, \ldots, g_k, g'_1, \ldots, g'_l)$ and extend it to the whole vector space by linearity. It is straight forward to see that we have defined an algebra that satisfies the property of being freely generated by $G$. In the case of unital $*$-algebras, we use the same construction, except that the basis is augmented by the element $\II$, with the extra multiplication rule $\II e_S = e_S \II = e_S$, and $S$ now runs through finite ordered sequences of the elements of $G\sqcup G^*$, where $G^*$ is in bijection with $G$, denoted by $*\colon G\to G^*$  and its inverse also by also $*\colon G^* \to G$. The $*$-involution is defined on the basis as $\II^* = \II$ and $e_S^* = 
 e_{S^*}$, where $S^* = (*g_k,\ldots,*g_1)$ for $S=(g_1,\ldots, g_k)$, and extended to the whole linear space by complex anti-linearity.
\end{remark}
Let us pass to the discussion of how to impose some algebraic relations on the algebra ${\cal A}_G$ freely  generated by $G$. To be concrete, think of an algebra ${\cal A}_G$ freely generated by $G$ and assume that we want to impose the relation $l$ stating that $\II a - a \II =0$ for all $a \in {\cal A}_G$ and for a preferred element $\II \in {\cal A}_G$ which will become the identity element of a new algebra  ${\cal A}_{G,l}$.  
 We can define ${\cal A}_{G,l} \cong {\cal A}_G / {\cal I}_l$, where ${\cal I}_l \subset {\cal A}_G$ is the two-sided ideal (resp.~$*$-ideal, in the case of $*$-algebras) generated by $l$, the  set of finite linear combinations of products of  $(\II a - a \II )$ and any other elements of ${\cal A}_G$.
In case a set $R$ of relations  is imposed, one similarly  takes the quotient with respect to the intersection 
${\cal I}_R$ of the ideals ($*$-ideals if working with $*$-algebras)
generated by each relation separately, ${\cal A}_{G,R} \cong {\cal A}_G / {\cal I}_R$.\\
The constructed algebra ${\cal A}_{G,R}$ satisfies the following abstract definition which again relies on a universal property.

\begin{definition}[Presentation by generators and relations]
Given an algebra ${\cal A}_G$ free on $G$ and a set $R$ whose elements are called {\bf relations} (again, not necessarily finite or even countable), together with a map $\rho\colon R \to {\cal A}_G$, an algebra ${\cal A}_{G,R}$ is said to be {\bf presented by the generators $G$ and relations $R$} if there exists an algebra homomorphism $r\colon {\cal A}_G \to {\cal A}_{G,R}$ such that, for any other algebra ${\cal B}$ and map $\beta\colon G\to {\cal B}$ such that the composition of the relations with the canonical homomorphism $b\colon {\cal A}_G \to {\cal B}$ gives $b \circ \rho = 0$, there exists a \emph{unique} algebra homomorphism $b_R\colon {\cal A}_{G,R} \to {\cal B}$ such that $b = b_R \circ r$. We use the same terminology for $*$- and unital algebras.
\end{definition}
\begin{remark}
Analogously to the case of ${\cal A}_G$, this definition easily implies that any two algebras ${\cal A}_{G,R}$,
${\cal A}'_{G,R}$ presented by the generators $G$ and relations $R$ are {\em naturally isomorphic} as the reader can immediately prove by using the universal property of the definition.
 Intuitively, the algebra ${\cal A}_{G,R}$ is therefore {\em the} algebra that is generated by $G$ satisfying \emph{only} the relations $\rho(R) = 0$.
\end{remark}
The presentation in terms of generators and relations works for a variety of algebraic structures, like groups, rings, module, algebras, etc. In fact, the universal property of objects defined in this way is most conveniently expressed using commutative diagrams in the corresponding category~\cite[\textsection I.11]{lang}. The case of groups is extensively discussed in~\cite[\textsection I.12]{lang}. Note that, though uniqueness of these objects is guaranteed by abstract categorical reasoning, their existence is not automatic and must be checked in each category of interest.

\subsection{The GNS construction}
When adopting the algebraic formulation, the notion of (quantum) state must be similarly generalized as follows.

\begin{definition}[States]\label{defstates} Given  an unital  $*$-algebra ${\cal A}$, an (algebraic) {\bf state} $\omega$ over $\cal A$  is a $\mathbb C$-linear map
$\omega : {\cal A} \to \mathbb C$ which is {\em positive} (i.e. $\omega(a^*a) \geq 0$ for all $a \in {\cal A}$) and {\em normalized} (i.e. $\omega(\II)=1$).
\end{definition}

 \noindent   The overall idea underlying this definition is that if, for a given observable  $a=a^* \in \cal A$
 we know all moments $\omega(a^n)$, and thus all expectation values of  polynomials $\omega(p(a))$, we also know the probability distribution associated to every value of $a$
when the state is $\omega$. To give a precise meaning to this idea, we should represent observables $a$ as self-adjoint operators  $\hat{a}$ in some Hilbert space ${\cal H}$, where the values of $a$ correspond to the point of  spectrum $\sigma(\hat{a})$ and the mentioned probability distribution is that generated by a vector $\Psi$ state representing $\omega$ in ${\cal H}$, and the spectral measure of $\hat{a}$. We therefore expect that, in this picture,  $\omega(a) = \langle \Psi| \hat{a} \Psi\rangle$  for some normalized vector $\Psi \in {\cal H}$. This is, in fact, a consequence of the content of the celebrated GNS re-construction procedure for unital $C^*$-algebras \cite{haag, sch,moretti}.  
We will discuss shortly the unital $*$-algebra version of that theorem.
 Note that the general problem of reconstructing even a unique classical state (a probability distribution on phase space) from the knowledge of all of its polynomial moments is much more difficult and is sometimes impossible (due to non-uniqueness). This kind of reconstruction goes under the name of the Hamburger moment problem~\cite[\textsection X.6 Ex.4]{rs2}. In this case, the successful reconstruction of a representation from a state succeeds because of the special hypotheses that go into the GNS theorem, where we know not only the expectation values of $a$ (and the polynomial $*$-algebra generated by it) but also those of all  elements of the algebra of observables.

 In the rest of the chapter $\Lf(V)$ will denote the linear space of linear operators  $T: V\to V$ on the vector space $V$.

\begin{definition}[$*$-Representations]  Let ${\cal A}$ be  a complex  algebra and ${\cal D}$ a dense linear subspace of the Hilbert space ${\cal H}$.\\
{\bf (a)} A  map $\pi : {\cal A} \to  \Lf({\cal D})$ such that it is linear and  product preserving 
is called {\bf representation} of ${\cal A}$ on ${\cal H}$ with {\bf domain} ${\cal D}$. 	
If ${\cal A}$ is furthermore unital, 
a representation is also required to satisfy:  $\pi(\II)=I$. \\
{\bf (b)} If finally ${\cal A}$ is a $*$-algebra, a  {\bf $*$-representation}   of ${\cal A}$ on ${\cal H}$ with {\bf domain} ${\cal D}$ is a representation which satisfies (where $^\dagger$ henceforth denotes the Hermitian adjoint operation in ${\cal H}$)
   $$\pi(a)^\dagger\spa\rest_{{\cal D}} = \pi(a^*) \quad \forall  a\in {\cal A}\:.$$
\end{definition}

As a general result we have the following elementary proposition

\begin{proposition}[On faithful representations] \label{propsimp}
If ${\cal A}$ is a  complex  algebra is simple, then  every representation is either faithful -- i.e., injective -- or  it is the zero representation. 
\end{proposition}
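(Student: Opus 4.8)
The plan is to exploit the single defining feature of a representation that matters here, namely that it is a product-preserving linear map, and to observe that the kernel of such a map is always a two-sided ideal. Simplicity of ${\cal A}$ then forces the kernel to be one of only two possibilities, and these correspond precisely to the two alternatives in the statement.

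First I would introduce the kernel ${\cal K} := \{a \in {\cal A} : \pi(a)=0\}$ and verify that it is a two-sided ideal in the sense of the earlier definition. Since $\pi$ is $\mathbb{C}$-linear, ${\cal K}$ is a complex linear subspace. For the absorption property, take any $a \in {\cal A}$ and $b \in {\cal K}$; because $\pi$ preserves products, $\pi(ab)=\pi(a)\pi(b)=0$ and $\pi(ba)=\pi(b)\pi(a)=0$, so both $ab$ and $ba$ lie in ${\cal K}$. This is the only computational content of the argument, and it is entirely routine; the one point worth a moment's care is checking both the left- and right-absorption conditions, which follow symmetrically from multiplicativity.

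Next I would invoke the hypothesis that ${\cal A}$ is simple, so that its only two-sided ideals are $\{0\}$ and ${\cal A}$ itself. Hence ${\cal K}$ must equal one of these. If ${\cal K}=\{0\}$, then $\pi$ is injective, i.e.\ faithful. If instead ${\cal K}={\cal A}$, then $\pi(a)=0$ for every $a\in{\cal A}$, which is exactly the zero representation. As these two cases are mutually exclusive and exhaust all possibilities, the proof is complete.

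I do not anticipate any genuine obstacle here: the entire argument rests on the elementary fact that the kernel of a product-preserving linear map is a two-sided ideal, a fact already implicit in the definitions recalled above, together with a direct application of the definition of simplicity. No appeal to the involution, the unit, or any topological or Hilbert-space structure is required, so the proposition holds at the level of bare complex algebras exactly as stated.
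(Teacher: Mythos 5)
Your proof is correct and follows exactly the same route as the paper's: identify $\mathrm{Ker}(\pi)$ as a two-sided ideal (the paper calls this ``evident'' where you spell out the routine verification) and then invoke simplicity to conclude the kernel is either $\{0\}$ or all of ${\cal A}$. Indeed, your write-up is slightly cleaner than the paper's, which contains a small typo in stating the second alternative as $\mathrm{Ker}(\pi)={\cal D}$ rather than $\mathrm{Ker}(\pi)={\cal A}$.
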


\begin{proof}
If $\pi : \Ac \to \Lf({\cal D})$ is a $*$-representation, $Ker(\pi)$ is evidently a two-sided ideal. Since $\Ac$ is simple there are only two possibilities either $Ker(\pi) = {\cal D}$ so that $\pi$ is the zero representation, or 
$Ker(\pi) = \{0\}$ and thus $\pi$ is injective. $\Box$
\end{proof}

\begin{theorem}[GNS construction]\label{GNS}
If ${\cal A}$ is a complex unital $*$-algebra and $\omega : {\cal A} \to \mathbb C$ is a state, the following facts hold.\\
{\bf (a)} There is a quadruple  $({\cal H}_\omega, {\cal D}_\omega,  \pi_\omega, \Psi_\omega)$, where:\\

(i)  ${\cal H}_\omega$ is a (complex) Hilbert space,\\

(ii)  ${\cal D}_\omega \subset {\cal H}_\omega$ is a dense subspace, \\

(iii) $\pi_\omega : {\cal A} \to \Lf({\cal D}_\omega)$ a  $*$-representation of ${\cal A}$ on ${\cal H}_\omega$ with domain ${\cal D}_\omega$,\\

(iv) $\pi_\omega({\cal A})\Psi_\omega = {\cal D}_\omega$,\\

(v)   $\omega(a) = \langle \Psi_\omega | \pi_\omega(a) \Psi_\omega \rangle$ for every $a\in {\cal A}$.\\

\noindent {\bf (b)} If $({\cal H}'_\omega, {\cal D}'_\omega,  \pi'_\omega, \Psi'_\omega)$
satisfies (i)-(v), then there is $U : {\cal H}_\omega \to {\cal H}'_\omega$ surjective and isometric such that:

(i) $U\Psi_\omega = \Psi'_\omega$, \\

(ii) $U {\cal D}_\omega = {\cal D}'_\omega$,\\

(iii)  $U\pi_\omega(a) U^{-1} = \pi'_\omega(a)$ if $a \in {\cal A}$\:. 
\end{theorem}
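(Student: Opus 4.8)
\section*{Proof proposal}

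The plan is to manufacture the Hilbert space directly out of the algebra $\Ac$ itself, using the state $\omega$ to build an inner product. First I would equip $\Ac$ with the sesquilinear form $\langle a | b\rangle := \omega(a^* b)$, which is positive semidefinite because $\omega(a^*a)\geq 0$. This form is generally degenerate, so I would introduce the Gelfand (null) ideal ${\cal N} := \{a\in\Ac : \omega(a^*a)=0\}$ and set ${\cal D}_\omega := \Ac/{\cal N}$. The one genuine preliminary fact is that ${\cal N}$ is a \emph{left} ideal: this follows from the Cauchy--Schwarz inequality $|\omega(a^* b)|^2\leq \omega(a^*a)\,\omega(b^*b)$, valid for any positive semidefinite form, applied with $b = c^* c a$, which yields $\omega\big((ca)^*(ca)\big)=0$ whenever $a\in{\cal N}$. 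Once ${\cal N}$ is a left ideal, the form descends to a genuine (positive definite) inner product on the quotient ${\cal D}_\omega$, and I would take ${\cal H}_\omega$ to be its Hilbert space completion.

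For part (a), the representation is left multiplication, $\pi_\omega(a)[b] := [ab]$, which is well defined precisely because ${\cal N}$ is a left ideal, and is manifestly linear, product preserving, and satisfies $\pi_\omega(\II)=I$ on ${\cal D}_\omega$. I would then verify the $*$-representation identity by the direct computation $\langle \pi_\omega(a)[b] | [c]\rangle = \omega\big((ab)^* c\big) = \omega\big(b^*(a^* c)\big) = \langle [b] | \pi_\omega(a^*)[c]\rangle$, which is exactly $\pi_\omega(a)^\dagger\spa\rest_{{\cal D}_\omega} = \pi_\omega(a^*)$. Taking the cyclic vector $\Psi_\omega := [\II]$, one has $\pi_\omega(a)\Psi_\omega = [a]$, so that $\pi_\omega(\Ac)\Psi_\omega$ exhausts $\Ac/{\cal N}={\cal D}_\omega$, giving (iv); moreover $\langle \Psi_\omega | \pi_\omega(a)\Psi_\omega\rangle = \omega(\II^* a)=\omega(a)$ gives (v), and $\|\Psi_\omega\|^2=\omega(\II)=1$ by normalization.

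For the uniqueness in part (b), given a second quadruple $({\cal H}'_\omega,{\cal D}'_\omega,\pi'_\omega,\Psi'_\omega)$ I would define $U$ on the dense domain by $U\pi_\omega(a)\Psi_\omega := \pi'_\omega(a)\Psi'_\omega$. The key identity is that both inner products compute the same number through $\omega$:
\[
\langle \pi'_\omega(a)\Psi'_\omega | \pi'_\omega(b)\Psi'_\omega\rangle = \langle \Psi'_\omega | \pi'_\omega(a^* b)\Psi'_\omega\rangle = \omega(a^* b) = \langle \pi_\omega(a)\Psi_\omega | \pi_\omega(b)\Psi_\omega\rangle,
\]
using the $*$-representation property and (v) for each quadruple. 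This single computation simultaneously shows that $U$ is well defined (vectors of zero norm map to vectors of zero norm) and isometric; its range is $\pi'_\omega(\Ac)\Psi'_\omega={\cal D}'_\omega$ by (iv), giving (ii). A densely defined isometry with dense range extends uniquely to a surjective isometry $U:{\cal H}_\omega\to{\cal H}'_\omega$. Then (i) is $U\Psi_\omega = U\pi_\omega(\II)\Psi_\omega = \pi'_\omega(\II)\Psi'_\omega = \Psi'_\omega$, and the intertwining property (iii) follows from $U\pi_\omega(a)[b] = U\pi_\omega(ab)\Psi_\omega = \pi'_\omega(ab)\Psi'_\omega = \pi'_\omega(a)U[b]$ on ${\cal D}_\omega$.

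I expect the main point to be a subtlety rather than a difficulty of argument: because $\Ac$ is only a $*$-algebra and not a $C^*$-algebra, the operators $\pi_\omega(a)$ need not be bounded, so ${\cal H}_\omega$ does not carry a representation of $\Ac$ by bounded operators, and every adjoint and intertwining identity must be read as a statement on the invariant dense domain ${\cal D}_\omega$ (in particular $U$ must be checked to carry ${\cal D}_\omega$ \emph{onto} ${\cal D}'_\omega$, not merely ${\cal H}_\omega$ onto ${\cal H}'_\omega$). The only genuine inequality used anywhere is Cauchy--Schwarz for the semidefinite form, needed to see that ${\cal N}$ is a left ideal; everything else reduces to algebraic identities pushed through the quotient.
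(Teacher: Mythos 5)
Your proposal is correct and follows essentially the same route as the paper: quotient $\Ac$ by the null space $N=\{a:\omega(a^*a)=0\}$ (a left ideal by Cauchy--Schwarz), complete to get ${\cal H}_\omega$, represent by left multiplication with cyclic vector $[\II]$, and verify the $*$-property by the adjoint identity $\omega((ab)^*c)=\omega(b^*(a^*c))$. For part (b) you spell out the well-definedness and isometry of $U\pi_\omega(a)\Psi_\omega:=\pi'_\omega(a)\Psi'_\omega$, which is exactly the argument the paper invokes but leaves to the reader.
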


\begin{proof}
Consider ${\cal A}$ as complex vector space  and define $N=\{a \in{\cal A}\:|\: \omega(a^*a)=0\}$. $N$ is a subspace as easily follows from sesquilinearity of $(a,b) \mapsto \omega(a^*b)$ and from the Cauchy-Schwartz inequality which holds because  $(a,b) \mapsto \omega(a^*b)$  is non-negative. Define ${\cal D}_{\omega}   \stackrel {\mbox{\scriptsize  def}} {=} {\cal A}/N$ as a complex vector space and equip it with the Hermitian scalar product 
$\langle [a]| [b] \rangle   \stackrel {\mbox{\scriptsize  def}} {=} \mu(a^*b)$, which turns  out to be well-defined (because $\omega(a^*b)= \omega(b^*a)=0$ if $a\in N$ again from Cauchy-Schwartz inequality) and positive.   ${\cal H}_\omega$
is, by definition, the completion of ${\cal D}_{\omega}$ with respect to the mentioned scalar product.
Now observe that $N$ is also a left-ideal  ($\omega((ba)^*ba) = \omega((b^*(ba))^*a)=0$ if $a\in N$)
and consequently $\pi_{\omega}(a)[b]   \stackrel {\mbox{\scriptsize  def}} {=} [ab]$ is well-defined ($[ab]=[ac]$ if $c\in [b]$) and is a unital algebra representation.
Defining $\Psi_\omega   \stackrel {\mbox{\scriptsize  def}} {=} [1]$, we have $\omega(a) = \langle \Psi_\omega| \pi_\omega(a) \Psi_\omega\rangle$.
Finally:
$$\langle \pi_\omega(c) \Psi_\omega| \pi_\omega(a) \pi_\omega(b) \Psi_\omega \rangle =
\omega(c^*(a^*)^*b)= \omega((a^*c)^*b) = \langle \pi_\omega(a^*c) \Psi_\omega| \pi_\omega(b) \Psi_\omega \rangle$$
$$=   \langle \pi_\omega(a^*) \pi_\omega(c) \Psi_\omega| \pi_\omega(b) \Psi_\omega \rangle $$
Summing up, we have:
$$\langle \pi_\omega(a)^\dagger \pi_\omega(c) \Psi_\omega| \pi_\omega(b) \Psi_\omega \rangle = \langle \pi_\omega(c) \Psi_\omega| \pi_\omega(a) \pi_\omega(b) \Psi_\omega \rangle$$ $$ = \langle \pi_\omega(a^*) \pi_\omega(c) \Psi_\omega| \pi_\omega(b) \Psi_\omega \rangle$$
Since $c, b$ are arbitrary and both $\pi_\omega(b)\Psi_\omega$ and $\pi_\omega(b)\Psi_\omega$ range in ${\cal D}_\omega$ which is dense,   we have found that $\pi(a)^\dagger|_{{\cal D}_\omega} = \pi(a^*)$.
The proof of (b) is easy. As a matter of fact the operator $U$ is completely defined by
$U\pi_\omega(a) \Psi_\omega   \stackrel {\mbox{\scriptsize  def}} {=} \pi'_\omega(a)\Psi'_\omega$, we leave to the reader the proof of the fact that it is well-defined and satisfies the required properties. The proof is strictly analogous to the corresponding part of (b) in Proposition \ref{propsymm}  below.  $\Box$
\end{proof}

\noindent There exists a stronger version of that theorem \cite{haag, BR, moretti} regarding the case where  ${\cal A}$ is a unital $C^*$-algebra.  The quadruple $({\cal H}_\omega, {\cal D}_\omega,  \pi_\omega, \Psi_\omega)$ is called {\bf GNS triple} (!) the name is due to the fact that for $C^*$-algebras ${\cal D}_\omega = {\cal H}_\omega$. In that case  the representation $\pi_\omega$ is continuous (norm decreasing more precisely)  with  respect to the operator norm $||\:\:||$ in $\Bf({\cal H}_\omega)$, since 
$\pi_{\omega}(a) \in \Bf({\cal H}_\omega)$ if $a\in {\cal A}$.  \\ As a general fact, we have that a $*$-representations $\pi$ of a unital  $C^*$-algebra ${\cal A}$ on a Hilbert space ${\cal H}$ assuming values in  $\Bf({\cal H})$  is automatically norm decreasing, with  respect to the operator norm $||\:\:||$ in $\Bf({\cal H})$. Moreover $\pi$ is isometric  if and only if it is injective \cite{haag,BR}.

\begin{remark}\label{rem1}$\null$

{\bf (1)}  Since ${\cal D}_\omega$ is dense $\pi_\omega(a)^\dagger$ is always well defined and, in turn, densely defined 
for (iii) in (a).  Hence, $\pi_\omega(a)$ is always closable. Therefore,
 if $a=a^*$, $\pi(a)$ is at least symmetric.  If $\pi(a)$ is self-adjoint the probability distribution of the observable $a$ in the state $\omega$  mentioned in the comment after Def. \ref{defstates} is ${\cal B}(\mathbb R) \ni E \mapsto \langle \Psi_\omega| P^{(\pi_\omega(a))}_E \Psi_\omega\rangle$, where ${\cal B}(\mathbb R)$ is the class of Borel sets on $\mathbb R$ and $P^{(\pi_\omega(a))}$ the projection-valued measure of $\pi_\omega(a)$. The precise technical conditions, and their physical significance, under which an operator $\pi(a)$, with $a=a^*$, might be essentially self-adjoint on ${\cal D}_\omega$ are poorly explored in the literature and deserve further investigation.

{\bf (2)}  The {\bf  weak commutant} $\pi'_w$ of a $*$-representation $\pi$ of ${\cal A}$ on ${\cal H}$ with  domain ${\cal D}$,  is defined as\footnote{$\pi'_w$ can equivalently be defined as $\{A \in \Bf({\cal H}) \:\:|\:\: A\phi(a)= \pi(a^*)^\dagger A\:,\quad \forall a \in {\cal A}\}$.}
$$\pi'_w  \stackrel {\mbox{\scriptsize  def}} {=} \{ A \in \Bf({\cal H}) \:\:|\:\:  \langle \psi | A \pi(a) \phi \rangle = \langle \pi(a)^\dagger \psi| A \phi \rangle \quad \forall a \in {\cal A}\:, \forall \psi, \phi \in {\cal D}\}\:,$$
where $\Bf({\cal H})$ denotes the $C^*$-algebra of all bounded operators on ${\cal H}$. If ${\cal A}$ is a unital $C^*$-algebra, the weak commutant of $\pi$ (with domain given by the whole Hilbert space) coincides to the standard commutant.
We say that a $*$-representation  $\pi$ of ${\cal A}$  on ${\cal H}$ is {\bf weakly irreducible} if its  weak commutant is trivial, that is, it coincides with the set of operators $cI : {\cal H} \to {\cal H}$ for $c\in \mathbb C$. 

{\bf (3)}  The set of states  over the unital $^*$-algebra ${\cal A}$ is a {\bf convex body}. In other words {\em convex combinations} of states are states: $\omega = p \omega_1 + (1-p)\omega_2$ with $p \in (0,1)$ is a state if $\omega_1, \omega_2$ are.

{\bf (4)} A state $\omega$ is said to be {\bf extremal} if  $\omega = p \omega_1 + (1-p)\omega_2$, with $p \in (0,1)$ and $\omega_1, \omega_2$  are states, is possible only if $\omega_1=\omega_2 (= \omega)$.  These states are also called {\bf pure states}. It is possible to prove the following \cite{haag,sch}: 
\begin{proposition}[Pure states and irreducible representations]  Referring to the hypotheses of Theorem \ref{GNS}, $\omega$
is pure if and only if $\pi_\omega$ is weakly  irreducible.
\end{proposition}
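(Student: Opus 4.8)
The plan is to set up the classical correspondence between the order intervals on the two sides of the GNS data, but phrased through the \emph{weak} commutant because the operators $\pi_\omega(a)$ are in general unbounded. Write $\pi'_w$ for the weak commutant of $\pi_\omega$. First I would record two structural facts that trivialize the bookkeeping: $\pi'_w$ is a linear subspace of $\Bf(\HH_\omega)$ containing $I$, and it is $*$-invariant, i.e.\ $T\in\pi'_w\Rightarrow T^\dagger\in\pi'_w$ (this follows by conjugating the defining identity $\langle\psi\mid T\pi_\omega(a)\phi\rangle=\langle\pi_\omega(a)^\dagger\psi\mid T\phi\rangle$ and using $\pi_\omega(a)^\dagger\rest_{\DD_\omega}=\pi_\omega(a^*)$ on vectors of $\DD_\omega$). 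Consequently, decomposing any $T$ into real and imaginary parts and rescaling any bounded self-adjoint element affinely into $[0,I]$, weak irreducibility is equivalent to the statement that every self-adjoint $T\in\pi'_w$ with $0\le T\le I$ is a scalar multiple of $I$.

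The core is a pair of mutually inverse constructions linking such operators with dominated positive functionals. Given $T\in\pi'_w$ with $0\le T\le I$, set $\omega_T(a):=\langle\Psi_\omega\mid T\pi_\omega(a)\Psi_\omega\rangle$; using the weak-commutant identity to move $\pi_\omega(a^*)$ onto the cyclic vector one checks $\omega_T(a^*a)=\langle\pi_\omega(a)\Psi_\omega\mid T\pi_\omega(a)\Psi_\omega\rangle\ge0$, so $\omega_T$ is positive, and $\omega-\omega_T=\omega_{I-T}$ is positive as well (domination $\omega_T\le\omega$). Conversely, given a positive $\omega'$ with $\omega-\omega'$ positive, domination together with the Cauchy--Schwarz inequality for $\omega'$ shows $\omega'(a^*a)=0$ whenever $a$ lies in the Gelfand ideal $N=\{a\mid\omega(a^*a)=0\}$, so the sesquilinear form $([a],[b])\mapsto\omega'(a^*b)$ is well defined and bounded by $1$ on $\DD_\omega$; extending it to $\HH_\omega$ and invoking Riesz representation yields $T\in\Bf(\HH_\omega)$ with $0\le T\le I$ and $\langle[a]\mid T[b]\rangle=\omega'(a^*b)$, and a one-line computation $\langle[a]\mid T\pi_\omega(c)[b]\rangle=\omega'(a^*cb)=\langle\pi_\omega(c)^\dagger[a]\mid T[b]\rangle$ places $T$ in $\pi'_w$.

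With these in hand both implications are short. Recall first that $\omega$ is pure precisely when every positive $\omega'$ dominated by $\omega$ is a scalar multiple of $\omega$: if $\omega=p\omega_1+(1-p)\omega_2$ then $p\omega_1$ is dominated, and conversely a nontrivial dominated $\omega'$ (with $\omega'(\II)\in(0,1)$) splits $\omega$ into a convex combination. So, assuming $\pi_\omega$ weakly irreducible and $\omega=p\omega_1+(1-p)\omega_2$, I would apply the second construction to $p\omega_1$ to get $T\in\pi'_w$, conclude $T=\lambda I$, and read off $p\omega_1=\lambda\omega$, whence (evaluating at $\II$) $\omega_1=\omega$ and likewise $\omega_2=\omega$. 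In the opposite direction, assume $\omega$ pure and take self-adjoint $T\in\pi'_w$ with $0\le T\le I$; the first construction gives a dominated $\omega_T$, purity forces $\omega_T=\lambda\omega$, hence $\langle\Psi_\omega\mid(T-\lambda I)\psi\rangle=0$ for all $\psi\in\DD_\omega$ and therefore $T\Psi_\omega=\lambda\Psi_\omega$ by cyclicity; finally the weak-commutant identity gives, for all $a,b$, the chain $\langle[b]\mid T[a]\rangle=\langle\pi_\omega(a)^\dagger[b]\mid T\Psi_\omega\rangle=\lambda\langle[b]\mid[a]\rangle$, so $T=\lambda I$ by density and boundedness.

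The delicate point throughout is that $\pi_\omega(\Ac)$ is unbounded, so one may never commute $T$ with $\pi_\omega(a)$ as operators; every manipulation has to be routed through the bounded object $T$ and the bracket $\langle\,\cdot\mid\cdot\,\rangle$, applying $\pi_\omega(a)^\dagger\rest_{\DD_\omega}=\pi_\omega(a^*)$ only to vectors of $\DD_\omega$. I expect the main obstacle to be exactly the last upgrade $T\Psi_\omega=\lambda\Psi_\omega\Rightarrow T=\lambda I$: knowledge of $T$ on the single cyclic vector must be propagated to all of $\DD_\omega$, and the only tool for this is the weak-commutant relation used to slide $\pi_\omega(a)$ off $T$ and onto $\Psi_\omega$. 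The boundary cases $\omega_T(\II)\in\{0,1\}$ are handled separately by noting that $0\le T\le I$ forces $T\Psi_\omega=0$ or $T\Psi_\omega=\Psi_\omega$, respectively, after which the same identity yields $T=0$ or $T=I$.
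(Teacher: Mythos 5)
Your proof is correct, and it is complete. Note that the paper itself does not supply a proof of this proposition; it defers to the cited literature (Haag; Schm\"udgen), so there is no internal argument to compare against. What you have written is precisely the standard argument those references give, correctly adapted to the unbounded, unital $*$-algebra setting: the two mutually inverse constructions (from $T\in\pi'_w$ with $0\le T\le I$ to the dominated functional $\omega_T$, and back via Cauchy--Schwarz on the Gelfand ideal, the bounded form on ${\cal D}_\omega={\cal A}/N$, and Riesz representation) set up the order correspondence that makes both implications one-line arguments. Two points deserve explicit praise because they are exactly where sloppy versions of this proof fail: (i) you never multiply elements of $\pi'_w$ together --- in the unbounded setting the weak commutant need not be an algebra, so only its linearity and the $*$-invariance you verified may be used, and your reduction to self-adjoint $T\in[0,I]$ respects this; (ii) the upgrade from $T\Psi_\omega=\lambda\Psi_\omega$ to $T=\lambda I$ is routed entirely through the defining identity of $\pi'_w$ (sliding $\pi_\omega(a)$ off $T$ onto the cyclic vector), never through an operator commutation $T\pi_\omega(a)=\pi_\omega(a)T$, which is not available here. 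The boundary cases $\omega_T(\II)\in\{0,1\}$ are also handled, so the purity characterization you invoke is fully justified.
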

(If ${\cal A}$ is a unital $C^*$-algebra the same statement holds but ``weakly'' can be omitted.)
 Therefore, even if $\omega$ is represented by a unit vector $\Psi_\omega$  in ${\cal H}_\omega$, it does not mean that $\omega$ is pure. In standard quantum mechanics it happens because 
${\cal A}$ is implicitly assumed to coincide to the whole $C^*$-algebra  $\Bf({\cal H})$ of everywhere-defined bounded operators over ${\cal H}$ and $\pi_\omega$ is the identity when $\omega$ corresponds to a vector state of ${\cal H}$.

{\bf (5)}  When ${\cal A}$ is a unital $C^*$-algebra, the convex body of states on ${\cal A}$ is hugely larger that the states of the form
$${\cal A}\ni  a \mapsto \omega_\rho(a)   \stackrel {\mbox{\scriptsize  def}} {=} tr(\rho \pi_\omega(a))$$ for a fixed (algebraic) state $\omega$ and where $\rho \in \Bf({\cal H}_\omega)$ is a positive trace class operator with unit trace. These trace-class operators  states associated with an algebraic state $\omega$ form the {\bf folium} of $\omega$ and are called {\bf normal states} in ${\cal H}_\omega$.   If ${\cal A}$ is not $C^*$, the trace  $tr(\rho \pi_\omega(a))$ is not defined in general, because  $ \pi_\omega(a)$ is not bounded and $\rho \pi_\omega(a)$ may not be  well defined nor  trace class in general. Even if $\Ac$ is just a unital $*$-algebra, a unit vector  $\Phi \in {\cal D}_\omega$   defines however  a state by means of 
$${\cal A}\ni  a \mapsto  \omega_{\Phi}(a)   \stackrel {\mbox{\scriptsize  def}} {=} \langle \Phi| \pi_\omega(a) \Phi \rangle\:,$$
 recovering the standard formulation of elementary quantum mechanics. {\em These states are pure when $\omega$
is pure}. More strongly, in this situation $({\cal H}_\omega, {\cal D}_\omega, \pi_{\omega}, \Phi)$ is just a GNS triple of $\omega_{\Phi}$, because $\pi_{\omega}(\Ac)\Phi$ is dense in ${\cal H}_\omega$  (this is because the orthogonal projector onto $\pi_{\omega}(\Ac)\Phi$ cannot vanish and  belongs to the weak commutant $\pi'_{\omega w}$ which is trivial, because $\omega$ is pure.)
 If $\omega$ is not pure,  $\omega_{\Phi}$ may  not be  pure also if it is  represented by a unit vectors.

{\bf (6)} There are unitarily non-equivalent GNS representations of the same unital $*$-algebra ${\cal A}$ associated with states $\omega$, $\omega'$. In other words  there is no surjective isometric operator $U: {\cal H}_\omega \to {\cal H}_{\omega'}$ such that $U\pi_\omega(a) = \pi_{\omega'}(a) U$ for all $a \in {\cal A}$. (Notice that, in the notion of {\em unitary equivalence} it is not required that $U\Psi_{\omega}= \Psi_{\omega'}$). Appearance of unitarily inequivalent representations is natural when ${\cal A}$ has a non-trivial {\em center},  ${\cal Z}_{{\cal A}}$, i.e., it
 contains something more than the elements $c\II$ for $c\in \mathbb C$. 
 Pure states $\omega, \omega'$ such that $\omega(z) \neq \omega'(z)$ for some $z \in {\cal Z}_{\cal A}$ give rise to unitarily inequivalent GNS representations. This easily follows from the fact that $\pi_\omega(z)$ and $\pi_{\omega'}(z)$, by irreducibility of the representations, must be operators of the form $c_zI$ and $c'_zI$ for complex numbers $c_z,c'_z$ in the respective Hilbert spaces ${\cal H}_\omega$ and ${\cal H}_{\omega'}$.
It should be noted that such representations remain inequivalent even if the unitarity of $U$ is relaxed. However, it can happen that some representations are unitarily inequivalent even when the algebra has a trivial center. See Section~\ref{sec:uni-ineq} for a relevant example.
\end{remark}

\begin{remark}
The positivity requirement on states is physically meaningful when every self-adjoint element of the $*$-algebra is a physical observable. It is also a crucial ingredient in the GNS reconstruction theorem. However, in the treatment of gauge theories in the Gupta-Bleuler or BRST formalisms, in order to keep spacetime covariance, one must enlarge the $*$-algebra to include unobservable or {\em ghost} fields. Physically meaningful states are then allowed to fail the positivity requirement on $*$-algebra elements generated by ghost fields. The GNS reconstruction theorem is then not applicable and, in any case, the $*$-algebra is expected to be represented on an indefinite scalar product space (a {\em Krein space}) rather than a Hilbert space. Fortunately, several extensions of the GNS construction have been made, with the positivity requirement replaced by a different one that, instead, guarantees the reconstructed $*$-representation to be on an indefinite scalar product space. Such generalizations and their technical details are discussed in~\cite{hofmann}.
\end{remark}

\noindent Another relevant result arising from the GNS theorem concerns {\em symmetries} represented by $*$-algebra (anti-linear) automorphisms.

\begin{proposition}[Automorphisms induced by invariant states]\label{propsymm} Let ${\cal A}$ be an unital $*$-algebra, $\omega$ a state on it and consider its GNS representation. The following facts hold.\\
{\bf (a)} If $\beta : {\cal A} \to {\cal A}$ is a unital $*$-algebra automorphism (resp. anti-linear automorphism) which leaves fixed $\omega$, i.e., $\omega \circ \beta = \omega$, then there exist a unique bijective bounded operator $U^{(\beta)} : {\cal H}_\omega \to {\cal H}_\omega$ such that:\\

(i)\:\: $U^{(\beta)} \Psi_\omega = \Psi_\omega$ \: and \: $U^{(\beta)} ({\cal D}_\omega) = {\cal D}_\omega$\:,
 \\

(ii) \:\: $U^{(\beta)} \pi_\omega(a) U^{(\beta)-1} x = \pi_\omega\left(\beta(a) \right)x$\:\: if $a \in {\cal A}$ and $x \in {\cal D}_\omega$.\\

\noindent  $U^{(\beta)}$ turns out to be unitary (resp. anti-unitary).

\noindent{\bf (b)} If, varying $t \in \mathbb R$, $\beta_t : {\cal A} \to {\cal A}$ defines a one-parameter group of  unital $*$-algebra automorphisms\footnote{There do not exist  one-parameter group of  unital $*$-algebra {\em anti-linear} automorphisms, this is because  $\beta_t = \beta_{t/2}\circ \beta_{t/2}$ is linear both for $\beta_{t/2}$ linear or anti-linear.} which leaves fixed $\omega$, the corresponding unitary operators $U^{(\beta)}_t$ as in (a) define a one-parameter group of unitary operators in ${\cal H}_\omega$.\\
{\bf (c)} $\{U^{(\beta)}_t\}_{t \in \mathbb R}$ as in (b) is strongly continuous (and thus it admits a self-adjoint generator) if and only if $$\lim_{t \to 0} \:\omega(a^*\beta_t(a))= \omega(a^*a)\quad \mbox{for every $a \in {\cal A}$.}$$ 
\end{proposition}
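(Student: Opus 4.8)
The plan is to build $U^{(\beta)}$ directly on the GNS data of Theorem~\ref{GNS}, recalling that ${\cal D}_\omega = \pi_\omega({\cal A})\Psi_\omega$ consists of the classes $[a]$ with inner product $\langle[a]\mid[b]\rangle = \omega(a^*b)$ and $\Psi_\omega = [\II]$. For part (a) in the linear case, I would \emph{define} $U^{(\beta)}$ on the dense domain ${\cal D}_\omega$ by $U^{(\beta)}[a] := [\beta(a)]$, i.e.\ $U^{(\beta)}\pi_\omega(a)\Psi_\omega := \pi_\omega(\beta(a))\Psi_\omega$. Two checks make this a genuine isometry: first, well-definedness, since if $a\in N = \{c : \omega(c^*c)=0\}$ then $\omega(\beta(a)^*\beta(a)) = \omega(\beta(a^*a)) = \omega(a^*a) = 0$, using that $\beta$ preserves the involution and the product and that $\omega\circ\beta = \omega$, so $\beta(a)\in N$; second, the scalar product is preserved, because $\langle U^{(\beta)}[a]\mid U^{(\beta)}[b]\rangle = \omega(\beta(a)^*\beta(b)) = \omega(\beta(a^*b)) = \omega(a^*b) = \langle[a]\mid[b]\rangle$. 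Since $\beta$ is bijective the same formula with $\beta^{-1}$ produces a two-sided inverse, so $U^{(\beta)}$ maps ${\cal D}_\omega$ onto ${\cal D}_\omega$ and extends by continuity to a unitary on ${\cal H}_\omega$; it fixes $\Psi_\omega$ because $\beta(\II) = \II$, and the intertwining relation (ii) follows from $U^{(\beta)}\pi_\omega(a)U^{(\beta)-1}[b] = [\beta(a)\, b] = \pi_\omega(\beta(a))[b]$. Uniqueness is immediate: any bounded operator satisfying (i)--(ii) is forced to equal $\pi_\omega(\beta(a))\Psi_\omega$ on the total set $\pi_\omega({\cal A})\Psi_\omega$, and boundedness then fixes it on all of ${\cal H}_\omega$. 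The anti-linear case is identical in form, except that $U^{(\beta)}$ comes out conjugate-linear and the invariance must be read in the conjugate sense $\omega\circ\beta = \overline{\omega}$ (the only reading consistent with $\omega$ linear and $\beta$ anti-linear), giving $\langle U^{(\beta)}[a]\mid U^{(\beta)}[b]\rangle = \overline{\omega(a^*b)}$, i.e.\ $U^{(\beta)}$ anti-unitary.

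For part (b) I would use uniqueness from (a) as the whole engine. Each $\beta_t$ is a unital $*$-automorphism fixing $\omega$, so (a) yields a unique unitary $U^{(\beta)}_t =: U_t$. Since $\beta_0 = \id$ gives $U_0 = I$, and since $U_tU_s$ fixes $\Psi_\omega$ and satisfies $(U_tU_s)\pi_\omega(a)(U_tU_s)^{-1} = \pi_\omega(\beta_t(\beta_s(a))) = \pi_\omega(\beta_{t+s}(a))$, it meets the characterizing properties of $U_{t+s}$; uniqueness then forces $U_tU_s = U_{t+s}$. Hence $t\mapsto U_t$ is a one-parameter group of unitaries.

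For part (c) the key identity is that for $\psi = [a]\in{\cal D}_\omega$ one has $\langle\psi\mid U_t\psi\rangle = \langle[a]\mid[\beta_t(a)]\rangle = \omega(a^*\beta_t(a))$ while $\|\psi\|^2 = \omega(a^*a)$. The forward implication is then immediate: strong continuity gives $U_t\psi\to\psi$, hence $\omega(a^*\beta_t(a)) = \langle\psi\mid U_t\psi\rangle\to\|\psi\|^2 = \omega(a^*a)$. For the converse I would exploit that each $U_t$ is unitary, so on ${\cal D}_\omega$
\[
\|U_t[a]-[a]\|^2 = 2\,\omega(a^*a) - 2\,\Re\,\omega(a^*\beta_t(a)) \xrightarrow[t\to0]{} 0
\]
by hypothesis; this gives strong continuity at $t=0$ on the dense subspace ${\cal D}_\omega$. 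A standard three-$\varepsilon$ argument, legitimate because $\|U_t\|=1$ uniformly in $t$, upgrades this to $U_t\psi\to\psi$ for every $\psi\in{\cal H}_\omega$, and the group relation $U_{t+h} - U_t = U_t(U_h - I)$ promotes continuity at $0$ to continuity at every $t$. Stone's theorem then supplies the self-adjoint generator.

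The routine parts are the algebraic verifications in (a) and the bookkeeping in (b). The one genuinely analytic step --- and the place I would be most careful --- is the converse in (c): one must pass from the scalar hypothesis $\omega(a^*\beta_t(a))\to\omega(a^*a)$ to norm convergence $U_t\psi\to\psi$ on a dense set (the displayed identity does exactly this, crucially using unitarity of $U_t$ to control $\|U_t[a]\|$) and only then extend off the dense set using uniform boundedness before invoking Stone. A secondary subtlety worth flagging is the correct reading of ``$\omega\circ\beta=\omega$'' in the anti-linear case noted above, without which the induced operator fails to be anti-unitary.
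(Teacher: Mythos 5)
Your proof is correct and follows essentially the same route as the paper's: the same construction of $U^{(\beta)}$ on ${\cal D}_\omega$ via $\pi_\omega(a)\Psi_\omega \mapsto \pi_\omega(\beta(a))\Psi_\omega$ with well-definedness and isometry from $\omega\circ\beta=\omega$, the inverse built from $\beta^{-1}$, uniqueness and the group law in (b) from density, and in (c) the same key identity $\langle x \mid U_t x\rangle = \omega(a^*\beta_t(a))$. The only differences are cosmetic: where the paper delegates the converse in (c) to Proposition 9.24 of \cite{moretti}, you inline its proof (the norm identity $\|U_t[a]-[a]\|^2 = 2\omega(a^*a)-2\Re\,\omega(a^*\beta_t(a))$ plus uniform boundedness and the group relation), and you make explicit the conjugate reading $\omega\circ\beta=\overline{\omega}$ of invariance in the anti-linear case, which the paper leaves implicit.
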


\begin{proof} Let us start from (a) supposing that $\beta$ is a $*$-automorphism. If an operator satisfying (i) and (ii) exists it also satisfies $U^{(\beta)} \pi_\omega(a)\Psi_\omega =  \pi_\omega\left(\beta(a) \right) \Psi_\omega$. Since $ \pi_\omega({\cal A})\Psi_\omega$ is 
dense in ${\cal H}_\omega$, this identity determines $U^{(\beta)}$ on ${\cal D}_\omega$.
Therefore we are lead to try to define $U_0^{(\beta)} \pi_\omega(a)\Psi_\omega   \stackrel {\mbox{\scriptsize  def}} {=}  \pi_\omega\left(\beta(a) \right) \Psi_\omega$ on ${\cal D}_\omega$. 
From (v) in (a) of Theorem \ref{GNS} it immediately arises that $||U_0^{(\beta)} \pi_\omega(a)\Psi_\omega ||^2 = ||\pi_\omega(a)\Psi_\omega ||^2$. That identity on the one hand proves that $U^{(\beta)}$
is well defined because if $\pi_\omega(b)\Psi_\omega = \pi_\omega(b')\Psi_\omega$ then 
$U^{(\beta)}\pi_\omega(b)\Psi_\omega =U^{(\beta)} \pi_\omega(b')\Psi_\omega$, on the other hand it proves that $U^{(\beta)}$ is isometric on ${\cal D}_\omega$. If we analogously define the other isometric operator $V_0^{(\beta)} \pi_\omega(a)\Psi_\omega   \stackrel {\mbox{\scriptsize  def}} {=}  \pi_\omega\left(\beta^{-1}(a) \right) \Psi_\omega$ on ${\cal D}_\omega$, we see that $U^{(\beta)}_0V x= VU^{(\beta)}_0x$ for every $x\in {\cal D}_{\omega}$. Since ${\cal D}_\omega$ is dense in ${\cal H}_\omega$, these identities extend to analogous identities for the unique bounded extensions of $U^{(\beta)}_0$ and $V$ valid over the whole Hilbert space. In particular the former operator extends into an isometric surjective operator (thus unitary) $U^{(\beta)}$ which, by construction, satisfies (i) and (ii). Notice that $V$, defined on ${\cal D}_\omega$, is the inverse of $U_0^{(\beta)}$ so that, in particular $U^{(\beta)}({\cal D}_\omega)=  U_0^{(\beta)}({\cal D}_\omega) = {\cal D}_\omega$.
 The followed procedure also proves that $U^{(\beta)}$ is uniquely determined by (i) and (ii). The anti-linear case is proved analogously. Anti-linearity of $\beta$ implies that, in $U_0^{(\beta)} \pi_\omega(a)\Psi_\omega   \stackrel {\mbox{\scriptsize  def}} {=}  \pi_\omega\left(\beta(a) \right) \Psi_\omega$,
$U_0^{(\beta)}$ must be anti-linear and thus anti-unitary.\\
 The proof of (b) immediately arises from (a). Regarding (c), we observe that, if $x = \pi_\omega(a)\Psi_\omega$ one has for $t\to 0$ by the GNS theorem,
$$\langle x | U_t^{(\beta)}x \rangle = \omega(a^*\beta_t(a)) \to \omega(a^*a)=
\langle x | x \rangle$$
Since the span of the vectors $x$ is dense in ${\cal H}_\omega$, $U^{(\beta)}_t$ is strongly continuous  due to Proposition 9.24 in \cite{moretti}. $\Box$
\end{proof}

\begin{remark} \label{remgroupcont} $\null$

{\bf (1)}  Evidently, the statements (b) and  (c) can immediately be generalized  to the case of a representation of a generic {\em  group} or, respectively, {\em connected topological group}, $G$.
Assume that $G$ is represented in terms of automorphisms of unital $*$-algebras  $\beta_g : \Ac \to \Ac$ for $g\in G$.  With the same proof of (c), it turns out that, if $\omega$ is invariant under this representation of $G$, the associated representation in the GNS Hilbert space of $\omega$, $\{U^{(\beta)}_g\}_{g \in G}$  is strongly continuous if and only if $$\lim_{g \to e} \:\omega(a^*\beta_g(a))= \omega(a^*a)\quad \mbox{for every $a \in {\cal A}$,}$$ 
where $e\in G$ is the unit element.

{\bf (2)} It could happen in physics that an {\em algebraic symmetry}, i.e., an automorphism (or anti-automorphism) $\beta : \Ac \to \Ac$ exists for a unital $*$-algebra with some physical interpretation, but that this symmetry cannot be completely implemented unitarily (resp. anti-unitarily) in the GNS representation of a state $\omega$ because, referring to the condition in (a) of the proved theorem, either (i) or {\em both} (i) and (ii) of (a) do not hold. In the first case the symmetry is broken because the cyclic vector is not invariant under a unitary representation of the symmetry, which however exists in the GNS representation of $\omega$. Obviously, in this case, $\omega$ is not invariant under the algebraic symmetry. This situation naturally arises when one starts from a pure invariant state $\omega_0$ and  the physically relevant state is not $\omega_0$, but another state $\omega\in {\cal D}_{\omega_0}$. The second, much more severe, situation is when there is no unitary map in the GNS representation of $\omega$ which fulfills (i) and  (ii).  In algebraic quantum theories, this second case is often called {\em spontaneous breaking of symmetry}.

\end{remark}
\section{The $*$-algebra of a  quantum field and its quasifree states }
This chapter mostly deals with the case of a real scalar field, we will denote by $\phi$, on a given always oriented and  time oriented,   {\em globally hyperbolic spacetime}   $\Mb =({\cal M}, g, \ogth,\tgth)$ of dimension $n \geq 2$, where $g$ is the metric with signature $(+,-,\dots,-)$, $\ogth$ the {\em orientation} and $\tgth$ the {\em time orientation}. Regarding geometrical notions, we adopt throughout the definitions of \cite{chapt:BD}. Minkowski spacetime will be denoted by $\mathbb M$ and its metric by $\eta$.

The results we discuss can be extended to charged and higher spin fields. As is well known a quantum field is a {\em locally covariant notion}, functorially defined in {\em all} globally hyperbolic  spacetimes simultaneously (see \cite{chapt:CR}). Nevertheless, since this chapter is devoted to discussing algebraic  {\em states} of a QFT in a given manifold we can deal with a fixed spacetime. Moreover we shall not construct the $*$-algebras as  {\em Borchers-Uhlmann-like}  algebras (see \cite{chapt:BD})  nor use  the {\em deformation  approach} (see \cite{chapt:RKK})  to define the algebra structure, in order to simplify the technical structure and focus on the properties of the states.

\subsection{The algebra of observables of a real scalar Klein-Gordon field}\label{sec:kg}  In order to  deal with QFT in curved spacetime, a convenient framework is the algebraic one. This is due to various reasons. Especially because, in the absence of Poicar\'e symmetry, there is no preferred Hilbert space representation of the field operators, but several unitarily inequivalent representations naturally show up.
Furthermore, the standard definition of the field operators based on the decomposition of field solutions in positive and negative frequency part is not allowed here, because there is no preferred notion of (Killing) time.  

In the rest of the chapter $ C_0^\infty(\Mc)$ denotes the real vector space of compactly-supported  and {\em real}-valued smooth  function on the manifold $\Mc$.

 The elementary algebraic object, i.e., a {\bf scalar quantum field} $\phi$ over the  globally hyperbolic spacetime $\Mb$ is captured by a unital $*$-algebra  ${\cal A}(\Mb)$
called the CCR algebra of the quantum field $\phi$.

\begin{definition}[CCR algebra]\label{def:ccr}
The {\bf CCR algebra} of the quantum field $\phi$ over $\Mb$ is the unital $*$-algebra presented by the following generators and relations (cf.~Section~\ref{sec:genrel}). The generators consist ({\bf smeared abstract}) {\bf field operators}, $\phi(f)$, labeled by functions $f \in C_0^\infty(\Mc)$ (the identity $\II$ is of course included in the construction of the corresponding freely generated algebra). These generators satisfy the following relations:\\

{\bf $\mathbb R$-Linearity}: $\phi(af+bg) - a\phi(f) - b\phi(g) = 0$\quad if  $f,g \in C_0^\infty(\Mc)$ and $a,b \in  \mathbb R$.\\

{\bf Hermiticity}: $\phi(f)^* - \phi(f) = 0$\quad  for  $f \in C_0^\infty(\Mc)$.\\

{\bf Klein-Gordon}:  $\phi\left( (\Box_\Mb + m^2 + \xi R) g\right) =0$\:\:   for  $g \in C_0^\infty(\Mc)$.\\

{\bf  Commutation relations}: $[\phi(f), \phi(g)] - iE(f,g) \II = 0$\quad  for $f \in C_0^\infty(\Mc)$.
\end{definition}

\noindent 
Above $E$ denotes the {\em advanced-minus-retarded fundamental solution}, also called the {\em causal propagator}, see \cite{chapt:BD} and (1) in Remark \ref{remark2} below.
The Hermitian elements of ${\cal A}(\Mb)$ are the {\em elementary observables} of the free field theory associated with the Klein-Gordon field $\phi$. The non-Hermitian elements play an auxiliary r\^{o}le.
It should however be evident that ${\cal A}(\Mb)$  is by no means sufficient to faithfully describe physics involved with the quantum field $\phi$. For instance ${\cal A}(\Mb)$ does not include 
any element which can be identified with the {\em stress energy tensor} of $\phi$.  Also the local interactions like $\phi^4$ cannot be described as elements of this algebra either. We shall tackle this problem later. 

According to the discussion in Section~\ref{sec:genrel}, the above abstract definition is sufficient to uniquely define ${\cal A}(\Mb)$ up to isomorphism. An alternative, more concrete and explicit, construction using tensor products of spaces $C_0^\infty(\Mc)$  is presented in \cite{chapt:BD}. That construction yields a concrete representative of the isomorphism class of ${\cal A}(\Mb)$.

\begin{remark}$\label{remark2}\null$

{\bf (1)} Let  $\Sol$ indicate the real vector space of {\em real} smooth solutions $\psi$ with {\em compact Cauchy data} of the KG equation $ (\Box_\Mb + m^2 + \xi R)\psi=0$ where $\Box_\Mb   \stackrel {\mbox{\scriptsize  def}} {=} g^{ab}\nabla_a \nabla_b$. Let us, as usual, use the notation ${\cal D}(\Mc)   \stackrel {\mbox{\scriptsize  def}} {=} C_0^\infty(\Mc) \oplus i C_0^\infty(\Mc)$ for the space of {\em complex} test functions and ${\cal D}'(\Mc)$ is the dual space of distributions.
Interpreting the {\bf advanced-minus-retarded fundamental solution} of the KG operator as a linear map $$E : C_0^\infty(\Mc) \to  \Sol\:,$$ we can naturally extend it by $\mathbb{C}$-linearity to the continuous linear map $$E :  {\cal D}(\Mc) \to  {\cal D}'(\Mc)\:,$$ which defines the bilinear functional
\begin{eqnarray}
E(f_1,f_2)   \stackrel {\mbox{\scriptsize  def}} {=} \int_{\Mc} f_1 (E f_2)\, \dvol_{\Mb} \quad   \mbox{if} \quad  f_1,f_2 \in  C_0^\infty(\Mc) \:,\label{1}
\end{eqnarray}
which is the one appearing in the commutation relations above. Of course, in agreement with the commutation relations,
\begin{eqnarray}
\mathbb R \ni E(f,g) = -E(g,f) \quad \mbox{if $f,g \in C_0^\infty(\Mc)$.}
\end{eqnarray}
As a map  $C_0^\infty(\Mc)\to \Sol$, $E$ satisfies
\begin{equation}  Ker(E) = \{  (\Box_\Mb + m^2 + \xi R)h \:|\:  h \in {\cal D}(\Mc) \}\label{0}\:.\end{equation}
Everything is a consequence of the fact that  $\Mb$ is globally hyperbolic (see \cite{chapt:BD}). 
Since $E(f,h)=0$ if the support of $f$ does not intersect $J_\Mb^+(\mbox{supp} h)\cup J_\Mb^-(\mbox{supp} h)$, we immediately have from the {\em commutation relations} requirement  that the following important fact holds, distinguishing observable fields (Bosons) form unobservable ones (Fermions):
\begin{proposition}[Causality]
Referring to ${\cal A}(\Mb)$,
 $\phi(f)$ and $\phi(h)$ commute if the supports of $f$ and $h$ are causally separated.
\end{proposition}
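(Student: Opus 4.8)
The plan is to reduce the statement to the vanishing of a single scalar and then read it off from the support properties of $E$. Since the commutator $[\phi(f),\phi(h)]$ is an element of ${\cal A}(\Mb)$, and the defining \emph{commutation relations} of Definition~\ref{def:ccr} force $[\phi(f),\phi(h)] = iE(f,h)\II$, the claim that $\phi(f)$ and $\phi(h)$ commute is equivalent to the purely numerical statement $E(f,h)=0$. Thus the entire proof reduces to showing that the bilinear functional $E$, evaluated on a causally separated pair of test functions, vanishes.

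First I would make the hypothesis precise: saying that the supports of $f$ and $h$ are \emph{causally separated} means that no causal curve of $\Mb$ joins a point of $\supp f$ to a point of $\supp h$, i.e.
\[
  \supp f \cap \big( J_\Mb^+(\supp h) \cup J_\Mb^-(\supp h) \big) = \emptyset .
\]
Next I would invoke the support property of the causal propagator recorded just above the statement, namely that $E(f,h)=0$ whenever $\supp f$ fails to meet $J_\Mb^+(\supp h)\cup J_\Mb^-(\supp h)$. Concretely, the advanced-minus-retarded solution $Eh$ is supported inside the causal shadow $J_\Mb^+(\supp h)\cup J_\Mb^-(\supp h)$ of $h$ (a consequence of global hyperbolicity, see \cite{chapt:BD}), so the integral $E(f,h)=\int_{\Mc} f\,(Eh)\,\dvol_\Mb$ appearing in (\ref{1}) is the integral of a product of two functions with disjoint supports and therefore vanishes. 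Combining this with the reduction of the first step yields $[\phi(f),\phi(h)] = i\cdot 0\cdot \II = 0$.

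There is no genuine obstacle here beyond the geometric input on the support of $Eh$, which is external to the algebraic construction and is quoted from \cite{chapt:BD}. The only point deserving a line of care is the apparent asymmetry between $f$ and $h$ in the support condition: the hypothesis is symmetric in $\supp f$ and $\supp h$, while the antisymmetry $E(f,h)=-E(h,f)$ guarantees that vanishing of the shadow of $h$ tested against $f$ is the same condition as vanishing of the shadow of $f$ tested against $h$, so the conclusion is independent of which support one chooses to enclose with $J_\Mb^\pm$.
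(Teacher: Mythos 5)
Your proof is correct and follows essentially the same route as the paper: the commutation relation $[\phi(f),\phi(h)] = iE(f,h)\II$ reduces everything to the vanishing of $E(f,h)$, which follows from the support property of the advanced-minus-retarded fundamental solution, $\supp(Eh) \subset J_\Mb^+(\supp h)\cup J_\Mb^-(\supp h)$. The extra remark on the antisymmetry of $E$ resolving the apparent asymmetry of the support condition is a nice touch but not needed beyond what the paper already states.
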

 From standard properties of $E$ (see \cite{chapt:BD}) one also finds, if $\Sigma \subset \Mc$ is a smooth 
space-like Cauchy surface $f,h \in C_0^\infty(\Mc)$ and $\psi_f   \stackrel {\mbox{\scriptsize  def}} {=} Ef$ and $\psi_h   \stackrel {\mbox{\scriptsize  def}} {=} Eh$ are elements of $\Sol$,
\begin{eqnarray}
E(f,g) = \int_{\Sigma}\left( \psi_f \nabla_{\nb} \psi_h -  \psi_h\nabla_{\nb} \psi_f\right) \: d \Sigma\:,\label{simpsol}
\end{eqnarray}
where $d\Sigma$ is the standard measure induced by the metric $g$ on $\Sigma$ and ${\nb}$ the future directed 
normal unit vector field to $\Sigma$.

{\bf (2)} As  $E :  {\cal D}(\Mc) \to  {\cal D}'(\Mc)$ is continuous, due to {\em Schwartz kernel theorem} \cite{hormander}, it  defines a distribution, 
indicated with the same symbol $E \in  {\cal D}' (\Mc \times \Mc)$, uniquely determined by 
 $$E(f_1,f_2) = E(f_1\otimes f_2)\quad f_1,f_2 \in  {\cal D}(\Mc)\:, $$ and this leads to  an equivalent interpretation of the left-hand side of (\ref{1}), which is actually a bit more useful, because it permits to consider the action of $E$ on 
non-factorized test functions $h \in  {\cal D}'(\Mc \times \Mc)$.

{\bf (3)} The condition indicated as {\em Klein-Gordon} is  the requirement that $\phi$ {\em distributionally} 
satisfies the equation of Klein-Gordon. Obviously $\Box_{\Mb}$ appearing in it coincides with its formal transposed (or adjoint) operator which should appear in the distributional version of KG equation.

{\bf (4)} Everything we will say holds equally for $m^2$ and $\xi$ replaced by corresponding smooth real functions, also in the case where $m^2$ attains negative values.  Also the case $m^2<0$ does not 
produce  technically difficult problems.
\end{remark}

\noindent  Linearity and  Commutation relations  conditions together with (\ref{0}) imply the elementary but important result which proves also the converse implication in the property {\em Klein-Gordon}.

\begin{proposition}\label{iff}
Referring to $\Ac(\Mb)$ the following facts hold.
\begin{eqnarray}
\phi(f) = \phi(g) \quad \mbox{if and only if $f-g \in Ker(E)$,}
\end{eqnarray}
so that, in particular, 
\begin{eqnarray}
\phi(f) = 0 \quad \mbox{if and only if $f= (\Box_\Mb + m^2 + \xi R) g$\:   for \: $g \in C_0^\infty(\Mc)$.}
\end{eqnarray}
\end{proposition}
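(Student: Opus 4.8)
The plan is to reduce both displayed equivalences to the single statement
$$\phi(h)=0 \iff h\in Ker(E) \qquad (h\in C_0^\infty(\Mc)),$$
and then prove its two implications separately. First, by the $\mathbb{R}$-Linearity relation, $\phi(f)=\phi(g)$ is equivalent to $\phi(f-g)=0$, so the first equivalence is the case $h=f-g$ of the boxed one; taking $g=0$ shows the second displayed equivalence is the case $h=f$. It therefore suffices to treat $\phi(h)=0 \iff h\in Ker(E)$. By (\ref{0}) the right-hand side says $h=(\Box_\Mb+m^2+\xi R)k$ for some $k\in{\cal D}(\Mc)$; since $h$ is real and the operator has real coefficients, applying $\Re$ lets us take $k\in C_0^\infty(\Mc)$ real as well, matching the form in the \emph{Klein-Gordon} relation.

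The direction $h\in Ker(E)\Rightarrow\phi(h)=0$ is then immediate: writing $h=(\Box_\Mb+m^2+\xi R)k$ with real $k$, the Klein-Gordon relation gives $\phi(h)=\phi\big((\Box_\Mb+m^2+\xi R)k\big)=0$. This is precisely the converse implication promised before the statement. For the opposite direction, suppose $\phi(h)=0$. Then for every $g\in C_0^\infty(\Mc)$ the Commutation relations yield $0=[\phi(h),\phi(g)]=iE(h,g)\II$. Because $E(h,g)$ is a scalar and $\II\neq 0$ in $\Ac(\Mb)$, this forces $E(h,g)=0$ for all $g$. By the antisymmetry of $E$ this is $\int_\Mc g\,(Eh)\,\dvol_\Mb=0$ for every real test function $g$, and since $Eh\in\Sol$ is smooth, the fundamental lemma of the calculus of variations gives $Eh=0$, i.e. $h\in Ker(E)$.

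The one step that is not pure bookkeeping with the defining relations is the passage from $E(h,g)\II=0$ to $E(h,g)=0$, which uses $\II\neq 0$. This is the main obstacle: it is the assertion that the presentation of $\Ac(\Mb)$ by generators and relations does not collapse, and it is \emph{not} automatic from the abstract presentation (existence of objects defined by a universal property must be checked, as noted in Section~\ref{sec:genrel}). I would import it from any nonzero unital $*$-representation of $\Ac(\Mb)$ — equivalently, from the existence of a state on the globally hyperbolic $\Mb$, under which $\II$ is sent to the identity operator — or, more concretely, from the explicit tensor-product realization of the isomorphism class of $\Ac(\Mb)$ recalled in \cite{chapt:BD}, in which $\II$ is manifestly nonzero. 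Once $\II\neq 0$ is granted, the remainder of the argument is the elementary combination of linearity, the Klein-Gordon relation, and the nondegeneracy of $E$ described above.
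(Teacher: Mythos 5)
Your proof is correct and follows essentially the same route as the paper's: linearity reduces both claims to $\phi(h)=0$, the commutation relations together with the non-degeneracy of the pairing $E(h,g)=\int_{\Mc} h\,(Eg)\,\dvol_{\Mb}$ give the forward implication, and the Klein--Gordon relation combined with (\ref{0}) gives the converse. The two points you flag explicitly --- that $c\,\II=0$ forces $c=0$ only because the presented algebra does not collapse (which one can indeed import from the concrete realization in \cite{chapt:BD} or from the existence of any nonzero representation), and that the complex function $h\in{\cal D}(\Mc)$ in (\ref{0}) may be replaced by its real part --- are left implicit in the paper's proof, so your write-up is, if anything, more careful on exactly those steps.
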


\begin{proof} 
$\phi(f) = \phi(g)$ is equivalent to  $\phi(f-g)=0$ and thus $iE(h,(f-g)) = [\phi(h),\phi(f-g)]=0$
for all $h \in {\cal D}(\Mc)$. From (\ref{1}) one  has, in turn,  that $E(f-g)=0$ that is $f-g \in Ker(E)$. Finally (\ref{0}) implies the last statement. $\Box$
\end{proof}

\noindent The smeared field $\phi(f)$ can be thought of as localized within the support of its argument $f$. However, $\phi(f)$ really depends on $f$ only up to addition of terms from $Ker(E)$. We can use this freedom to move and shrink the support of $f$ to be arbitrarily close to any Cauchy surface, which is a technically useful possibility.
\begin{lemma}
 Let $\psi \in \Sol$ and  let $\Sigma$ be  a smooth space-like Cauchy surface of the globally hyperbolic spacetime $\Mb$. For every  open neighborhood $O$ of $\Sigma$, it is  possible to pick out  a function $f_\psi \in C_0^\infty(\Mc)$ {\em whose support is contained in} $O$, such that $\psi= Ef_\psi$.
\end{lemma}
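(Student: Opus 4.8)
The plan is to write $f_\psi$ explicitly as $f_\psi := P(\chi\psi)$, where $P := \Box_\Mb + m^2 + \xi R$ is the Klein--Gordon operator and $\chi$ is a smooth ``time cutoff'' that equals $1$ in the far past of $\Sigma$ and $0$ in the far future, with its transition zone squeezed into $O$. Since $P$, $\chi$ and $\psi$ are real and $P\psi = 0$, one has $f_\psi = P(\chi\psi) = [P,\chi]\psi$, a first-order expression whose coefficients vanish wherever $\chi$ is locally constant; hence $f_\psi$ is a real smooth function with $\supp f_\psi \subset \supp(d\chi)\cap\supp\psi$. The heuristic behind the construction is that, writing $E = E^{\mathrm{adv}}-E^{\mathrm{ret}}$, the advanced and retarded fundamental solutions applied to $f_\psi$ recover $\chi\psi$ and $(\chi-1)\psi$ respectively, whose difference telescopes to $\psi$. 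I would first record that, because $\psi\in\Sol$ has compact Cauchy data on $\Sigma$, there is a compact $K\subset\Sigma$ with $\supp\psi \subset J_\Mb^+(K)\cup J_\Mb^-(K)$, and that this causal shadow meets $\Sigma$ exactly in $K$, since a Cauchy surface is acausal.

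\textbf{Controlling the support (the main obstacle).} Getting $Ef_\psi=\psi$ for \emph{some} compactly supported $f_\psi$ (supported in a time slab) is the routine part; the content of the lemma is forcing $\supp f_\psi \subset O$ for an arbitrary neighborhood $O$ of $\Sigma$, and this is where I expect the real work. Using global hyperbolicity I would fix a Cauchy temporal function $t$ with $\Sigma = \{t=0\}$ and the level sets $\Sigma_s=\{t=s\}$ all Cauchy surfaces, and for $\epsilon>0$ consider the sets $C_\epsilon := \big(J_\Mb^+(K)\cup J_\Mb^-(K)\big)\cap\{|t|\le\epsilon\}$. Each $C_\epsilon$ is compact, being the part of the causal shadow of the compact set $K$ trapped between the two Cauchy surfaces $\Sigma_{-\epsilon}$ and $\Sigma_{\epsilon}$. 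These compacta are nested and satisfy $\bigcap_{\epsilon>0}C_\epsilon = \big(J_\Mb^+(K)\cup J_\Mb^-(K)\big)\cap\Sigma = K \subset O$. Therefore the compact sets $C_\epsilon\setminus O$ are nested with empty intersection, so $C_{\epsilon_0}\subset O$ for some $\epsilon_0>0$. I then take $\chi=\chi(t)$ smooth with $\chi\equiv 1$ for $t\le-\epsilon_0$ and $\chi\equiv 0$ for $t\ge\epsilon_0$, so that $\supp(d\chi)\subset\{|t|\le\epsilon_0\}$ and hence $\supp f_\psi \subset \supp(d\chi)\cap\supp\psi \subset C_{\epsilon_0}\subset O$, giving $f_\psi\in C_0^\infty(\Mc)$ with the required localization.

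\textbf{The identity $Ef_\psi=\psi$.} Finally I would verify $Ef_\psi=\psi$ by uniqueness of one-sidedly supported solutions. The function $\chi\psi$ is smooth, is supported in the causal past of the Cauchy surface $\Sigma_{\epsilon_0}$ (so it is future-compact), and solves $P(\chi\psi)=f_\psi$; by uniqueness of the future-compact solution of $Pu=f_\psi$ it coincides with $E^{\mathrm{adv}}f_\psi$. Symmetrically, $(\chi-1)\psi$ is supported in the causal future of $\Sigma_{-\epsilon_0}$ (past-compact) and satisfies $P\big((\chi-1)\psi\big)=P(\chi\psi)-P\psi=f_\psi$, so it coincides with $E^{\mathrm{ret}}f_\psi$. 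Subtracting, $Ef_\psi = E^{\mathrm{adv}}f_\psi - E^{\mathrm{ret}}f_\psi = \chi\psi-(\chi-1)\psi = \psi$, which is the assertion. The only inputs beyond the excerpt are the existence of $E^{\mathrm{ret}},E^{\mathrm{adv}}$ with their causal support properties and the uniqueness of past-/future-compact solutions, all standard consequences of global hyperbolicity; I would cite \cite{chapt:BD} for these.
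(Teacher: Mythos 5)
Your proof is correct and follows essentially the route the paper itself takes: the paper does not spell the argument out but defers to \cite{chapt:BD}, \cite{wald94} and to the constructions in its own Propositions~\ref{prp:symp-pois} and~\ref{prp:hadform}, where precisely this cutoff identity $\psi = E\left[(\Box_\Mb + m^2 + \xi R)(\chi\psi)\right]$, with $\chi$ interpolating between its two constant values across a slab around $\Sigma$, is used. The only differences are cosmetic: your labelling of advanced versus retarded Green operators is opposite to the paper's (its $\chi$ equals $1$ to the \emph{future} of $\Sigma$), which at most flips the sign of $f_\psi$ and is irrelevant for the statement, while your nested-compact-sets argument squeezing $\supp(d\chi)\cap\supp\psi$ into $O$ makes explicit the localization step that the cited sources leave largely implicit.
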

 The proof of this elementary, but important, fact can be found in \cite{chapt:BD} and in \cite{wald94} (see also the proof of our Proposition~\ref{prp:hadform}). This  result  immediately implies the validity of the so called  {\em Time-slice axiom} for the CCR algebra (see \cite{chapt:BD}).

\begin{proposition}[``Time-slice axiom'']\label{tsa} Referring to the globally hyperbolic spacetime $\Mb$ and the algebra $\Ac(\Mb)$,  let $O$ be any fixed neighborhood of a Cauchy surface $\Sigma$. Then $\Ac(\Mb)$ is generated by $\II$ and the elements $\phi(f)$ with $f \in C_0^\infty(\Mc)$ and $\mathrm{supp} f \subset O$.
\end{proposition}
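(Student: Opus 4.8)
The plan is to reduce the statement to the preceding Lemma together with Proposition~\ref{iff}. By Definition~\ref{def:ccr}, the algebra $\Ac(\Mb)$ is generated by $\II$ and all smeared fields $\phi(f)$ with $f \in C_0^\infty(\Mc)$; that is, every element of $\Ac(\Mb)$ is a finite linear combination of products of these. Consequently, the subalgebra generated by $\II$ and the \emph{localized} generators $\{\phi(f) : \supp f \subset O\}$ is closed under sums and products, so to show it exhausts $\Ac(\Mb)$ it suffices to verify that an \emph{arbitrary} generator $\phi(f)$, with $f \in C_0^\infty(\Mc)$ of unrestricted support, already lies in it.

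First I would push the support of $f$ into $O$. Setting $\psi = Ef$, which lies in $\Sol$ since $E$ maps $C_0^\infty(\Mc)$ into $\Sol$, I apply the Lemma stated just above to this $\psi$ and the given neighborhood $O$ of the Cauchy surface $\Sigma$. This produces a function $f_\psi \in C_0^\infty(\Mc)$ with $\supp f_\psi \subset O$ and $E f_\psi = \psi = Ef$. Then I would invoke Proposition~\ref{iff}: from $E(f - f_\psi) = Ef - Ef_\psi = 0$ we get $f - f_\psi \in Ker(E)$, whence $\phi(f) = \phi(f_\psi)$. Since $\supp f_\psi \subset O$, the element $\phi(f_\psi)$ is a localized generator, so $\phi(f)$ belongs to the subalgebra generated by $\II$ and the localized generators. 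This completes the reduction and hence the proof.

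As for the main obstacle: there is essentially none at the level of this proposition, because the entire nontrivial content is packaged into the preceding Lemma — the fact that any solution with compact Cauchy data can be written as $E f_\psi$ for a test function $f_\psi$ supported in an arbitrarily thin neighborhood of $\Sigma$ (its proof, via cutting off a solution near $\Sigma$ and applying $\Box_\Mb + m^2 + \xi R$, is referenced to \cite{chapt:BD} and \cite{wald94}). Granting that Lemma, the deduction is immediate: Proposition~\ref{iff} encodes precisely that $\phi(f)$ depends on $f$ only through $Ef \in \Sol$, so relocating the support of $f$ within its $Ker(E)$-equivalence class leaves the corresponding field operator unchanged.
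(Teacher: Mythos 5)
Your proof is correct and follows exactly the route the paper intends: the paper does not spell out the argument but states that the support-shrinking Lemma ``immediately implies'' the time-slice property, and your reduction via $\psi = Ef$, the Lemma, and Proposition~\ref{iff} (so that $f - f_\psi \in Ker(E)$ gives $\phi(f) = \phi(f_\psi)$) is precisely that implication made explicit. Nothing is missing.
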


\subsection{States and $n$-point functions}
 Let us focus on states. We start form the observation that  the generic element of  ${\cal A}(\Mb)$ is always   of the form
\begin{multline}\label{a}
	a = c_{(0)}\II + \sum_{i_1}c_{(1)}^{i_1}\phi(f^{(1)}_{i_1}) + \sum_{i_1,i_2}c_{(2)}^{i_1i_2}\phi(f^{(2)}_{i_1})\phi(f^{(2)}_{i_2}) \\ {}
	+ \cdots + \sum_{i_1, \ldots,  i_n}  c_{(n)}^{i_1 \cdots i_n} \phi(f^{(n)}_{i_1})\cdots \phi(f^{(n)}_{i_n})\:,
\end{multline}
where $n$ is arbitrarily large but finite, $c_{(k)}^{i_1 \cdots i_k} \in \mathbb C$ and $f_k^{(j)} \in C_0^\infty(\Mc)$, with all sums arbitrary but  finite.  
Due to~\eqref{a}, if $\omega\colon {\cal A}(\Mb) \to \mathbb C$ is a state, its action on a generic element of ${\cal A}(\Mb)$
is known as soon as the full class of the so-called {\bf $n$-point functions} of $\omega$ are known. We mean the maps:
$$C_0^\infty(\Mc)\times \cdots \times C_0^\infty(\Mc) \ni  (f_1,\ldots, f_n)\:\: \mapsto  \:\: \omega(\phi(f_1)\cdots \phi(f_n))  \stackrel {\mbox{\scriptsize  def}}{=} \omega_n(f_1, \ldots, f_n)$$
At this point, the multilinear functionals $\omega_n(f_1,\ldots, f_n)$ are not yet forced to satisfy any continuity properties (in fact we have not even discussed any topologies on $\Ac(\Mb)$ and how the states should respect it). However, in the sequel we will only be dealing with the cases where $\omega_n$ is continuous in the usual test function topology on $C^\infty_0(\Mc)$. Then, by the Schwartz kernel theorem~\cite{hormander}, we can write, as it is anyway customary, the $n$-point function in terms of its distributional kernel: $$\omega_n(f_1, \ldots, f_n)  = \int_{\Mb^n} \omega_n(x_1,\ldots, x_n) f_1(x_1) \cdots f_n(x_n) \:\dvol_{\Mb^n} \: .$$
It is worth stressing that a choice of a family of integral kernels $\omega_n$, $n=1,2,\ldots$, extends by linearity and the rule $\omega(\II)   \stackrel {\mbox{\scriptsize  def}} {=} 1$ to a normalized linear functional on all of $\Ac(\Mb)$. However, this functional generally does {\em not} determine a state, because the positivity requirement $\omega(a^*a)\geq 0$ may not be valid. However if two states have the same set of $n$-point functions they necessarily coincide in view of~\eqref{a}.
\begin{remark}
As defined above, the $n$-point functions $\omega_n(f_1,\ldots,f_n)$ need not be symmetric in their arguments. However, they do satisfy some relations upon permutation of the arguments. The reason is that the products $\phi(f_1)\cdots \phi(f_n)$ and $\phi(f_{\sigma(1)})\cdots \phi(f_{\sigma(n)})$, for any permutation $\sigma$, are not completely independent in $\Ac(\Mb)$. It is easy to see that the CCR $*$-algebra is {\bf filtered}, namely that $\Ac(\Mb) = \bigcup_{n=0}^{\infty} \Ac_n(\Mb)$, where each linear subspace $\Ac_n(\Mb)$ consists of linear combinations of $\II$ and products of no more than $n$ generators $\phi(f)$, $f\in C^\infty_0(\Mc)$. The product $\phi(f_1)\cdots \phi(f_n)$ belongs to $\Ac_n(\Mb)$, as does $\phi(f_{\sigma(1)})\cdots \phi(f_{\sigma(n)})$. The commutation relation $[\phi(f),\phi(g)] = iE(f,g)\II$ then implies that the product $\phi(f_1)\cdots \phi(f_n)$ and the same product with any two $f_i$'s swapped, hence also $\phi(f_{\sigma(1)})\cdots \phi(f_{\sigma(n)})$ for any permutation $\sigma$, coincide ``up to lower order terms,'' or more precisely coincide in the quotient $\Ac_n(\Mb) / \Ac_{n-1}(\Mb)$. Thus, without loss of generality, the coefficients $c_{(n)}^{i_1\cdots i_n}$ in~\eqref{a} can be taken to be, for instance, fully symmetric in their indices. So, in order to fully specify a state, it would be sufficient to specify only the fully symmetric part of each $n$-point function $\omega_n(f_1,\ldots,f_n)$.
\end{remark}
Once a state $\omega$ is given, we can implement the GNS machinery obtaining a $*$-representation $\pi_\omega : {\cal A}(\Mb) \to \Lf ({\cal D}_\omega)$ over the Hilbert space ${\cal H}_\omega$
including the dense invariant  linear subspace ${\cal D}_{\omega}$.  The {\bf smeared field operators} appear  here as the densely defined  symmetric operators:
$$\hat{\phi}_\omega (f)   \stackrel {\mbox{\scriptsize  def}} {=} \pi_\omega(\phi(f)): {\cal D}_\omega \to {\cal H}_\omega\:,\quad f\in C_0^\infty(\Mc)\:.$$
We stress that in general  $\hat{\phi}_\omega (f) $ is not self-adjoint nor essentially self-adjoint  on ${\cal D}_\omega$ (even if we are considering real smearing functions). That is why we introduce the following definition:

\begin{definition}[Regular states]\label{defreg}
A state  $\omega$ on $\Ac(\Mb)$ and its GNS representation are said to be {\bf regular} if $\hat{\phi}_\omega (f)$ is essentially self-adjoint on ${\cal D}_\omega$ for every $f\in C_0^\infty(\Mc)$. 
\end{definition}

\noindent There are some further elementary technical  properties of $\omega_2$ and $E$ that we list below. 
\begin{proposition}\label{prp:2pt}
Consider a state $\omega : {\cal A}(\Mb)\to \mathbb C$ and define  $P  \stackrel {\mbox{\scriptsize  def}} {=} \Box_\Mb + m^2 + \xi R$. The two-point function, $\omega_2$, satisfies the following facts for  $f,g \in C_0^\infty(\Mc)$:
\begin{align}
&\omega_2(Pf,g)= \omega_2(f,Pg)=0\label{KGomega}\:,\\
&\omega_2(f,g)- \omega_2(g,f) =  iE(f,g) \label{ASE}\:,\\
&Im (\omega_2(f,g)) = \frac{1}{2} E(f,g)\label{2}\:,\\
&\frac{1}{4}| E(f,g)|^2  \leq  \omega_2(f,f) \omega_2(g,g)\:.\label{3} 
\end{align}
\end{proposition}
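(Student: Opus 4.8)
The plan is to derive all four identities directly from the defining relations of $\Ac(\Mb)$ together with two elementary facts about any state $\omega$ on a unital $*$-algebra, the same facts already exploited in the proof of Theorem~\ref{GNS}. Since $(a,b)\mapsto \omega(a^*b)$ is a positive sesquilinear form, it is automatically Hermitian (by polarization, exactly as for positive forms), so that $\omega(a^*)=\overline{\omega(a)}$ for all $a$ (take one argument equal to $\II$), and it obeys the Cauchy--Schwarz inequality $|\omega(a^*b)|^2\le \omega(a^*a)\,\omega(b^*b)$. I would also record at the outset that, by the \emph{Hermiticity} relation, $\phi(f)^*=\phi(f)$ for real $f$, so each $\phi(f)$ is a Hermitian generator.

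The identity \eqref{KGomega} is immediate from the \emph{Klein--Gordon} relation: since $\phi(Pg)=0$ (equivalently $Pg\in Ker(E)$ by Proposition~\ref{iff}), I get $\omega_2(f,Pg)=\omega(\phi(f)\phi(Pg))=0$, and symmetrically $\omega_2(Pf,g)=0$. For \eqref{ASE} I would use the \emph{commutation relations} and normalization directly: $\omega_2(f,g)-\omega_2(g,f)=\omega([\phi(f),\phi(g)])=\omega(iE(f,g)\II)=iE(f,g)$, where the last equality uses $\omega(\II)=1$ and $E(f,g)\in\RR$.

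To obtain \eqref{2} I would combine Hermiticity with \eqref{ASE}. Because $\phi(f)$ and $\phi(g)$ are Hermitian, $(\phi(f)\phi(g))^*=\phi(g)\phi(f)$, hence the conjugation property gives $\overline{\omega_2(f,g)}=\omega((\phi(f)\phi(g))^*)=\omega_2(g,f)$. Therefore $Im\,\omega_2(f,g)=\tfrac{1}{2i}\bigl(\omega_2(f,g)-\omega_2(g,f)\bigr)=\tfrac{1}{2i}\,iE(f,g)=\tfrac12 E(f,g)$, using \eqref{ASE} and the reality of $E(f,g)$.

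Finally, \eqref{3} follows by feeding the previous step into Cauchy--Schwarz. Taking $a=\phi(f)$ and $b=\phi(g)$ (both Hermitian) yields $|\omega_2(f,g)|^2=|\omega(\phi(f)^*\phi(g))|^2\le \omega(\phi(f)^*\phi(f))\,\omega(\phi(g)^*\phi(g))=\omega_2(f,f)\,\omega_2(g,g)$. Since $E(f,g)$ is real, \eqref{2} gives $\tfrac14|E(f,g)|^2=(Im\,\omega_2(f,g))^2\le |\omega_2(f,g)|^2$, and chaining the two inequalities delivers the claim. I expect no genuine obstacle here: every step is a one-line consequence of the algebraic relations. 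The only point deserving a little care is the justification, from positivity alone, of the Hermiticity of the form (hence $\omega(a^*)=\overline{\omega(a)}$) and of the Cauchy--Schwarz inequality; but this is precisely the positivity/polarization argument already invoked for $N$ and the scalar product in the proof of Theorem~\ref{GNS}, so it may simply be cited.
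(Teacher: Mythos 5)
Your proof is correct and follows essentially the same route as the paper: \eqref{KGomega} and \eqref{ASE} are obtained exactly as there, \eqref{2} from the Hermiticity of $\omega_2$ combined with \eqref{ASE}, and \eqref{3} by chaining \eqref{2} with the Cauchy--Schwarz inequality. The only minor difference is that the paper applies Cauchy--Schwarz to the GNS vectors $\hat{\phi}_\omega(f)\Psi_\omega$, whereas you invoke the abstract Cauchy--Schwarz inequality for the positive sesquilinear form $(a,b)\mapsto\omega(a^*b)$ directly on the algebra --- the same inequality underlying the GNS construction --- and you make explicit the Hermiticity step $\overline{\omega_2(f,g)}=\omega_2(g,f)$ that the paper leaves implicit in its remark that \eqref{2} ``follows immediately since $E(f,g)$ is real.''
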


\begin{proof}
The first identity trivially arises from $\omega_2(Pf,g)= \omega(\phi(Pf)\phi(g))=0$ and
$\omega_2(f,Pg)= \omega(\phi(f)\phi(Pg))=0$ in view of  the definition of $\phi(h)$. Next,
$$
\omega_2(f,g)- \omega_2(g,f) = \omega([\phi(f),\phi(g)]) = \omega(i E(f,g)\II)= iE(f,g)\omega(\II) = iE(f,g)\:.
$$
The third identity then  follows immediately since $E(f,g)$ is real.
Using its GNS representation and the Cauchy-Schwartz inequality we find that
$$|\omega_2(f,g)| \leq |\langle \hat{\phi}_\omega(f) \Psi_\omega | \hat{\phi}_\omega(f) \Psi_\omega \rangle|^{1/2} \: |\langle \hat{\phi}_\omega(g) \Psi_\omega | \hat{\phi}_\omega(g) \Psi_\omega \rangle|^{1/2}$$
namely
$$|\omega_2(f,g)|^2 \leq \omega_2(f,f) \omega_2(g,g)\:.$$
So that, in particular 
$$|Im (\omega_2(f,g))|^2 \leq \omega_2(f,f) \omega_2(g,g)$$
and thus, due to (\ref{2}), we end up with (\ref{3}). $\Box$
\end{proof}

\subsection{Symplectic and Poisson reformulation,  faithful representations, induced isomorphisms}
We recall for the reader the following elementary definitions.
\begin{definition}[Symplectic vector space]
A ({\bf real}) {\bf  symplectic form} over the real vector space $V$ is a {\em bilinear}, {\em  antisymmetric} map  $\tau : V \times V \to \mathbb R$. $\tau$ is said to be {\bf weakly non-degenerate} 
if $\tau(x,y)=0$ for all $x\in V$ implies $y=0$. In this case $(V,\tau)$ is said to be a ({\bf real}) {\bf symplectic vector space}.
\end{definition}

Next, we would like to define a Poisson vector space. In the finite dimensional case, it is simply a pair $(V,\Pi)$, where $V$ is a real vector space and $\Pi \in \Lambda^2 V$, which is the same as being a bilinear, antisymmetric form on the (algebraic) linear dual $V^*$. However, in our cases of interest, $V$ is infinite dimensional and $\Pi$ belongs to a larger space than $\Lambda^2 V$, that could be defined using linear duality. Constructions involving linear duality necessarily bring into play the topological structure on $V$ (or lack thereof). We will not enter topological questions in detail, so we content ourselves with a formal notion of duality, which will be sufficient for our purposes.

\begin{definition}[Poisson vector space]
Two real vector spaces $V$ and $W$, together with a bilinear pairing $\langle\cdot,\cdot\rangle\colon W\times V \to \mathbb{R}$, are in {\bf formal duality} when the bilinear pairing is non-degenerate in either argument ($\langle x, y \rangle = 0$ implies $x=0$ if it holds for all $y\in V$, and it implies $y=0$ if it holds for all $x\in W$). Given such $V$ and $W$ in formal duality, we call $(V,\Pi,W,\langle\cdot,\cdot\rangle)$ a ({\bf real}) {\bf Poisson vector space} if $\Pi\colon W\times W \to \mathbb{R}$ is a bilinear, antisymmetric map, called the {\bf Poisson bivector}. $\Pi$ is said to be {\bf weakly non-degenerate} if $\Pi(x,y) = 0$ for all $x\in W$ implies $y = 0$.
\end{definition}

At this level, there are only subtle differences between symplectic and Poisson vector spaces. In fact, the two structures have often been confounded in the literature on QFT on curved spacetime~\cite{BSZ,FP,FV,DappiaggiSiemens,hack-schenkel,GW1}. The differences become more pronounced when we consider symplectic differential forms and Poisson bivector fields on manifolds locally modeled on the vector space $V$. A form is a section of an antisymmetric power of the cotangent bundle, while a bivector field is a section of an antisymmetric power of the tangent bundle. In infinite dimensional settings, one has to choose a precise notion of tangent and cotangent bundle, among several inequivalent possibilities. This ambiguity is reflected in our need to introduce formal duality for the definition of a Poisson vector space.

The above abstract definitions are concretely realized in the Proposition that we present below. Let us use the formula on the right-hand side of~\eqref{simpsol} to define a bilinear, antisymmetric map $\tau\colon \Sol \times \Sol \to \mathbb{R}$ by
\begin{equation}\label{eq:taudef}
	\tau(\psi,\xi) = \int_{\Sigma} (\psi \nabla_{\nb} \xi - \xi \nabla_{\nb} \psi) \: d\Sigma .
\end{equation}
Defining the space of equivalence classes ${\cal E} = C^\infty_0(\Mc) / (\Box_\Mb + m^2 + \xi R)C^\infty_0(\Mc)$ and recalling Equations~\eqref{0} and~\eqref{1}, the advanced-minus-retarded fundamental solution defines a bilinear, antisymmetric map $E\colon {\cal E} \times {\cal E} \to \mathbb{R}$ by
\begin{equation}
	E([f],[g]) = E(f,g).
\end{equation}
Furthermore, there is a well-defined bilinear pairing $\langle\cdot, \cdot\rangle \colon {\cal E} \times \Sol \to \mathbb{R}$ given by
\begin{equation}
	\langle [f], \psi \rangle = \int_{\Mc} f \psi \: \dvol_{\Mb} .
\end{equation}

Given the above definitions for the Klein-Gordon real scalar field, we have the following.
\begin{proposition}\label{prp:symp-pois}
The spaces $\Sol$ and ${\cal E}$ are in formal duality, with respect to the pairing $\langle\cdot, \cdot\rangle$. The pair $(\Sol,\tau)$ is a symplectic vector space, while $(\Sol,E,{\cal E},\langle\cdot,\cdot\rangle)$ is a Poisson vector space. Moreover, the bilinear forms $\tau$ and $E$ respectively induce linear maps
\begin{equation}
	\tau\colon \Sol \to {\cal E}
		\quad \text{and} \quad
	E \colon {\cal E} \to \Sol
\end{equation}
that are bijective, mutually inverse and such that $\tau(\psi,\xi) = \langle \tau \psi, \xi \rangle$ and $E([f],[g]) = \langle [f], E[g] \rangle$.
\end{proposition}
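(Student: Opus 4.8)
The plan is to reduce the entire statement to the study of the single linear map $E\colon \mathcal{E}\to\Sol$, $[f]\mapsto Ef$, and to show that it is a bijection whose inverse realizes $\tau$. First I would check that the three bilinear objects descend to the stated quotients. The map $E\colon C_0^\infty(\Mc)\to\Sol$ descends to $\mathcal{E}$ because, by~\eqref{0}, its kernel is exactly $P\,C_0^\infty(\Mc)$ with $P=\Box_\Mb+m^2+\xi R$; the form $E([f],[g])=E(f,g)$ is well defined because, using~\eqref{1}, $P$ being its own formal transpose (see (3) in Remark~\ref{remark2}), and $P(Eg)=0$ for $g\in C_0^\infty(\Mc)$, one has $E(Ph,g)=\int_{\Mc}(Ph)(Eg)\,\dvol_{\Mb}=\int_{\Mc}h\,P(Eg)\,\dvol_{\Mb}=0$, and $E(f,Pk)=\int_{\Mc}f\,E(Pk)\,\dvol_{\Mb}=0$ since $Pk\in Ker(E)$; the pairing $\langle[f],\psi\rangle=\int_{\Mc}f\psi\,\dvol_{\Mb}$ is well defined since $\int_{\Mc}(Ph)\psi\,\dvol_{\Mb}=\int_{\Mc}h\,(P\psi)\,\dvol_{\Mb}=0$ for $\psi\in\Sol$. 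Bilinearity and the antisymmetry of $E$ and $\tau$ are immediate from the definitions.

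Next I would establish the bijectivity of $E\colon\mathcal{E}\to\Sol$. Injectivity is precisely~\eqref{0}: $E[f]=0$ means $f\in Ker(E)=P\,C_0^\infty(\Mc)$, i.e. $[f]=0$. Surjectivity is exactly the content of the Lemma preceding Proposition~\ref{tsa}, which produces for each $\psi\in\Sol$ a test function $f_\psi$ with $\psi=Ef_\psi$. The two structural identities then follow by direct computation. From~\eqref{1},
$$\langle[f], E[g]\rangle = \int_{\Mc} f\,(Eg)\,\dvol_{\Mb} = E(f,g) = E([f],[g]),$$
which is both the second displayed identity and, via bijectivity, the non-degeneracy of the pairing in the $\mathcal{E}$ slot; while the surface formula~\eqref{simpsol} gives the linchpin identity $\tau(Ef,Eg)=E(f,g)$, showing that $\tau$ on $\Sol$ is the transport of $E$ on $\mathcal{E}$ under the bijection $E\colon\mathcal{E}\to\Sol$.

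Finally I would read off the remaining claims. For formal duality, non-degeneracy of $\langle\cdot,\cdot\rangle$ in the $\Sol$ slot is the fundamental lemma of the calculus of variations ($\int_{\Mc}f\psi\,\dvol_{\Mb}=0$ for all $f$ forces the smooth $\psi$ to vanish), and non-degeneracy in the $\mathcal{E}$ slot was obtained above. Weak non-degeneracy of $\tau$ and of $E$ both follow from the bijection together with Proposition~\ref{iff}: if $\tau(\psi,\cdot)\equiv 0$ then, writing $\psi=Ef_\psi$, we get $E(f_\psi,g)=0$ for all $g$, whence $f_\psi\in Ker(E)$ and $\psi=0$; the argument for $E$ on $\mathcal{E}$ is identical. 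Defining $\tau\colon\Sol\to\mathcal{E}$ as $E^{-1}$, the identity $\tau(\psi,\xi)=\langle\tau\psi,\xi\rangle$ holds because writing $\psi=Ef$, $\xi=Eg$ turns the right side into $\langle[f],Eg\rangle=E(f,g)$ and the left side into $\tau(Ef,Eg)=E(f,g)$ by~\eqref{simpsol}; the companion identity $E([f],[g])=\langle[f],E[g]\rangle$ was already verified. Thus $\tau$ and $E$ are mutually inverse bijections, as claimed. I do not expect a serious obstacle: everything rests on the bijectivity of $E\colon\mathcal{E}\to\Sol$ (injectivity from~\eqref{0}, surjectivity from the Lemma) and on the compatibility identity $\tau(Ef,Eg)=E(f,g)$ of~\eqref{simpsol}; the only point demanding care is the consistent bookkeeping of the quotient by $P\,C_0^\infty(\Mc)$ across the test-function and the solution descriptions.
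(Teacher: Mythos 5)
Your proof is correct, but it is organized quite differently from the paper's. The paper's own proof is deliberately a sketch: it observes that non-degeneracy of the pairing makes the induced operators unique if they exist, takes the operator $E\colon {\cal E}\to\Sol$ from Remark~\ref{remark2}(1) and Equation~\eqref{0}, and then \emph{exhibits} the operator $\tau$ concretely as $\tau\psi = [(\Box_\Mb + m^2 + \xi R)(\chi\psi)]$, where $\chi$ is a smooth function equal to $0$ to the past of one Cauchy surface and $1$ to the future of another; the actual bijectivity and non-degeneracy assertions are then deferred to the cited references, being attributed to the well-posedness of the Klein--Gordon Cauchy problem. You instead make everything hinge on a single statement — that $E\colon{\cal E}\to\Sol$ is a bijection, with injectivity from~\eqref{0} and surjectivity from the Lemma preceding Proposition~\ref{tsa} — and you define the operator $\tau$ abstractly as $E^{-1}$, verifying the characterizing identity $\tau(\psi,\xi)=\langle\tau\psi,\xi\rangle$ through the surface formula~\eqref{simpsol}, which encodes $\tau(Ef,Eg)=E(f,g)$. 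The two routes rest on the same underlying analytic input (indeed, the paper's explicit formula $[P(\chi\psi)]$ is essentially the construction by which the Lemma you invoke is proved), but yours is more self-contained relative to the paper's own stated results: you also supply the non-degeneracy arguments (fundamental lemma of the calculus of variations for the $\Sol$ slot, antisymmetry of $E$ plus~\eqref{0} for the ${\cal E}$ slot and for weak non-degeneracy of $\tau$ and $E$) that the paper leaves to the literature. What the paper's concrete formula buys, on the other hand, is an explicit inverse whose independence of the choice of $\chi$ and $\Sigma$ can be checked by direct computation, and a construction of $\tau$ that still makes sense in the degenerate situations (gauge theories, cf.\ Remark~\ref{rem:gauge}) where the bijectivity your argument relies on can fail.
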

\begin{proof}
The content of this proposition is discussed in detail in~\cite[Sec.5]{kh-conal} or~\cite[Sec.3]{kh-covar}, though a basic version can be found already in~\cite[Sec.3.2]{wald94}. We only indicate a few salient points. The non-degeneracy of the pairing $\langle\cdot, \cdot\rangle$ implies, provided there exist linear operators $\tau$ and $E$ such that $E([f],[g]) = \langle [f], E[g] \rangle$ and $\tau(\psi,\xi) = \langle \tau\psi, \xi \rangle$, that they are unique. These operators can be exhibited rather concretely. The linear map $E\colon {\cal E}\to \Sol$ is already defined by Remark~\ref{remark2}{\bf (1)}, in view of Equation~\eqref{0}. The definition of $\tau$ in~\eqref{eq:taudef} is independent of the choice of Cauchy surface $\Sigma \subset \Mc$. Let $\Sigma^+,\Sigma^- \subset \Mc$ be Cauchy surfaces, respectively to the future and to the past of the Cauchy surface $\Sigma$, and let $\chi\in C^\infty(\Mc)$ be such that $\chi \equiv 0$ to the past of $\Sigma^-$ and $\chi
  \equiv 1$ to the future of $\Sigma^+$. Then, we have the identity $\tau\psi = [(\Box_\Mb + m^2 + \xi R)(\chi\psi)]$. Finally, the non-degeneracy or bijectivity of $\langle\cdot, \cdot\rangle$, $\tau$ and $E$, considered either as bilinear forms or linear operators, strongly rely on the hyperbolic character and well-posedness of the Klein-Gordon equation. $\Box$
\end{proof}

Given the isomorphism between ${\cal E}$ and $\Sol$ and the close relationship between $E$ and $\tau$, it is not surprising these two spaces and bilinear forms have often been used interchangeably in the context of the QFT of the Klein-Gordon real scalar field. However, this interchangeability may fail for more complicated field theories, as we remark next. This is another reason why it is important to keep track of the difference between the respective symplectic and Poisson vector spaces, $(\Sol,\tau)$ and $(\Sol,E)$!
\begin{remark}\label{rem:gauge}
References~\cite[Sec.5]{kh-conal} and~\cite[Sec.3]{kh-covar} also address in detail the question of whether similar statements hold for gauge theories (electrodynamics, linearized gravity, etc.)\ or for theories with constraints (massive vector field, etc.). Related questions were also studied in~\cite{hack-schenkel}. The answer turns out to be rather subtle. The bilinear forms $\tau$ and $E$ can essentially always be defined. A reasonable choice of the spaces ${\cal E}$ and $\Sol$ also make sure that the linear maps $\tau\colon \Sol\to {\cal E}$ and $E\colon {\cal E} \to \Sol$ are also well-defined and are mutually inverse. However, the pairing $\langle\cdot, \cdot\rangle$ appearing in the formulas $\tau(\psi,\xi) = \langle \tau\psi,\xi \rangle$ and $E([f],[g]) = \langle [f], E[g] \rangle$, need no longer be non-degenerate. Hence, the bilinear forms $\tau$ and $E$ may be degenerate themselves. The conditions under which these degeneracies do or do not occur subtly depend on 
 the geometry of the gauge transformations and the constraints of the theory.
\end{remark}

We now turn to applying the above symplectic and Poisson structures to the study of the properties of the CCR algebra $\Ac(\Mb)$ of a Klein-Gordon field.

\begin{definition}[CCR algebra of a Poisson vector space]\label{def:ccr-pois}
Let $(V,\Pi,W,\langle\cdot,\cdot\rangle)$ be a Poisson vector space, defined with respect to a formal duality between $V$ and another space $W$. The corresponding {\bf CCR algebra} $\Ac(V,\Pi,W,\langle\cdot,\cdot\rangle)$ is defined as the unital $*$-algebra presented by the generators $A(x)$, $x\in W$, subject to the relations $A(ax+by) - aA(x) - bA(y) = 0$, $A(x)^*-A(x)=0$ and $[A(x),A(y)] - i\Pi(x,y)\II = 0$, for any $a,b\in \mathbb{R}$ and $x,y\in W$.
\end{definition}
This generic definition allows us to state and prove the following useful result.
\begin{proposition}[Simplicity and faithfulness]\label{prp:simpl}
Given that the spaces $V$ and $W$ are in formal duality and the Poisson bivector of the Poisson vector space $(V,\Pi,W,\langle\cdot,\cdot\rangle)$ is weakly non-degenerate (as a bilinear form on $W$), the corresponding CCR algebra $\Ac(V,\Pi,W,\langle\cdot,\cdot\rangle)$ is simple. Further, it admits only zero or faithful representations.
\end{proposition}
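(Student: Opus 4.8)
The plan is to prove simplicity directly; the statement about representations then follows at once from Proposition~\ref{propsimp}, by which a representation of a simple algebra is either faithful or zero. So I concentrate on showing that the only two-sided ideals of $\Ac := \Ac(V,\Pi,W,\langle\cdot,\cdot\rangle)$ are $\{0\}$ and $\Ac$. Let $\mathcal I\neq\{0\}$ be a two-sided ideal; I claim $\II\in\mathcal I$, which forces $\mathcal I=\Ac$. The whole argument is a \emph{commutator descent}: starting from any nonzero $a\in\mathcal I$ of minimal ``degree,'' I will show that if this degree is positive one can commute $a$ with a suitable degree-one element to produce a \emph{nonzero} element of $\mathcal I$ of strictly smaller degree (note that $[b,a]=ba-ab\in\mathcal I$ for any $b\in\Ac$, since $\mathcal I$ is two-sided), a contradiction; hence the minimal degree is $0$, so $\mathcal I$ contains some $c\II$ with $c\neq 0$ and therefore $\II$.

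To make ``degree'' precise, recall that $\Ac$ is \emph{filtered}, $\Ac=\bigcup_n\Ac_n$, with $\Ac_n$ spanned by $\II$ and products of at most $n$ generators (same reasoning as for $\Ac(\Mb)$). Writing $W_\CC:=W\otimes_\RR\CC$ and extending $\Pi$ complex-bilinearly, the Poincar\'e--Birkhoff--Witt theorem applied to the Heisenberg Lie algebra $W_\CC\oplus\CC\II$ with bracket $[A(x),A(y)]=i\Pi(x,y)\II$ (equivalently, a concrete Fock-space model) identifies the associated graded algebra with the symmetric algebra, $\Ac_n/\Ac_{n-1}\cong S^nW_\CC$. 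For $0\neq a\in\Ac$ let $n$ be its degree (the least $n$ with $a\in\Ac_n$) and let $\sigma\in S^nW_\CC$, $\sigma\neq 0$, be its \emph{leading symbol}. For a degree-one element $b=A(y)$ with $y\in W_\CC$ (i.e.\ $b=A(y_1)+iA(y_2)$ when $y=y_1+iy_2$), the commutator $[b,\cdot]$ is a derivation lowering the filtration degree by one, and its effect on leading symbols acts on a monomial by
\begin{equation*}
x_1\cdots x_n \;\longmapsto\; i\sum_{k=1}^{n}\Pi(y,x_k)\,x_1\cdots\widehat{x_k}\cdots x_n ,
\end{equation*}
i.e.\ (up to the factor $i$) the directional derivative along the linear form $\Pi(y,\cdot)\in W_\CC^*$; I write $D_{\Pi(y,\cdot)}\sigma$ for its value on a general $\sigma$.

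The heart of the matter is then to show that, \emph{thanks to weak non-degeneracy}, for every nonzero $\sigma$ of degree $n\geq 1$ there is a $y$ with $D_{\Pi(y,\cdot)}\sigma\neq 0$. First, weak non-degeneracy together with antisymmetry says exactly that the forms $\{\Pi(y,\cdot):y\in W\}$ have trivial common kernel in $W$, and the same holds in $W_\CC$ (if $\Pi(y,w)=0$ for all $y\in W$, split $w=w_1+iw_2$; since $\Pi$ is real on $W$ this forces $w_1=w_2=0$). Now $\sigma$ involves only finitely many vectors, so it lies in $S^nW_0$ for some finite-dimensional $W_0\subset W_\CC$; since the common kernel is trivial, the restrictions $\Pi(y,\cdot)|_{W_0}$ exhaust all of $W_0^*$. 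Finally, a nonzero homogeneous symmetric tensor of degree $n\geq 1$ cannot be annihilated by all directional derivatives $D_{\ell'}$, $\ell'\in W_0^*$ (Euler's identity $n\,\sigma=\sum_i x_i\,\partial_{x_i}\sigma$, valid in characteristic zero). Hence some $\ell'=\Pi(y,\cdot)|_{W_0}$ gives $D_{\Pi(y,\cdot)}\sigma\neq 0$, so $[b,a]$ has nonzero leading symbol in degree $n-1$ and is a nonzero element of $\mathcal I$ of degree $n-1<n$, contradicting minimality. Therefore $n=0$, and the proof concludes as above.

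The step I expect to be the main obstacle is the structural input $\Ac_n/\Ac_{n-1}\cong S^nW_\CC$: one must be sure that imposing the canonical commutation relations does \emph{not} collapse the symmetric algebra, so that the leading symbol of a nonzero element is genuinely a nonzero symmetric tensor and the derivative calculus above is faithful. This is precisely the content of Poincar\'e--Birkhoff--Witt for the Heisenberg algebra, and in an expository treatment it is cleanest either to cite it or to verify it by exhibiting a faithful representation (e.g.\ the Fock representation built from any compatible complex structure on $(W,\Pi)$). Once this is in hand, the remaining steps --- the derivation property of $[b,\cdot]$, the real-versus-complex bookkeeping, and the finite-dimensional reduction --- are routine.
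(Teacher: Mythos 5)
Your proof is correct, and it takes a genuinely different route from the paper's. Both arguments share the same skeleton: faithfulness is reduced to Proposition~\ref{propsimp}, and simplicity is proved by showing that any nonzero two-sided ideal ${\cal I}$ contains $\II$, the engine in both cases being that weak non-degeneracy of $\Pi$ lets commutators with generators detect the top-degree part of an ideal element. The difference is in how that detection is organized. You perform a one-step commutator descent on an element of minimal degree, reading $[A(y),\cdot]$ as the directional derivative $D_{\Pi(y,\cdot)}$ on leading symbols and getting a contradiction from Euler's identity plus finite-dimensional duality. The paper instead takes an arbitrary nonzero $a \in {\cal I}$, fixes an explicit expansion of it in products of generators with nonvanishing symmetrized top coefficient tensor, and applies $k$ iterated commutators $[\cdots[a,A(u_1)],\ldots,A(u_k)]$, which collapses everything directly to a multiple of $\II$; the coefficient is the pairing under $\Pi^{\otimes k}$ of that tensor with $S(u_1\otimes\cdots\otimes u_k)$, and its nonvanishing for some choice of the $u_i$ is precisely the weak non-degeneracy of $\Pi^{\otimes k}$ on $S^kW$ (Lemma~\ref{lem:tens-nondegen}, proved via the elementary induction of Lemma~\ref{lem:degen}). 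The trade-off is the one you yourself flag: your symbol calculus is faithful only if $\mathrm{gr}\,\Ac \cong S W_{\CC}$, i.e.\ a Poincar\'e--Birkhoff--Witt theorem for the Heisenberg/Weyl algebra, which you must import (and note that your suggested alternative verification, a Fock representation built from a compatible complex structure, is not available for an arbitrary weakly non-degenerate $\Pi$ in infinite dimensions, so citing PBW is the only safe version of that step). The paper's proof never needs PBW: since it pairs coefficient tensors inside $S^kW$ rather than working with classes in $\Ac_n/\Ac_{n-1}$, the only structural input is the elementary observation that an expansion of minimal top degree has nonzero symmetrized top coefficients. So your argument buys a cleaner conceptual picture (the standard associated-graded/derivation mechanism for Weyl algebras) at the price of a heavier cited theorem, while the paper's is longer in lemmas but entirely self-contained.
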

Before giving the proof, we note its main consequence. It is not hard to see that the definition of the CCR algebra $\Ac(\Mb)$, as given in Definition~\ref{def:ccr}, coincides with the alternative definition $\Ac(\Mb)   \stackrel {\mbox{\scriptsize  def}} {=} \Ac(\Sol,E)$, using the notation of Proposition~\ref{prp:symp-pois} and referring to the formal duality between $\Sol$ and ${\cal E}$. The explicit homomorphism acts on the generators as $\phi(f) \mapsto A([f])$. Thus, given Proposition~\ref{prp:symp-pois}, we have the immediate
\begin{corollary}
The CCR algebra $\Ac(\Mb)$ of a real scalar quantum field is simple and admits only either zero or faithful representations.
\end{corollary}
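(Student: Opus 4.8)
The stated corollary reduces in one line to the preceding Proposition~\ref{prp:simpl}. By Proposition~\ref{prp:symp-pois}, $(\Sol,E,{\cal E},\langle\cdot,\cdot\rangle)$ is a Poisson vector space, and its bivector $E$ is weakly non-degenerate: if $E([f],[g])=\langle [f],E[g]\rangle=0$ for all $[f]\in{\cal E}$, then non-degeneracy of the pairing forces $E[g]=0$, and bijectivity of $E\colon{\cal E}\to\Sol$ forces $[g]=0$. The assignment $\phi(f)\mapsto A([f])$ is a well-defined isomorphism $\Ac(\Mb)\cong\Ac(\Sol,E,{\cal E},\langle\cdot,\cdot\rangle)$ (it respects $\RR$-linearity, hermiticity, and the commutation relation, and is well-defined by Proposition~\ref{iff}). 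Thus simplicity and the representation dichotomy for $\Ac(\Mb)$ follow from Proposition~\ref{prp:simpl}. Since the proof of that proposition is where the real content lies, I now sketch it; the representation statement there in turn follows from simplicity via Proposition~\ref{propsimp}, because the kernel of any representation is a two-sided ideal and hence either $\{0\}$ (faithful) or all of $\Ac$ (the zero representation).

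To prove simplicity of $\Ac=\Ac(V,\Pi,W,\langle\cdot,\cdot\rangle)$ I would exploit its filtration. Let $\Ac_n$ denote the span of $\II$ together with all products of at most $n$ generators $A(x)$. Since $[A(x),A(y)]=i\Pi(x,y)\II\in\Ac_0$, commutators drop filtration degree, so the associated graded $\operatorname{gr}\Ac=\bigoplus_n\Ac_n/\Ac_{n-1}$ is commutative. The key structural input is a Poincar\'e--Birkhoff--Witt statement: $\operatorname{gr}\Ac$ is isomorphic to the full symmetric algebra $\operatorname{Sym}(W_{\CC})$, which can also be read off the explicit symmetric-tensor model of \cite{chapt:BD}. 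Consequently every nonzero $a\in\Ac$ of top filtration degree $n$ has a well-defined nonzero symbol $\sigma_n(a)\in\operatorname{Sym}^n(W_{\CC})$.

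The engine of the argument is that commuting with a generator implements a contraction. For fixed $y\in W$ one has the exact identity $[A(y),A(u_1)\cdots A(u_k)]=i\sum_{j}\Pi(y,u_j)\,A(u_1)\cdots\widehat{A(u_j)}\cdots A(u_k)$, so $a\mapsto[A(y),a]$ carries $\Ac_k$ into $\Ac_{k-1}$ and acts on symbols as the contraction $\iota_{\Pi(y,\cdot)}\colon\operatorname{Sym}^k(W_{\CC})\to\operatorname{Sym}^{k-1}(W_{\CC})$ against the functional $\Pi(y,\cdot)$. Now let $\ideal\neq\{0\}$ be a two-sided ideal and pick $0\neq a\in\ideal$ of minimal filtration degree $n$. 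If $n=0$ then $a=c\II$ with $c\neq0$, so $\II\in\ideal$ and $\ideal=\Ac$. If $n\geq1$, let $U\subset W$ be the finite-dimensional subspace supporting $\sigma_n(a)$. Here weak non-degeneracy enters: if some $u\in U$ had $\Pi(y,u)=0$ for every $y\in W$ it would force $u=0$, so the restrictions $\{\Pi(y,\cdot)|_U:y\in W\}$ span all of $U^*$; since a nonzero symmetric tensor of degree $n\geq1$ is not annihilated by every functional of $U^*$ and the contraction is linear in its functional argument, some $y\in W$ gives $\iota_{\Pi(y,\cdot)}\sigma_n(a)\neq0$. Then $[A(y),a]\in\ideal$ is nonzero of filtration degree exactly $n-1$, contradicting minimality. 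Hence $n=0$ is forced and $\ideal=\Ac$, proving simplicity.

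The hard part is the Poincar\'e--Birkhoff--Witt identification $\operatorname{gr}\Ac\cong\operatorname{Sym}(W_{\CC})$: everything else (the exact commutator formula, and the fact that weak non-degeneracy makes the restricted functionals span $U^*$) is elementary, but the whole degree-lowering step collapses unless one knows that the commutation and linearity relations do \emph{not} impose any relations in the top filtration degree, so that nonzero elements genuinely possess nonzero symbols on which the contraction acts faithfully. This is precisely what the explicit construction of $\Ac$ as symmetric tensors with a deformed product guarantees, and it is the one ingredient I would import from \cite{chapt:BD} rather than re-derive.
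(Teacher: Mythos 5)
Your proposal is correct, and its reduction of the corollary is exactly the paper's: weak non-degeneracy of $E$ on ${\cal E}$ from Proposition~\ref{prp:symp-pois}, the identification $\phi(f)\mapsto A([f])$ giving $\Ac(\Mb)\cong\Ac(\Sol,E,{\cal E},\langle\cdot,\cdot\rangle)$, and the zero-or-faithful dichotomy from Proposition~\ref{propsimp}. Where you genuinely diverge is in the proof of the core Proposition~\ref{prp:simpl}. The paper takes a nonzero $a$ in the ideal, written as in~\eqref{a} over linearly independent $v_i$ with (WLOG fully symmetric) top coefficients not all zero, and applies the \emph{full} $k$-fold iterated commutator $[\cdots[a,A(u_1)],\ldots,A(u_k)]$, which collapses everything directly into $\CC\II$ with coefficient~\eqref{eq:Pi-tens}; the non-vanishing of that coefficient for some choice of $u_i$ is then secured by two elementary finite-dimensional linear-algebra lemmas (Lemmas~\ref{lem:degen} and~\ref{lem:tens-nondegen}), which establish weak non-degeneracy of $\Pi^{\otimes k}$ on $S^kW$. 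You instead descend one filtration degree at a time from an element of minimal degree, acting by a single commutator and tracking its effect as a contraction on the \emph{symbol} in $\operatorname{Sym}^n(W_{\CC})$. Your route is more conceptual and is the standard deformation-theoretic picture, but it leans on the Poincar\'e--Birkhoff--Witt identification $\operatorname{gr}\Ac\cong\operatorname{Sym}(W_{\CC})$ as a black box --- a true but nontrivial structure theorem (it follows from PBW for the Heisenberg Lie algebra, or from the star-product model, which incidentally is closer to the deformation approach of \cite{chapt:RKK} than to \cite{chapt:BD}). The paper's argument is designed precisely to avoid any such input: it never needs independence of monomials in $\Ac$, only the trivial fact that every element admits \emph{some} expression of the form~\eqref{a}, with all the non-degeneracy content quarantined in the two lemmas; note also that your one-step contraction descent is essentially the same mechanism as the induction inside Lemma~\ref{lem:degen}, just run at the level of the algebra rather than of tensors. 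If you wanted to eliminate the PBW import entirely, you could borrow the paper's bookkeeping: choose a representation of $a$ with fully symmetric coefficients over linearly independent generators and \emph{minimal} top degree $k$; then the top symmetric tensor is automatically nonzero, and no statement about the graded structure of $\Ac$ is required.
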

\begin{remark}
The result established in the Corollary above is {\em not} valid form more complicated QFTs like {\em electromagnetism}~\cite{sdh} and {\em linearized gravity}~\cite{fewster-hunt}. The physical reason is the appearance of the {\em gauge invariance}. Mathematically it is related to the fact that the Poisson bivector corresponding to our $E$ is {\em degenerate} on the space ${\cal E}$ of compactly supported observables, as discussed in~\cite[Sec.5]{kh-conal} and~\cite[Sec.3]{kh-covar}.
\end{remark}

The proof of Proposition~\ref{prp:simpl} makes use of the following two lemmas.
\begin{lemma}\label{lem:degen}
Let $\Pi$ be a bilinear form (we need not even assume it to be antisymmetric) on a vector space $W$. Further, let $v_i \in W$, $i=1,\ldots,N$, be a set of linearly independent vectors and $c^{i_1\cdots i_k}$ a collection of scalars, not all zero, with each index running through $i_j = 1,\ldots,N$. Then, if
\begin{equation}
	\sum_{i_1,\ldots,i_k} c^{i_1\cdots i_k}
		\Pi(v_{i_1}, u_1) \cdots \Pi(v_{i_k}, u_k) = 0
\end{equation}
for each set of vectors $u_i\in W$, $i=1,\ldots,k$. Then there exists a non-zero vector $w\in W$ such that $\Pi(w,u) = 0$ for any $u\in W$.
\end{lemma}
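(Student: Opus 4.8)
The plan is to recast the hypothesis as the vanishing of a sum of products of linear functionals and then argue by contraposition: if $\Pi$ were left-nondegenerate, the displayed identity would force every $c^{i_1\cdots i_k}$ to vanish. Concretely, I would define linear functionals $\ell_i\colon W\to\mathbb{R}$ by $\ell_i(u)=\Pi(v_i,u)$ for $i=1,\ldots,N$. The hypothesis then says precisely that the multilinear functional $\sum_{i_1,\ldots,i_k} c^{i_1\cdots i_k}\,\ell_{i_1}\otimes\cdots\otimes\ell_{i_k}$ on $W^k$ is identically zero, where $\ell_{i_1}\otimes\cdots\otimes\ell_{i_k}$ denotes the map $(u_1,\ldots,u_k)\mapsto \ell_{i_1}(u_1)\cdots\ell_{i_k}(u_k)$.

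The key step is the claim that if $\ell_1,\ldots,\ell_N$ are linearly independent, then the products $\ell_{i_1}\otimes\cdots\otimes\ell_{i_k}$ are linearly independent as functionals on $W^k$. To prove this I would first produce a dual system of vectors: the map $T\colon W\to\mathbb{R}^N$, $T(u)=(\ell_1(u),\ldots,\ell_N(u))$, is surjective, since otherwise its image would lie in a hyperplane, yielding a nontrivial relation $\sum_i b_i\ell_i=0$ and contradicting independence. Hence there exist $e_1,\ldots,e_N\in W$ with $\ell_i(e_j)=\delta_{ij}$. Evaluating $\sum c^{i_1\cdots i_k}\ell_{i_1}\otimes\cdots\otimes\ell_{i_k}$ on the tuple $(e_{m_1},\ldots,e_{m_k})$ returns exactly $c^{m_1\cdots m_k}$, so the vanishing of the combination forces every coefficient to be zero.

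Since by hypothesis not all $c^{i_1\cdots i_k}$ vanish, the key step forces $\ell_1,\ldots,\ell_N$ to be linearly \emph{dependent}: there are scalars $a_1,\ldots,a_N$, not all zero, with $\sum_i a_i\ell_i=0$. Setting $w=\sum_i a_i v_i$, the assumed linear independence of the $v_i$ gives $w\neq 0$, while linearity of $\Pi$ in its first slot gives $\Pi(w,u)=\sum_i a_i\ell_i(u)=0$ for all $u\in W$, which is the desired conclusion. The only genuine obstacle is the multilinear independence claim, and in particular arranging the dual vectors $e_j$ when $W$ is infinite dimensional; this is handled cleanly by the surjectivity argument above, which uses nothing beyond the fact that finitely many independent functionals separate enough points of $W$. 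I note that antisymmetry of $\Pi$ is never invoked, consistent with the stated generality of the lemma.
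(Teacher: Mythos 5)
Your proof is correct, but it proceeds along a genuinely different route than the paper's. The paper argues by induction on $k$: writing the sum as $\Pi(w',u_k)$ with $w'=\sum c^{i_1\cdots i_k}\Pi(v_{i_1},u_1)\cdots\Pi(v_{i_{k-1}},u_{k-1})\,v_{i_k}$, it either exhibits a nonzero degenerate vector directly or, using linear independence of the $v_i$ to kill the coefficients of $w'$ separately, reduces to the case $k-1$. You instead recast the hypothesis as the vanishing of $\sum c^{i_1\cdots i_k}\,\ell_{i_1}\otimes\cdots\otimes\ell_{i_k}$ for the functionals $\ell_i=\Pi(v_i,\cdot)$, and prove the key fact that independent functionals have independent $k$-fold products, via the dual system $\ell_i(e_j)=\delta_{ij}$ obtained from surjectivity of $u\mapsto(\ell_1(u),\ldots,\ell_N(u))$; contraposition then yields a nontrivial relation $\sum_i a_i\ell_i=0$ and the explicit degenerate vector $w=\sum_i a_i v_i\neq 0$. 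Your surjectivity argument is sound (a proper subspace of $\mathbb{R}^N$ lies in a hyperplane, contradicting independence of the $\ell_i$), and it is exactly the point that makes the dual vectors available without any finite-dimensionality assumption on $W$. What each approach buys: the paper's induction is entirely elementary and self-contained, needing nothing beyond bilinearity and linear independence of the $v_i$; yours avoids induction altogether, isolates a reusable linear-algebra fact (independence of tensor products of independent functionals, a natural companion to the paper's Lemma~\ref{lem:tens-nondegen}), and produces $w$ in closed form from the dependence relation among the $\ell_i$, rather than recursively. You are also right that antisymmetry of $\Pi$ plays no role in either argument.
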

\begin{proof}
The proof is by induction on $k$. Let $k=1$, then the right-hand side of the equation in the hypothesis is $\Pi(w',u_1)$, where
\begin{equation}
	w' = \sum_{i} c^{i} v_{i} .
\end{equation}
Since not all $c^{i}$ are zero and the $v_i$, $i=1,\ldots,N$ are linearly independent, we have $w' \ne 0$. We can then set $w = w'$ and we are done, since $u_1$ can be arbitrary.

Now, assume that the case $k-1$ has already been established. Note that
we can write the right-hand side of the above equation as $\Pi(w', u_k)$,
where
\begin{equation}
	w' = \sum_{i_1,\ldots,i_k} c^{i_1\cdots i_k}
			\Pi(v_{i_1},u_1) \cdots \Pi(v_{i_{k-1}},u_{k-1}) v_{i_k} .
\end{equation}
If $w' \ne 0$ for some choice of $u_i\in W$, $i=1,\ldots,k-1$, then we can set $w = w'$ and we are done, since $u_k$ can be arbitrary.

Consider the case when $w' = 0$ for all $u_i \in W$, $i=1,\ldots,k-1$. Then, choose $j_k$ such that $c^{i_1\cdots i_{k-1} j_k}$ are not all zero. Since, by linear independence, the coefficients of the $v_{i_k}$ in $w'$ must vanish independently, we have
\begin{equation}
	\sum_{i_1,\ldots,i_{k-1}} c^{i_1\cdots i_{k-1} j_k}
		\Pi(v_{i_1},u_1) \cdots \Pi(v_{i_{k-1}},u_{k-1}) = 0
\end{equation}
for all $u_i\in W$, $i=1,\ldots,k-1$. In other words, by the inductive
hypothesis, the last equality implies the existence of the desired non-zero
$w\in W$, which concludes the proof. $\Box$
\end{proof}

A bilinear form $\Pi$ on $W$ naturally defines a bilinear form $\Pi^{\otimes k}$ on the $k$-fold tensor product $W^{\otimes k}$. Let $S\colon W^{\otimes k} \to W^{\otimes k}$ denote the (idempotent) full symmetrization operator and denote its image, the space of fully symmetric $k$-tensors, by $S^k W   \stackrel {\mbox{\scriptsize  def}} {=} S(W^{\otimes k})$. Of course, $\Pi^{\otimes k}$ also restricts to $S^k W$. If $\Pi$ is antisymmetric, then $\Pi^{\otimes k}$ is symmetric when $k$ is even and antisymmetric when $k$ is odd.
\begin{lemma}\label{lem:tens-nondegen}
If the antisymmetric bilinear form $\Pi$ is weakly non-degenerate on $W$, then the antisymmetric bilinear form $\Pi^{\otimes k}$ is weakly non-degenerate on $S^kW$.
\end{lemma}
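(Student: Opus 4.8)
The plan is to argue by contraposition and to recognize that the situation is exactly the one prepared by Lemma~\ref{lem:degen}. So I would suppose, for contradiction, that $\Pi^{\otimes k}$ is \emph{not} weakly non-degenerate on $S^k W$, i.e.\ that there is a nonzero symmetric tensor $t\in S^k W$ with $\Pi^{\otimes k}(t,s)=0$ for every $s\in S^k W$, and aim to produce a nonzero vector of $W$ that is degenerate for $\Pi$, contradicting the hypothesis.

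The first step is the only one requiring genuine care: I would rephrase the orthogonality condition so that it can be tested against arbitrary \emph{factorized} tensors rather than only symmetric ones. Here I would record the elementary observation that $\Pi^{\otimes k}$ is invariant under the simultaneous (diagonal) action of the permutation group on both of its arguments, which is immediate from $\Pi^{\otimes k}(v_1\otimes\cdots\otimes v_k,\,u_1\otimes\cdots\otimes u_k)=\Pi(v_1,u_1)\cdots\Pi(v_k,u_k)$ and the commutativity of multiplication of scalars. Combined with $St=t$, this invariance yields $\Pi^{\otimes k}(t,s)=\Pi^{\otimes k}(t,Ss)$ for all $s\in W^{\otimes k}$. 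Since $S^k W=S(W^{\otimes k})$, the assumed vanishing on $S^k W$ is therefore equivalent to $\Pi^{\otimes k}(t,s)=0$ for \emph{all} $s\in W^{\otimes k}$, in particular for every factorized $s=u_1\otimes\cdots\otimes u_k$.

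Next I would expand $t$. Because $t$ is a finite tensor, I can choose finitely many linearly independent vectors $v_1,\dots,v_N\in W$ whose span contains all vectors occurring in $t$, and write $t=\sum_{i_1,\dots,i_k}c^{i_1\cdots i_k}\,v_{i_1}\otimes\cdots\otimes v_{i_k}$, where the scalars $c^{i_1\cdots i_k}$ are not all zero since $t\neq 0$. Evaluating $\Pi^{\otimes k}(t,\,u_1\otimes\cdots\otimes u_k)=0$ then gives precisely $\sum_{i_1,\dots,i_k}c^{i_1\cdots i_k}\,\Pi(v_{i_1},u_1)\cdots\Pi(v_{i_k},u_k)=0$ for all $u_i\in W$, which is exactly the hypothesis of Lemma~\ref{lem:degen}.

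Invoking Lemma~\ref{lem:degen} produces a nonzero $w\in W$ with $\Pi(w,u)=0$ for all $u\in W$; by antisymmetry of $\Pi$ this gives $\Pi(u,w)=0$ for all $u$, so by weak non-degeneracy of $\Pi$ we get $w=0$, a contradiction. Hence no such nonzero $t$ exists and $\Pi^{\otimes k}$ is weakly non-degenerate on $S^k W$. The hard part here is purely the bridging step of the second paragraph: Lemma~\ref{lem:degen} is phrased for unrestricted test vectors $u_1,\dots,u_k$, whereas weak non-degeneracy on $S^k W$ only supplies orthogonality against symmetric tensors, so the diagonal permutation-invariance of $\Pi^{\otimes k}$ together with the symmetry of $t$ is what makes the preceding lemma applicable; everything after that is a direct substitution.
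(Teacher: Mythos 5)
Your proof is correct and follows essentially the same route as the paper's: assume degeneracy of $\Pi^{\otimes k}$ on $S^kW$, expand the offending nonzero symmetric tensor over a linearly independent set, reduce to the hypothesis of Lemma~\ref{lem:degen}, and contradict the weak non-degeneracy of $\Pi$. The only cosmetic difference is in the bridging step: the paper absorbs the symmetrizer $S$ acting on the test tensors into the symmetric coefficients $d^{i_1\cdots i_k}$ (producing the harmless factor $k!$), whereas you invoke the diagonal permutation-invariance of $\Pi^{\otimes k}$ together with $St=t$ to pass from symmetric to arbitrary factorized test tensors --- the same computation, phrased invariantly.
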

\begin{proof}
Assume the contrary, that $\Pi^{\otimes k}$ is degenerate. By its (anti-)symmetry, we need only consider the degeneracy in its first argument. That is, there
exists a vector $v = \sum_{i_1,\ldots i_k} d^{i_1\cdots i_k} v_{i_1} \otimes
\cdots \otimes v_{i_k}$, where $v_i\in W$, $i=1,\ldots,N$, constitute a
linearly independent set and the $d^{i_1\cdots i_k}$ coefficients are not all
zero and are symmetric under index interchange, such that
\begin{equation}
	\Pi^{\otimes k}(v, S(u_1\otimes \cdots \otimes u_k)) = 0 .
\end{equation}
for any $u_i\in W$, $i=1,\ldots,k$.  But then, the above equality is precisely of the form of the hypothesis of Lemma~\ref{lem:degen}, with
\begin{equation}
	c^{i_1\cdots i_k}
		= k! \, d^{i_1 \cdots i_k} ,
\end{equation}
due to the symmetry of $d^{i_1\cdots i_k}$ under index interchanges.  Therefore, by Lemma~\ref{lem:degen}, there must exist a $w\in W$ such that $\Pi(w,u) = 0$ for all $u\in W$, which contradicts the weak non-degeneracy of $\Pi$ on $W$. Therefore, $\Pi^{\otimes k}$ cannot be degenerate on $S^kW$, and hence is weakly non-degenerate. $\Box$
\end{proof}

\begin{proof}[of Proposition~\ref{prp:simpl}]
Suppose that $\Ac(V,\Pi,W,\langle\cdot,\cdot\rangle)$ is not simple, and so has a non-trivial two-sided ideal ${\cal I}$. If we can deduce that $\II\in {\cal I}$, then any non-trivial two-sided ideal must be all of $\Ac(V,\Pi,W,\langle\cdot,\cdot\rangle)$, implying that the algebra is simple.

Take any non-zero element $a\in {\cal I}$ and recall the idea behind Equation~\eqref{a}. That is, there exists integers $k,N\ge 0$, linearly independent elements $v_i\in W$, $i=1,\ldots,N$, and complex coefficients $c_{(l)}^{i_1\cdots i_l}$, $i_j = 1,\ldots, N$ and $l=0,\ldots,k$, such that
\begin{multline}
	a = c_{(0)}\II + \sum_{i_1} c_{(1)}^{i_1} A(v_{i_1})
		+ \sum_{i_1,i_2} c_{(2)}^{i_1 i_2} A(v_{i_1}) A(v_{i_2}) \\ {}
		+ \cdots
		+ \sum_{i_1,\ldots,i_k} c_{(k)}^{i_1\ldots i_k} A(v_{i_1}) \cdots A(v_{i_k}) ,
\end{multline}
where not all of the components of $c_{(k)}^{i_1\ldots i_k}$ are zero. If $k=0$, the $\II\in {\cal I}$ and we are done. If $k>0$, note that ${\cal I}$ also contains the iterated commutator $[\cdots[a,A(u_1)],\ldots,A(u_k)]$, for any $u_i\in W$, $i=1,\cdots,k$. A straight forward calculation shows that, up to (non-zero) numerical factors, the iterated commutator is equal to
\begin{equation}\label{eq:Pi-tens}
	\Pi^{\otimes k}\left( \sum_{i_1,\ldots,i_k} c_{(k)}^{i_1,\ldots,i_k} S(v_{i_1} \otimes \cdots \otimes v_{i_k}), S(u_1 \otimes \cdots \otimes u_k) \right) \II .
\end{equation}
By Lemma~\ref{lem:tens-nondegen}, since $\Pi$ is weakly non-degenerate on $W$, $\Pi^{\otimes k}$ is weakly non-degenerate on $S^kW$. Since elements of the form $S(u_1\otimes \cdots \otimes u_k)$ generate $S^kW$, there must exist at least one element of $S^kW$ of that form such that the coefficient in front of $\II$ in~\eqref{eq:Pi-tens} is non-zero. Therefore, $\II\in {\cal I}$ and we are done. $\Box$
\end{proof}

Automorphisms of the CCR algebra $\Ac(\Mb)$ are important because the composition of a state with an automorphism gives a way to define more states, once at least one is known. The identity $\Ac(\Mb) \cong \Ac(\Sol,E)$ allows us to construct lots of automorphisms of $\Ac(\Mb)$, induced by transformations of $\Sol$ or ${\cal E}$ that, respectively, leave $\tau$ or $E$ invariant.
\begin{proposition}[Induced homomorphism]\label{propind}
Let $\Ac(V,\Pi,W,\langle\cdot,\cdot\rangle)$ be as in Definition~\ref{def:ccr-pois} and
let $\sigma\colon W \to W$ be a linear map such that
\begin{equation}
	\Pi(\sigma x, \sigma y) = \Pi(x,y) ~~
	\text{(resp.~$\Pi(\sigma x, \sigma y) = -\Pi(x,y)$)} ,
\end{equation}
for all $f,g \in W$. Then, there exists a homomorphism (resp.~anti-linear homomorphism) of unital $*$-algebras, $\alpha^{(\sigma)}\colon \Ac(V,\Pi,W,\langle\cdot,\cdot\rangle) \to \Ac(V,\Pi,W,\langle\cdot,\cdot\rangle)$ uniquely defined by its values
\begin{equation}
	\alpha^{(\sigma)}(A(x))   \stackrel {\mbox{\scriptsize  def}} {=} A(\sigma x) ,
\end{equation}
for each $x\in W$, on the generators of $\Ac(V,\Pi,W,\langle\cdot,\cdot\rangle)$. Also, if $\sigma$ is bijective, then $\alpha^{(\sigma)}$ is an automorphism.
\end{proposition}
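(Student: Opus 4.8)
The plan is to invoke the universal property of the presentation by generators and relations, exactly as set up in Section~\ref{sec:genrel} and Definition~\ref{def:ccr-pois}. Write $\Ac := \Ac(V,\Pi,W,\langle\cdot,\cdot\rangle)$, with generating set $G = \{A(x) : x\in W\}$ and the three families of relations appearing in Definition~\ref{def:ccr-pois}. To produce a unital $*$-algebra homomorphism out of $\Ac$ it suffices, by that universal property, to give a map $\beta\colon G \to \Ac$ on the generators whose lift $b\colon \Ac_G \to \Ac$ to the free algebra annihilates every relation, i.e.\ $b\circ\rho = 0$; the universal property then supplies a \emph{unique} homomorphism $\alpha^{(\sigma)}$ factoring through $\Ac$.

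First I would treat the linear case, setting $\beta(A(x)) = A(\sigma x)$, and check the three relations in turn. The linearity relation $A(ax+by)-aA(x)-bA(y)$ is sent to $A(\sigma(ax+by)) - aA(\sigma x) - bA(\sigma y)$, which vanishes because $\sigma$ is linear and $A$ is linear in $\Ac$. The Hermiticity relation is sent to $A(\sigma x)^* - A(\sigma x)$, which vanishes since each $A(\sigma x)$ is Hermitian in $\Ac$. The commutation relation $[A(x),A(y)] - i\Pi(x,y)\II$ is sent to $[A(\sigma x),A(\sigma y)] - i\Pi(x,y)\II = i\Pi(\sigma x,\sigma y)\II - i\Pi(x,y)\II$, which vanishes precisely because of the hypothesis $\Pi(\sigma x,\sigma y) = \Pi(x,y)$. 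Thus $b\circ\rho = 0$, and the universal property yields a unique unital $*$-algebra homomorphism $\alpha^{(\sigma)}\colon\Ac\to\Ac$ with $\alpha^{(\sigma)}(A(x)) = A(\sigma x)$.

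For the anti-linear case I would run the same argument, but now with the target map taken to be anti-linear; the only change is in the commutation relation, where applying an anti-linear $*$-homomorphism conjugates the scalar $i$, so consistency of $[A(\sigma x),A(\sigma y)] = i\Pi(\sigma x,\sigma y)\II$ with $\overline{i}\,\Pi(x,y)\II = -i\Pi(x,y)\II$ now demands exactly $\Pi(\sigma x,\sigma y) = -\Pi(x,y)$. The one subtlety requiring attention — and the main point of the proof — is that the universal property as phrased in Section~\ref{sec:genrel} is stated for \emph{linear} homomorphisms: I would either invoke its evident anti-linear analogue (which is built into the free $*$-algebra construction, whose basis already carries the $G^*$ labels and the conjugation rule), or equivalently reduce to the linear case by regarding an anti-linear homomorphism $\Ac\to\Ac$ as a linear homomorphism out of the conjugate algebra $\overline{\Ac}$.

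Finally, for bijectivity, note that the hypothesis applied with $x,y$ replaced by $\sigma^{-1}x,\sigma^{-1}y$ shows that $\sigma^{-1}$ again preserves $\Pi$ (resp.\ flips its sign), so the same construction produces $\alpha^{(\sigma^{-1})}$. Both composites $\alpha^{(\sigma)}\circ\alpha^{(\sigma^{-1})}$ and $\alpha^{(\sigma^{-1})}\circ\alpha^{(\sigma)}$ are linear unital $*$-homomorphisms of $\Ac$ that act as the identity on each generator $A(x)$, hence by the uniqueness clause of the universal property each equals $\id$. Therefore $\alpha^{(\sigma)}$ is invertible, i.e.\ an automorphism.
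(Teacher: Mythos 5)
Your proof is correct, and its core strategy is the same as the paper's: both arguments obtain $\alpha^{(\sigma)}$ from the universal property of the presentation in Definition~\ref{def:ccr-pois} by checking that $A(x)\mapsto A(\sigma x)$ kills the three families of relations (the linearity and Hermiticity relations trivially, the commutation relation exactly via the hypothesis on $\Pi$, with the sign flip in the anti-linear case absorbed by conjugation of the scalar $i$), and both establish the automorphism claim by constructing $\alpha^{(\sigma^{-1})}$ and using the uniqueness clause to identify the two composites with the identity. The one genuine divergence is the device used to make the anti-linear case rigorous, which you correctly flag as the main subtlety: you propose either to invoke an ``evident anti-linear analogue'' of the universal property or to regard an anti-linear homomorphism as a linear one out of the conjugate algebra $\overline{\Ac}$, whereas the paper instead re-reads $\Ac(V,\Pi,W,\langle\cdot,\cdot\rangle)$ as a \emph{real} algebra with one extra generator $i$ subject to $i^2=-\II$, $[i,\II]=[i,A(x)]=0$, $i^*=-i$, extends the lifted endomorphism by $\beta(i)=-i$, checks that this preserves all relations, and observes that the induced map on the quotient is a real-algebra homomorphism that is anti-linear as a map of complex algebras. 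The paper's realification trick has the advantage of being self-contained, since it only ever uses the linear universal property actually formulated in Section~\ref{sec:genrel}; your conjugate-algebra reduction is an equally valid and standard alternative, but if you opt for your first route you should spell out why the free $*$-algebra construction really does furnish anti-linear lifts --- it does, but ``evident'' is carrying real weight there.
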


\begin{remark}
In view of Proposition~\ref{prp:symp-pois} and the isomorphism ${\cal E} \cong \Sol$, in the case of the CCR algebra $\Ac(\Mb) \cong \Ac(\Sol,E)$ of a real scalar quantum field, the linear endomorphisms of ${\cal E}$ that preserve the Poisson bivector $E$ can be equivalently specified by linear endomorphisms of $\Sol$ that preserve the symplectic form $\tau$.
\end{remark}

\begin{proof}
Recall the definition of an algebra presented by generators and relations by its universal property, as discussed in Section~\ref{sec:genrel}, as well as such a presentation of the algebra $\Ac(V,\Pi,W,\langle\cdot,\cdot\rangle)$ given in Definition~\ref{def:ccr-pois}.

Let us denote by $\Ac(W)$ the algebra freely generated by the elements of the vector space $W$. Following our notation, the map embedding the generators in this algebra can be denoted as $A\colon W \to \Ac(W)$. The composition $A\circ \sigma$ is another such map. Therefore, by the universal property, there exists a unique homomorphism $\beta \colon \Ac(W) \to \Ac(W)$ such that $\beta(A(x)) = A(\sigma x)$, for all $x\in W$, and $\beta(\II) = \II$.

We now need to check whether $\beta$ leaves invariant the kernel of the projection $\Ac(W) \to \Ac(V,\Pi,W,\langle\cdot,\cdot\rangle)$. This kernel is the two-sided ideal generated by the relations $A(ax+by)-aA(x)-bA(y)=0$, $A(x)^*-A(x)=0$ and $[A(x),A(y)] - i\Pi(x,y)\II = 0$, for any $a,b\in \mathbb{R}$ and $x,y\in W$, so it is sufficient to check the invariance of these relations. The first two are obviously invariant. The last commutator identity is invariant upon invoking the hypothesis that $\sigma$ preserves $\Pi$, up to sign. We deal with the two cases separately.

In the case when $\sigma$ preserves $\Pi$, we have
\begin{equation}
	[A(\sigma x),A(\sigma y)] - i\Pi(x,y)\II =
	[A(\sigma x),A(\sigma y)] - i\Pi(\sigma x,\sigma y)\II .
\end{equation}
Hence, the homomorphism $\beta$ induces a uniquely defined homomorphism on the quotiented algebra, which we call $\alpha^{(\sigma)} \colon \Ac(V,\Pi,W,\langle\cdot,\cdot\rangle) \to \Ac(V,\Pi,W,\langle\cdot,\cdot\rangle)$, which given by $\alpha^{(\sigma)}([a]) = [\beta a]$, and which has all the desired properties.

In the case when $\sigma$ changes the sign of $\Pi$, we need to change perspective slightly. Recall that we defined $\Ac(\Mb)$ as a complex algebra, which then automatically has the structure of a real algebra. Equivalently, we could have also defined it directly as a real algebra, by throwing in an extra generator $i$, satisfying the relations $i^2 = -\II$, $[i,\II] = [i,A(x)] = 0$ and $i^* = -i$. If the homomorphism $\beta$ is extended to this generator as $\beta(i) = -i$, then it preserves the new relations that need to be satisfied by $i$ and also the commutator identity, since
\begin{equation}
	[A(\sigma x),A(\sigma y)] - (-i)\Pi(x,y)\II =
	[A(\sigma x),A(\sigma y)] - i\Pi(\sigma x,\sigma y)\II .
\end{equation}
Hence, the real algebra homomorphism $\beta$ induces a uniquely defined homomorphism on the quotiented algebra, which also happens to be an anti-linear homomorphism in the sense of complex algebras, which we call $\alpha^{(\sigma)} \colon \Ac(V,\Pi,W,\langle\cdot,\cdot\rangle) \to \Ac(V,\Pi,W,\langle\cdot,\cdot\rangle)$, and which has all the desired properties.

Finally, when $\sigma$ is a bijection, we can use the universal property of $\Ac(V,\Pi,W,\langle\cdot,\cdot\rangle)$, as was done in Section~\ref{sec:genrel}, to show that $\alpha^{(\sigma^{-1})} = (\alpha^{(\sigma)})^{-1}$. Therefore, $\alpha^{(\sigma)}$ is an isomorphism and hence an automorphism of the algebra. $\Box$
\end{proof}

We end this section by noting that there is another structure that is induced on the space ${\cal E} \cong C^\infty_0(\Mc)/Ker(E)$ in the presence of a state $\omega$ on $\Ac(\Mb)$, namely the symmetrized part of the $2$-point function
\begin{equation}
	\omega^S_2(f,g)   \stackrel {\mbox{\scriptsize  def}} {=} \frac{1}{2} (\omega_2(f,g) + \omega_2(g,f)) ,
\end{equation}
with $f,g\in C^\infty_0(\Mc)$ and $\omega_2(f,g)   \stackrel {\mbox{\scriptsize  def}} {=} \omega(\phi(f) \phi(g))$. By hermiticity, the symmetrized $2$-point function is always real and non-negative, which was essentially already noted in~\eqref{ASE} and~\eqref{2} of Proposition~\ref{prp:2pt}. Also, by Proposition~\ref{iff}, $\phi(f)$ depends only on the equivalence class $[f] \in {\cal E}$. Hence, $\omega^S_2 \colon {\cal E} \times {\cal E} \to \mathbb{R}$ defines a real symmetric bilinear form. Finally, the inequality~\eqref{3} from Proposition~\ref{prp:2pt}, which we can rewrite as
\begin{equation}\label{3'}
	\frac{1}{4} |E([f],[g])|^2 \le \omega^S_2([f],[f]) \omega^S_2([g],[g]) ,
\end{equation}
shows that $\omega^S_2$ is non-degenerate on ${\cal E}$, since it majorizes $E$, which is already known to be non-degenerate by Proposition~\ref{prp:symp-pois}. Thus, a state $\omega$ on $\Ac(\Mb)$ induces a positive scalar product $\omega^S_2$ on ${\cal E}$ (and also on $\Sol$ by the isomorphism of Proposition~\ref{prp:symp-pois}). We will use this scalar product structure and the inequality~\eqref{3'} to construct quasifree states in the next section.

\subsection{Quasifree states, also known as  Gaussian states} 
There is a plethora of states on ${\cal A}(\Mb)$, the first class we consider is that of the {\em quasifree} or {\em Gaussian states}. They mimic the Fock representation of Minkowski vacuum and they are completely determined from the two-point function by means of a prescription generalizing  the well known Wick procedure which also guarantees  essential self-adjointness of the field operators $\hat{\phi}_\omega$ since they are regular (Definition \ref{defreg}).

\begin{definition}[Quasifree states]\label{defqfs}
An algebraic state $\omega : {\cal A}(\Mb) \to \mathbb C$ is said to be {\bf quasifree} or {\bf Gaussian} if its $n$-point functions agree with the so-called   {\bf Wick procedure}, in other words they  satisfy the following pair of requirements for all choices of $f_k \in C_0^\infty(\Mc)$,\\

{\bf (a)} $\omega_{n}(f_1,\ldots, f_{n}) = 0$,  \quad for $n=1,3,5, \ldots$\\

{\bf (b)} $\omega_{n}(f_1, \ldots, f_{n}) = \sum_{\mbox{partitions}} \omega_2(x_{i_1}, x_{i_2})\cdots  \omega_2(x_{i_{n-1}}, x_{i_n})$, \quad for $n=2,4,6, \ldots$\\

\noindent For the case of $n$ even,  the  {\em partitions}  refers to the class of all possible decomposition of  set $\{1,2,\ldots, n\}$ into $n/2$ pairwise disjoint  subsets of  $2$ elements $$\{i_1,i_2\}, \{i_3,i_4\}\ldots \{i_{n-1},i_n\}$$ with $i_{2k-1}< i_{2k}$ for $k=1,2,\ldots, n/2$.
\end{definition}

\noindent We will  prove in the next section   that quasifree states exist in a generic curved spacetime for a massive scalar field and $\xi=0$.  Instead  we intend  to clarify here  the structure of the GNS representation of quasifree states, proving that it is a Fock representation. 
The characterization theorem  relies on the following intermediate result.

\begin{proposition}[One-particle structure]\label{propSK}
Consider the symplectic vector space $(\Sol,\tau)$, cf.~Proposition~\ref{prp:symp-pois}.\\
{\bf (a)} If a real scalar product $\mu : \Sol \times \Sol \to \mathbb R$ satisfies 
\begin{eqnarray}
\frac{1}{4} |\tau(x,y)|^2 \leq \mu(x,x)\mu(y,y) \quad \forall x,y \in \Sol \label{4}
\end{eqnarray}
then there exists a pair $(K,H)$, called {\bf one-particle structure} associated to $(\Sol, \tau, \mu)$  where  $H$ is a complex Hilbert space and $K\colon \Sol \to H$ is a map satisfying\\

(i)  $K$  is $\mathbb R$ linear and  $K(\Sol) + i K(\Sol)$ is dense in $H$ (though $K(\Sol)$, as a real subspace of $H$, need not be dense by itself),\\

(ii)  $\langle Kx| Ky \rangle = \mu(x,y) + \frac{i}{2}\tau(x,y)$ for all $x,y \in \Sol$.\\

\noindent {\bf (b)} If $(K',H')$ satisfies (i) in (a)  and $\tau(x,y)= 2Im (\langle K' x| K' y\rangle_{H'})$, then the scalar product $\mu$ on $\Sol$ obtained from  (ii) in (a) also satisfies (\ref{4}).

\noindent {\bf (c)} A  pair $(H', K')$ satisfies (i) and (ii) in (a) if and only if there is an isometric surjective operator
$V : H\to H'$ with $VK=K'$. 
\end{proposition}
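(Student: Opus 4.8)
The plan is to handle the three parts in turn, with essentially all of the work sitting in the existence statement (a); parts (b) and (c) are short functional-analytic arguments.

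For (a) I would first complete the real pre-Hilbert space $(\Sol,\mu)$ to a real Hilbert space $\bar S$. The hypothesis \eqref{4} says precisely that $|\tau(x,y)|\le 2\,\mu(x,x)^{1/2}\mu(y,y)^{1/2}$, so $\tau$ extends to a bounded bilinear form on $\bar S$ and, by Riesz representation, there is a bounded operator $A$ on $\bar S$ with $\tfrac12\tau(x,y)=\mu(x,Ay)$ and $\|A\|\le 1$; antisymmetry of $\tau$ makes $A$ skew-adjoint. Passing to the complexification $\bar S_{\mathbb C}$ with its induced Hermitian product, $A$ becomes complex-linear and skew-adjoint, so $iA$ is self-adjoint with spectrum in $[-1,1]$, and hence $B:=I+iA$ is a positive self-adjoint operator with $\|B\|\le 2$. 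The crucial observation is that for real vectors $x,y$ one has $\langle x,By\rangle_{\mathbb C}=\mu(x,y)+i\,\mu(x,Ay)=\mu(x,y)+\tfrac{i}{2}\tau(x,y)$, so the Hermitian form we must realize is exactly the positive form induced by $B$.

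I would then set $Kx:=B^{1/2}x$ for $x\in\Sol\subset\bar S\subset\bar S_{\mathbb C}$ and take $H:=\overline{B^{1/2}\bar S_{\mathbb C}}=\overline{\mathrm{ran}\,B}$, a closed (complex) subspace of $\bar S_{\mathbb C}$. Property (ii) is then immediate from $\langle B^{1/2}x,B^{1/2}y\rangle_{\mathbb C}=\langle x,By\rangle_{\mathbb C}$, real-linearity of $K$ is clear, and the density in (i) follows because $\Sol+i\Sol$ is dense in $\bar S_{\mathbb C}$ while $B^{1/2}$ has range dense in $H$, so $K(\Sol)+iK(\Sol)=B^{1/2}(\Sol+i\Sol)$ is dense in $H$. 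For (b) I would define $\mu(x,y):=\Re\langle K'x|K'y\rangle$, so that $\langle K'x|K'y\rangle=\mu(x,y)+\tfrac{i}{2}\tau(x,y)$; Cauchy--Schwarz in $H'$ gives $\mu(x,y)^2+\tfrac14\tau(x,y)^2=|\langle K'x|K'y\rangle|^2\le\mu(x,x)\mu(y,y)$, which yields \eqref{4}, while positive-definiteness of $\mu$ (so that it is genuinely a scalar product) follows from weak non-degeneracy of $\tau$. For (c), the backward direction is a one-line check using that $V$ is complex-linear, isometric and surjective; for the forward direction I would define $V$ on the dense subspace $K(\Sol)+iK(\Sol)$ by $V(Kx+iKy):=K'x+iK'y$ and extend complex-linearly, noting that the shared property (ii) makes $V$ inner-product preserving on this subspace (hence well-defined), and then extend it uniquely to a surjective isometry $H\to H'$ using density of domain and range from (i).

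The main obstacle is the construction in (a): the honest points are establishing positivity of $B=I+iA$ and then correctly dividing out its possible kernel (the degenerate directions of the form $\mu+\tfrac{i}{2}\tau$), so that $H$ is a genuine Hilbert space into which $K$ maps with dense range. Realizing $H$ as $\overline{\mathrm{ran}\,B}\subset\bar S_{\mathbb C}$ with $K=B^{1/2}$ is what makes this manageable, since it packages the completion, the quotient by the degenerate subspace, and the choice of complex structure into a single spectral-calculus step and reduces the verification of (i) and (ii) to routine algebra.
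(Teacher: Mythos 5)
Your proof is correct, and it is worth pointing out that the paper itself does not actually prove Proposition \ref{propSK}: it defers entirely to Proposition 3.1 of \cite{kw} (``barring different conventions on signs''). Your argument is a complete, self-contained version of essentially that standard construction: completion of $(\Sol,\mu)$, Riesz representation of $\tfrac12\tau$ by a bounded skew-adjoint operator $A$ with $\|A\|\le 1$ (this is precisely the operator $J$ that the paper alludes to in Remark (2) following Theorem \ref{teoE}), complexification, positivity of $B=I+iA$ from the spectral bound on the self-adjoint operator $iA$, and finally $K=B^{1/2}$ with $H=\overline{\mathrm{ran}\,B}$. The delicate points are all handled: positivity of $B$ is exactly where hypothesis (\ref{4}) enters; the density claim in (i) follows because a bounded operator maps a dense subspace onto a subspace dense in the closure of its range; in (b) you correctly use weak non-degeneracy of $\tau$ (Proposition \ref{prp:symp-pois}) to get definiteness of $\mu$; and in (c) the well-definedness of $V$ on $K(\Sol)+iK(\Sol)$ is correctly reduced to preservation of inner products (which are fixed by (ii) and real-linearity), so null differences map to null differences, after which density, completeness and closedness of the range of an isometry give the unitary extension --- note that complex-linearity of $V$ is then automatic from the same computation rather than an extra stipulation. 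Compared with the alternative route of defining $H$ directly as the completion of the quotient of $\Sol_{\mathbb C}$ by the kernel of the positive semi-definite sesquilinear form $\mu+\tfrac{i}{2}\tau$, your packaging absorbs the quotient, the completion and the choice of complex structure into the single spectral-calculus step $K=B^{1/2}$, which is exactly the economy you claim for it.
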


\begin{proof}
 Barring different conventions on signs the proof is given  in Proposition 3.1 in \cite{kw}. $\Box$
\end{proof}

\noindent A characterization  theorem for quasifree states can now be proved using the lemma
above with the following theorem that can be obtained  by Lemma A.2, Proposition 3.1
and a comment on p.77 in  \cite{kw} (again modulo different conventions on signs) where the approach based on  Weyl $C^*$-algebras is pursued. For quasifree states the approaches relying on CCR $*$-algebras and  Weyl $C^*$-algebras are technically equivalent.
The fact that, on a Fock space, an operator as the one in (\ref{phi}) is {\em essentially self-adjoint} in the indicated domain \cite{BR} is well know and can be proved directly, for instance, using analytic vectors.

\begin{theorem}[Characterization of quasifree states]\label{teoE}
Consider the $*$-algebra ${\cal A}(\Mb)$ associated to a real scalar KG field.  Suppose that
$\mu $ is a real scalar product on $\Sol$ which verifies~\eqref{4}.  The following hold.\\
\noindent {\bf (a)} There exists a quasifree state $\omega$ on  ${\cal A}(\Mb)$  such that
\begin{equation}
	\omega_{2}(f,g) = \mu(Ef,Eg) + \frac{i}{2} E(f,g) \:, \quad \forall f,g \in C_0^\infty(\Mc)\:.
\end{equation}

\noindent {\bf (b)}  The GNS triple $({\cal H}_\omega, {\cal D}_{\omega}, \pi_{\omega}, \Psi_\omega)$ consists of the following:\\

(i)  ${\cal H}_\omega$ is the bosonic (symmetrized) Fock space with the one-particle subspace being $H$, of the one structure particle $(K,H)$ in Proposition \ref{propSK};\\

(ii)   $\Psi_\omega$ is the vacuum vector of the Fock space;\\

(iii) ${\cal D}_\omega$ is the dense  subspace of the finite complex linear combinations of  $\Psi_\omega$ and all of the vectors
\begin{equation}
	a^\dagger(\psi_1) \cdots a^\dagger(\psi_n) \Psi_\omega \quad \mbox{for $n=1,2,\ldots$ and $\psi_k \in \Sol$}
\end{equation}
where $a^*(\psi)$ is the standard creation operator\footnote{It holds that $[a(\psi), a^\dagger(\xi)] = \langle K\psi |K\xi \rangle$, $[a(\psi),a(\xi)] = 0 = [a^\dagger(\psi), a^\dagger(\xi)]$ if $\xi,\psi \in \Sol$, and $a(\xi)$, $a(\psi)$ are defined on ${\cal D}_\omega$, with $a^\dagger(\psi)= a(\psi)^\dagger|_{{\cal D}_\omega}$.}  corresponding to the solution $\psi \in \Sol$.\\

(iv)  $\pi_\omega$ is completely determined by $a$ and $a^\dagger$, with $a(\psi)$ being the annihilation operator corresponding to the solution $\psi \in \Sol$,
\begin{equation}
	\hat{\phi}_\omega(f) = \pi_\omega(\phi(f))  = a(Ef) + a^\dagger(Ef) \quad \forall f \in C_0^\infty(\Mc)\:,\label{phi}
\end{equation}
and, in particular,  $\omega$ is regular, meaning that $\hat{\phi}_\omega(f)$ is essentially self-adjoint on ${\cal D}_\omega$. \\

\noindent {\bf (d)} The quasifree state $\omega$ determined by $\mu$ is pure if and only if the image $K(\Sol)$ is dense in the one-particle subspace $H$, thus strengthening (i) of Proposition~\ref{propSK}. This condition  is equivalent to:
\begin{equation}
	\mu( \psi,\psi) = \frac{1}{4}\sup_{\xi\neq 0}  \frac{|\tau(\psi,\xi)|^2}{\mu( \xi,\xi)}\:.\label{purecond}
\end{equation}
\end{theorem}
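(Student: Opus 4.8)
The plan is to realize $\omega$ concretely as the Fock vacuum state built from the one-particle structure supplied by Proposition~\ref{propSK}, and then to identify this representation with the abstract GNS data via the uniqueness clause of Theorem~\ref{GNS}. First I would apply Proposition~\ref{propSK} to the triple $(\Sol,\tau,\mu)$, which is legitimate because $\mu$ satisfies~\eqref{4}, obtaining the one-particle structure $(K,H)$; note that $K$ is injective since $\mu$ is a genuine scalar product. I then take $\mathcal{H}_\omega$ to be the bosonic Fock space over $H$ with vacuum $\Psi_\omega$, and write $a(\psi),a^\dagger(\psi)$ for the annihilation/creation operators associated with $K\psi\in H$, so that the footnote relation $[a(\psi),a^\dagger(\xi)]=\langle K\psi|K\xi\rangle$ holds. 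On the finite-particle domain $\mathcal{D}_\omega$ of~(iii) I set $\hat\phi(f)=a(Ef)+a^\dagger(Ef)$ and check that $\phi(f)\mapsto\hat\phi(f)$ respects the relations of Definition~\ref{def:ccr}: $\mathbb{R}$-linearity and Hermiticity are immediate, the Klein--Gordon relation holds because $(\Box_\Mb+m^2+\xi R)g\in Ker(E)$ by~\eqref{0}, and the commutator relation reduces to
\begin{equation*}
[\hat\phi(f),\hat\phi(g)] = \big(\langle K(Ef)|K(Eg)\rangle - \langle K(Eg)|K(Ef)\rangle\big)\II = i\,\tau(Ef,Eg)\,\II = iE(f,g)\,\II ,
\end{equation*}
using~(ii) of Proposition~\ref{propSK} and the identity $\tau(Ef,Eg)=E(f,g)$, which follows from the mutual-inverse relation between $\tau$ and $E$ in Proposition~\ref{prp:symp-pois}. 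The universal property of the presentation (Section~\ref{sec:genrel}) then yields $\pi_\omega$ with $\pi_\omega(\phi(f))=\hat\phi(f)$, and symmetry of each $\hat\phi(f)$ on $\mathcal{D}_\omega$ makes it a $*$-representation.

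Next I would set $\omega(a):=\langle\Psi_\omega|\pi_\omega(a)\Psi_\omega\rangle$. Positivity is automatic from $\omega(a^*a)=\|\pi_\omega(a)\Psi_\omega\|^2\ge0$ and normalization from $\omega(\II)=1$, so $\omega$ is a state. A direct vacuum computation, using $a(Eg)\Psi_\omega=0$, gives $\omega_2(f,g)=\langle K(Ef)|K(Eg)\rangle=\mu(Ef,Eg)+\tfrac{i}{2}E(f,g)$, which is~(a). The quasifree identities of Definition~\ref{defqfs} follow from the standard Wick/pairing expansion of $\langle\Psi_\omega|\hat\phi(f_1)\cdots\hat\phi(f_n)\Psi_\omega\rangle$ on Fock space: particle-number counting kills the odd-$n$ expectations, while for even $n$ only fully contracted terms survive, each contraction producing a factor $\omega_2$. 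To obtain~(b) I must verify cyclicity, $\pi_\omega(\Ac(\Mb))\Psi_\omega=\mathcal{D}_\omega$. Since $Ef$ ranges over all of $\Sol$, the $\hat\phi(f)$ furnish all Segal fields $a(K\psi)+a^\dagger(K\psi)$; applying products to $\Psi_\omega$ and recursively discarding lower-particle-number terms (possible because all commutators are multiples of $\II$) shows that their span equals the span of the pure creation vectors $a^\dagger(\psi_1)\cdots a^\dagger(\psi_n)\Psi_\omega$. Taking complex linear combinations and using that the complex span of $K(\Sol)$ is $K(\Sol)+iK(\Sol)$, dense in $H$ by~(i) of Proposition~\ref{propSK}, these vectors are dense in each symmetric sector and hence in $\mathcal{H}_\omega$. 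Thus $(\mathcal{H}_\omega,\mathcal{D}_\omega,\pi_\omega,\Psi_\omega)$ satisfies (i)--(v) of Theorem~\ref{GNS}(a), and by Theorem~\ref{GNS}(b) it is unitarily equivalent to the abstract GNS triple, giving~(b)(i)--(iv); regularity (essential self-adjointness of $\hat\phi(f)$) is the classical analytic-vector argument for Segal fields, citable from~\cite{BR,kw}.

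For part~(d), I would invoke the Proposition relating pure states to weak irreducibility of their GNS representation, together with the Kay--Wald criterion~\cite{kw} that the Fock representation built from $(K,H)$ is irreducible precisely when $K(\Sol)$ is dense in $H$ (the announced strengthening of~(i) of Proposition~\ref{propSK}). It then remains to prove the equivalence of this density with~\eqref{purecond}. Viewing $H$ as a real Hilbert space with inner product $\langle\cdot|\cdot\rangle_{\RR}=\Re\langle\cdot|\cdot\rangle$, I use~(ii) to write $\mu(\psi,\psi)=\|K\psi\|^2$ and $\tfrac{1}{2}\tau(\psi,\xi)=\Im\langle K\psi|K\xi\rangle=\langle iK\psi|K\xi\rangle_{\RR}$, so that (using injectivity of $K$) the right-hand side of~\eqref{purecond} equals
\begin{equation*}
\sup_{\xi\neq0}\frac{\big(\langle iK\psi|K\xi\rangle_{\RR}\big)^2}{\|K\xi\|^2} = \big\|P_{\overline{K(\Sol)}}\,(iK\psi)\big\|^2 ,
\end{equation*}
where $P_{\overline{K(\Sol)}}$ is the real-orthogonal projection onto the real closure of $K(\Sol)$. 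Since $\|iK\psi\|=\|K\psi\|$, condition~\eqref{purecond} holds for all $\psi$ if and only if $iK\psi\in\overline{K(\Sol)}$ for all $\psi$, i.e.\ the closed real subspace $\overline{K(\Sol)}$ is invariant under multiplication by $i$, hence is a complex subspace. Combined with the density of $K(\Sol)+iK(\Sol)$ this forces $\overline{K(\Sol)}=H$, while conversely density of $K(\Sol)$ trivially gives~\eqref{purecond}, closing the equivalence and hence~(d).

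The main obstacle is organizational rather than computational: essentially every hard analytic fact (existence of the one-particle structure, essential self-adjointness, and irreducibility versus density) is imported from~\cite{kw}, so the real work is the careful bookkeeping. The two genuinely delicate points are the cyclicity argument, where one must exploit that only $K(\Sol)+iK(\Sol)$, and \emph{not} $K(\Sol)$ itself, is dense, and the projection identity in~(d); both turn on keeping precise track of the real versus complex structure of $H$, which is exactly where the distinction between general and pure quasifree states lives.
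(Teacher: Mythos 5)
Your proposal is correct. Note, though, that the paper does not actually prove this theorem: its ``proof'' consists of a citation to Kay--Wald \cite{kw} (Lemma A.2, Proposition 3.1, and a comment on p.~77 there), plus the remark that essential self-adjointness of the operator in \eqref{phi} follows from analytic vectors \cite{BR}. Your write-up reconstructs precisely the argument the paper outsources, following the same underlying strategy as the cited source: build the bosonic Fock space over the one-particle structure $(K,H)$ of Proposition \ref{propSK}, represent the generators by the Segal fields $a(Ef)+a^\dagger(Ef)$, check the four relations of Definition \ref{def:ccr} (your key identity $\tau(Ef,Eg)=E(f,g)$ is exactly the mutual-inverse statement of Proposition \ref{prp:symp-pois}), verify cyclicity of the vacuum, and conclude (b) from the uniqueness clause of Theorem \ref{GNS}. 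Your two remaining imports --- analytic vectors for regularity, and purity $\Leftrightarrow$ weak irreducibility $\Leftrightarrow$ density of $K(\Sol)$ --- are the same ones the paper makes. What your version adds beyond the paper is the explicit equivalence of density of $K(\Sol)$ with \eqref{purecond}: the real-projection identity
\begin{equation*}
\sup_{\xi\neq 0}\frac{\bigl(\langle iK\psi|K\xi\rangle_{\RR}\bigr)^2}{\|K\xi\|^2}
=\bigl\|P_{\overline{K(\Sol)}}\,(iK\psi)\bigr\|^2 ,
\end{equation*}
together with $\|iK\psi\|=\|K\psi\|$, correctly forces $\overline{K(\Sol)}$ to be invariant under multiplication by $i$, hence a closed complex subspace containing the dense set $K(\Sol)+iK(\Sol)$, hence all of $H$; the converse direction is immediate. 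One small technical point to tighten: the universal property should be invoked with target not $\Lf({\cal D}_\omega)$ (which carries no involution) but the unital $*$-algebra of operators $A$ on ${\cal D}_\omega$ admitting adjoints defined on and preserving ${\cal D}_\omega$, with involution $A\mapsto A^\dagger|_{{\cal D}_\omega}$; the Segal fields belong to this algebra, so your argument goes through unchanged.
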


\begin{remark}$\null$\\
{\bf (1)}  $K$ is always injective because of (ii) in Proposition \ref{propSK}, since $\tau$ is non-degenerate.\\
{\bf (2)} The requirement~\eqref{4} is equivalent to saying that there is a bounded operator $J$ everywhere defined in the real Hilbert space obtained by taking the completion  ${\cal R}$ of $\Sol$ with respect to the 
real scalar product induced by $\mu$, such that $\frac{1}{2}\tau(\psi,\xi)= \omega_2(\psi,J\xi)$, for $\psi,\xi \in \Sol$, and $||J||\leq 1$.
It also holds that $J^\dagger=-J$. It is not so difficult to prove that the corresponding state $\omega$, as defined above, is pure if and only if $JJ=-I$, that is $J$ is anti unitary.  In this case  $({\cal R}, \mu, \frac{1}{2}\tau, J)$ defines an {\bf almost K\"ahler structure} on ${\cal R}$.
\end{remark}

\subsection{Existence of quasifree states in globally hyperbolic spacetimes}\label{existenceqfs}
In four-dimensional  Minkowski spacetime $\Mb   \stackrel {\mbox{\scriptsize  def}} {=} \mathbb M$, a distinguished real scalar product  $\mu$ on $\Sol \cong C_0^\infty(\mathbb M)/Ker(E)$\footnote{Recall that this isomorphism was established in Proposition~\ref{prp:symp-pois}, based on the well-posedness properties of the Klein-Gordon equation. From now one, we will be making use of this isomorphism implicitly.}  can easily be defined as follows in a Minkowski reference frame with coordinates $(t, \vec{x}) \in \mathbb R \times \mathbb R^3$. Consider $f\in C_0^\infty(\mathbb M)$ and the associated solution of KG equation $\psi_f   \stackrel {\mbox{\scriptsize  def}} {=} Ef$:
\begin{eqnarray}
\psi_f(t, \vec{x}) = \int_{\mathbb R^3}  \frac{\phi_f({\vec k}) e^{i\vec{x}\cdot\vec{k} -i t E(\vec{k}) }  + 
 \overline{\phi_f({\vec k})} e^{-(i\vec{x}\cdot\vec{k} - i t E(\vec{k})) }}{(2\pi)^{3/2}\sqrt{2E(\vec{k})}}\:
d\vec{k}
\end{eqnarray}
where $E(\vec{k})  \stackrel {\mbox{\scriptsize  def}} {=} \sqrt{\vec{k}^2 + m^2}$ (we assume here $m>0$) and $\phi_f \in {\cal S}(\mathbb R^3)$ (the Schwartz test function space)
is obtained by the smooth compactly supported Cauchy data of $\psi_f$ on the Cauchy surface defined by $t=0$.
If defining
\begin{eqnarray}\label{muM}\mu_{\mathbb M}([f],[f'])   \stackrel {\mbox{\scriptsize  def}} {=} Re \int_{\mathbb R^3}  \overline{\phi_f(\vec{k})} \phi_{f'}(\vec{k})d\vec{k}\end{eqnarray}
we obtain a well defined real scalar product on  $\Sol  \cong {\cal E}$ which satisfies (\ref{3'}) as can be proved by direct inspection with elementary computations. The arising quasifree state $\omega_{\mathbb M}$ is nothing but the {\bf Minkowski vacuum} and we find the standard QFT free theory for a real scalar field in  Minkowski spacetime.  The integral kernel of $\omega_{\mathbb M}$ in this case is a proper distribution of ${\cal D}'(\mathbb R^4\times \mathbb R^4)$  and reads
\begin{eqnarray}\omega_{{\mathbb M}2}(x,y) = w\mbox{-}\lim_{\epsilon \to 0^+}  \frac{m^2}{(2\pi)^2}
\frac{K_1\left( m \sqrt{(|\vec{x}-\vec{y}|^2 - (t_x-t_y- i\epsilon)^2)}\right) }{ m \sqrt{|\vec{x}-\vec{y}|^2 - (t_x-t_y- i\epsilon)^2}} \label{W}\end{eqnarray}
where the {\em weak} limit is understood in the standard distributional sense and the branch cut in the 
complex plane to uniquely define the analytic functions appearing in (\ref{W}) is assumed to stay along the negative real axis. 
Another equivalent expression for $\omega_{\mathbb M2}$ is given in terms of Fourier transformation of distributions,
\begin{eqnarray}
\omega_{{\mathbb M}2}(x,y) = \frac{1}{(2\pi)^3} \int_{\mathbb R^4} e^{-ip(x-y)} \theta(p^0) \delta(p^2 + m^2) d^4p \label{fourieromega}\:.
\end{eqnarray}
where $px = p^0x^0 - \sum_{j=1}^3 p^jx^j$ is the Minkowski scalar product.
The above formula is convenient for showing the following important property of $\omega_{{\mathbb M}2}$.
\begin{proposition}\label{prp:mink-smooth}
If $f\in C^\infty_0(\Mc)$, then $\omega_{{\mathbb M}2}(x,f)$ and $\omega_{{\mathbb M}2}(f,y)$ are smooth.
\end{proposition}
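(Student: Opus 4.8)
The plan is to exploit the Fourier representation~\eqref{fourieromega}, which exhibits $\omega_{{\mathbb M}2}$ as the Fourier transform of a measure concentrated on the forward mass shell. Pairing this distribution in the $y$-variable against $f$ turns the problem into that of estimating an ordinary, absolutely convergent integral over $\vec p\in\mathbb R^3$, whose integrand inherits rapid decay from the smoothness and compact support of $f$. Smoothness in $x$ then follows by repeated differentiation under the integral sign, the key point being that the exponential $e^{-ipx}$ stays bounded because the mass shell is real.

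First I would set $\tilde f(p) \stackrel{\text{def}}{=}\int_{\mathbb M} e^{ipy}f(y)\,d^4y$. Since $f\in C_0^\infty(\Mc)$, the Paley--Wiener theorem guarantees that $\tilde f$ is a Schwartz function (in fact the restriction to $\mathbb R^4$ of an entire function of exponential type). Inserting this into~\eqref{fourieromega} gives
\begin{equation*}
	\omega_{{\mathbb M}2}(x,f) = \frac{1}{(2\pi)^3}\int_{\mathbb R^4} e^{-ipx}\,\tilde f(p)\,\theta(p^0)\,\delta(p^2+m^2)\,d^4p .
\end{equation*}
The measure $\theta(p^0)\delta(p^2+m^2)$ is supported on the single sheet $p^0=\varepsilon(\vec p)\stackrel{\text{def}}{=}\sqrt{|\vec p|^2+m^2}$, which projects diffeomorphically onto $\mathbb R^3$ through $\vec p$. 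Integrating out $p^0$ against the delta function therefore produces the genuinely convergent representation
\begin{equation*}
	\omega_{{\mathbb M}2}(x,f) = \frac{1}{(2\pi)^3}\int_{\mathbb R^3} e^{-i\varepsilon(\vec p)x^0+i\vec p\cdot\vec x}\,\tilde f\big(\varepsilon(\vec p),\vec p\big)\,\frac{d^3\vec p}{2\varepsilon(\vec p)} .
\end{equation*}
I would justify this reduction by recalling that $\omega_{{\mathbb M}2}$ is tempered and $f\in\mathcal S(\mathbb M)$, so the pairing is well defined, while the restriction to the mass shell yields an absolutely convergent integrand, legitimizing the interchange.

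To conclude smoothness I would differentiate under the integral sign. Applying $\partial_x^\alpha$ pulls down a factor that is polynomial in $(\varepsilon(\vec p),\vec p)$, hence of at most polynomial growth in $|\vec p|$, whereas $|e^{-ipx}|=1$ for real $x$ and $1/(2\varepsilon(\vec p))\le 1/(2m)$ is bounded. On the mass shell one has $|p|^2_{\mathrm{eucl}}=\varepsilon(\vec p)^2+|\vec p|^2\ge|\vec p|^2$, so the Schwartz estimates on $\tilde f$ give $|\tilde f(\varepsilon(\vec p),\vec p)|\le C_N(1+|\vec p|)^{-N}$ for every $N$, and similarly for its derivatives. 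Thus, for each multi-index $\alpha$, the differentiated integrand is dominated \emph{uniformly in} $x\in\mathbb R^4$ by an integrable function of $\vec p$, and the standard theorem on differentiation under the integral shows that $x\mapsto\omega_{{\mathbb M}2}(x,f)$ is $C^\infty$. The claim for $\omega_{{\mathbb M}2}(f,y)$ is completely analogous, obtained either by pairing in the first slot or by conjugation together with $\overline{\omega_{{\mathbb M}2}(x,y)}=\omega_{{\mathbb M}2}(y,x)$.

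The only genuinely delicate step is the reduction of the distributional $\mathbb R^4$-integral to the convergent $\mathbb R^3$-integral on the mass shell: one must check that integration against $\theta(p^0)\delta(p^2+m^2)$ is permissible and that the resulting Fubini-type interchange is valid. Everything else is bookkeeping. The essential role of the hypothesis is that $f\in C_0^\infty$ forces $\tilde f$ to be Schwartz, so that its restriction to the mass shell decays faster than any polynomial; this rapid decay is exactly what defeats the polynomial growth produced by the $x$-derivatives. Had $f$ been merely square-integrable, for instance, $\tilde f$ would only be $L^2$ and the argument would break down.
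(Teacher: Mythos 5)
Your proof is correct and follows essentially the same route as the paper's: both insert the Fourier representation~\eqref{fourieromega}, use that $\hat f$ is Schwartz, integrate out $p^0$ against $\theta(p^0)\delta(p^2+m^2)$ to obtain an absolutely convergent integral over the mass shell, and conclude smoothness by differentiating under the integral sign (the paper leaves this last step, which you spell out, as ``easy to see''). The conjugation argument for $\omega_{{\mathbb M}2}(f,y)$ is also the same.
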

\begin{proof}
Let $\hat{f}(p) = \int_{\mathbb{R}^4} e^{i p y} f(y) \, d^4y$. Since $f\in C^\infty_0(\Mc)$, $\hat{f}$ must be a Schwartz function. Then, since $\omega_{{\mathbb M}2}(f,y) = \overline{\omega_{{\mathbb M}2}(y,f)}$, it is enough to consider
\begin{align*}
	\omega_{{\mathbb M}2}(x,f)
	&= \frac{1}{(2\pi)^3}
		\int_{\mathbb{R}^4} d^4 p \, e^{-i p x} \theta(p^0) \delta(p^2 + m^2)
		 \int_{\mathbb{R}^4} d^4 y \, f(y) e^{i p y} \\
	&= \frac{1}{(2\pi)^3} \int_{\mathbb{R}^3} d\mathbf{k} \,
		e^{-i p_{\mathbf{k}} x} \frac{\hat{f}(p_{\mathbf{k}})}{\sqrt{\mathbf{k}^2 + m^2}} \: ,
\end{align*}
where $p_{\mathbf{k}} = \left(\sqrt{\mathbf{k}^2 + m^2}, \mathbf{k}\right)$. Since $\hat{f}$ is Schwartz, so is the above integrand. It is then easy to see from this integral representation that $\omega_{{\mathbb M}2}(x,f)$ is smooth. $\Box$
\end{proof}
In view of the definition of quasifree state Definition \ref{defqfs}, all the $n$-point functions of $\omega_{\mathbb M}$ are distributions of ${\cal D}'((\mathbb R^4)^n)$.
It turns out that  the associated one-particle structure $(H_{\mathbb M}, K_{\mathbb M})$ is 
$$H_{\mathbb M} = L^2(\mathbb R^3, d\vec{k})\:, \quad K_{\mathbb M} : \Sol \ni \psi_f \mapsto \phi_f  \in L^2(\mathbb R^3, d\vec{k}))$$
The condition in part (d) of Theorem~\ref{teoE} is true and thus Minkowski vacuum is pure. In spite of the Poincar\'e non-invariant approach, the pictured procedure leads to a Poincar\'e invariant structure as we shall see later.

More generally, a natural {\em pure quasifree} state $\omega_\zeta$ exists as soon as the globally hyperbolic spacetime admits a time-like Killing field $\zeta$, i.e., in {\em stationary} spacetimes, provided $m>0$, $\xi=0$ and  it when holds $g(\zeta,\zeta) \geq c>0$ uniformly on a  smooth space-like Cauchy surface, for some constant  $c$ 
\cite[\textsection 4.3]{wald94}. In that case, a $\zeta$-invariant Hermitian  scalar product   can be  constructed out of a certain  auxiliary Hermitian scalar product  $(\psi_f| \psi_g)$ induced by the stress energy  tensor $$T_{ab}(\overline{\psi_f} ,\psi_g)   \stackrel {\mbox{\scriptsize  def}} {=}
\frac{1}{2}\left(\nabla_{a}\overline{\psi_f} \nabla_{b}\psi_g + \nabla_{b}\overline{\psi_f}  \nabla_{a}\psi_g  \right)
 -\frac{1}{2} g_{ab} \left(\nabla^c \overline{\psi_f}  \nabla_c \psi_g - m^2 \overline{\psi_f}  \psi_g  \right)
$$
 evaluated on solutions $\psi_f,\psi_g \in \Sol + i \Sol \stackrel {\mbox{\scriptsize  def}}{=} \Sol_{\mathbb C}$ of KG equation and contracted with $\zeta$ itself.
$$( \psi_f |\psi_g )  \stackrel {\mbox{\scriptsize  def}} {=} \int_{\Sigma} T^{ab}(\overline{\psi_f} ,\psi_g) n_a \zeta_b \: d\Sigma\:.$$
This positive Hermitian form does not depend on the Cauchy surface $\Sigma$ and is $\zeta$-invariant in view of the Killing equation for $\zeta$ and $\nabla^aT_{ab}(\overline{\psi_f},\psi_g)=0$ which holds as a consequence of KG equations for $\psi_f$ and $\psi_g$. This Hermitian scalar product gives rise to a complex Hilbert space ${\cal H}_0$ obtained by taking the completion of $\Sol_{\mathbb C}$. It turns out that the time evolution generated by $\zeta$ in $\Sol_{\mathbb C}$ is implemented by a strongly continuous unitary group on 
${\cal H}_0$, with self-adjoint generator $H$. The spectrum of $H$ is bounded away from zero and thus $E^{-1}$ exists as a bounded, everywhere defined, operator on ${\cal H}_0$.
Let ${\cal H}_{0}^+$ be the positive spectral closed subspace of $h$ and let $P_+: {\cal H}_0 \to {\cal H}_0^+$ be the corresponding orthogonal projector. The distinguished scalar product defining the quasifree pure (because (\ref{purecond}) holds) state $\omega_\zeta$ is finally defined by means of the real scalar
product
\begin{eqnarray} \mu_\zeta(\psi, \psi')   \stackrel {\mbox{\scriptsize  def}} {=} Re \left( P_+\psi \left| \frac{1}{2}H^{-1}P_+\psi' \right. \right) \quad \psi, \psi' \in \Sol\:. \label{muM2}\end{eqnarray}
 This procedure can be viewed as a  rigorous version of the popular one based of positive frequency mode decomposition  with respect to the notion of time associated to $\zeta$ in particular because it can easily be proved that $$H \psi = i \zeta^a \partial_a\psi$$
when $\psi \in \Sol_{\mathbb C}$. Therefore $P_+\psi$ entering the right hand side of  (\ref{muM2}) contains  ``positive frequencies'' only, since $P_+$ project on the positive part of the spectrum of the energy $H$.
The state $\omega_\zeta$ coincides with the Minkowski vacuum in Minkowski spacetime when $\zeta = \partial_t$ with respect to any Minkowski coordinate system. This result has a well-known \cite{wald94} important consequence.

\begin{theorem}[Existence of quasifree states] Consider a globally hyperbolic spacetime $\Mb$ and assume that $\xi=0$ and $m>0$ in the definition of $\Ac(\Mb)$. There exist quasifree states on  $\Ac(\Mb)$.
\end{theorem}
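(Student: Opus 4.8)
The plan is to reduce the general case to the stationary one already handled by the scalar product $\mu_\zeta$ of~\eqref{muM2}, and then transport the resulting state back to $\Mb$ by a spacetime deformation. Recall that, by part (a) of Theorem~\ref{teoE}, producing a quasifree state on $\Ac(\Mb)$ is the same as exhibiting a single real scalar product $\mu$ on $\Sol$ obeying the bound~\eqref{4}; equivalently, it suffices to exhibit any quasifree state on a $*$-algebra isomorphic to $\Ac(\Mb)$ and pull it back. The hypotheses $m>0$ and $\xi=0$ are exactly those needed for the auxiliary stationary construction, and $\xi=0$ will also be essential below, since it is what makes the Klein--Gordon operator $\Box_\Mb+m^2$ insensitive to changes of the scalar curvature $R$ under a deformation of the metric.

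First I would fix a smooth spacelike Cauchy surface $\Sigma\subset\Mc$ and invoke the standard deformation argument (due to Fulling--Narcowich--Wald, see~\cite{wald94}): on the manifold $\RR\times\Sigma$ one can build a globally hyperbolic metric $g'$, with the field parameters $m,\xi$ unchanged, such that $(\RR\times\Sigma,g')$ is isometric to a neighbourhood of $\Sigma$ in $\Mb$ to the future of one Cauchy surface $\Sigma_+$ and is \emph{ultrastatic} (metric $dt^2-h$ with $h$ a $t$-independent Riemannian metric on $\Sigma$) to the past of another Cauchy surface $\Sigma_-$. Denote by $\Mb'$ this interpolating spacetime and by $\Mb_{\mathrm{us}}$ the ultrastatic spacetime it matches in the past. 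On $\Mb_{\mathrm{us}}$ the vector field $\zeta=\partial_t$ is Killing and timelike with $g'(\zeta,\zeta)=1\geq c>0$ uniformly, so the construction culminating in~\eqref{muM2} applies verbatim (here $\xi=0$, $m>0$) and yields a quasifree state $\omega_{\mathrm{us}}$ on $\Ac(\Mb_{\mathrm{us}})$.

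It then remains to move $\omega_{\mathrm{us}}$ to $\Ac(\Mb)$. A neighbourhood $O_-$ of $\Sigma_-$ is isometric in $\Mb'$ and in $\Mb_{\mathrm{us}}$, and a neighbourhood $O_+$ of $\Sigma_+$ is isometric in $\Mb'$ and in $\Mb$; in each case the field generators $\phi(f)$ with $\supp f$ in the common region are identified, and their commutators agree because $E(f,g)$ for $f,g$ localized near a Cauchy surface depends only on the geometry near that surface (cf.~\eqref{simpsol} and Proposition~\ref{iff}). By the time-slice axiom (Proposition~\ref{tsa}) each such collection of localized generators already generates the whole CCR algebra, so these identifications extend to $*$-isomorphisms $\Ac(\Mb)\cong\Ac(\Mb')\cong\Ac(\Mb_{\mathrm{us}})$ carrying field generators to field generators. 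Pulling $\omega_{\mathrm{us}}$ back along the composite isomorphism gives a state $\omega$ on $\Ac(\Mb)$; since the isomorphism is linear on the generators and $\omega_{\mathrm{us}}$ satisfies requirements (a)--(b) of Definition~\ref{defqfs}, so does $\omega$, whence $\omega$ is quasifree.

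The main obstacle is the geometric input, namely the existence of the globally hyperbolic interpolating metric $g'$ with the stated past/future matching; this is standard but genuinely nontrivial and I would simply quote it from~\cite{wald94}. A secondary technical point is the careful bookkeeping of the time-slice identifications, which rests on Proposition~\ref{iff} and the causal support properties of $E$. I should note that the deformation can in fact be bypassed: one may instead take $\Sigma$ with future unit normal $n$ and define $\mu$ directly from the classical energy density, $\mu(\psi,\xi)\propto\int_\Sigma T_{ab}(\psi,\xi)\,n^a n^b\,d\Sigma$, which for $\xi=0$ is manifestly positive and, using $-\Delta_\Sigma+m^2\geq m^2>0$ together with the Cauchy--Schwarz inequality on $\Sigma$, satisfies the bound~\eqref{4} after rescaling by a suitable constant; Theorem~\ref{teoE}(a) then delivers a (in general non-Hadamard) quasifree state without any stationarity or deformation.
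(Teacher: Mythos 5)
Your main argument is correct and is essentially the paper's own proof: deform the spacetime in the past of a Cauchy surface so that it becomes (ultra)static there, build the stationary quasifree state via the scalar product~\eqref{muM2}, and carry it back to $\Ac(\Mb)$ using the identifications of algebras furnished by Propositions~\ref{iff} and~\ref{tsa}. Your closing remark is worth noting, though: defining $\mu$ directly as a suitable constant multiple of the classical energy form on a Cauchy surface does satisfy the bound~\eqref{4} when $\xi=0$ and $m>0$ (dropping the spatial-gradient term, Cauchy--Schwarz gives $\tfrac14|\tau(\psi,\xi)|^2 \le \tfrac{2}{m^2}\,\mathcal{E}(\psi)\mathcal{E}(\xi)$ for the energy $\mathcal{E}$, so the rescaling $\mu=\tfrac{\sqrt2}{m}\mathcal{E}$ works), and Theorem~\ref{teoE}(a) then yields a quasifree state with no deformation and no stationarity assumption at all; this is a genuinely more elementary route to bare existence, at the price of producing a state with no claim to the Hadamard property or to any invariance, which is what the deformation route is designed to secure in later sections.
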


\noindent {\em Sketch of proof}.
Take a smooth space-like Cauchy surface $\Sigma \subset \Mc$.  It is always possible to smoothly deform $\Mc$ in the past of $\Sigma$ 
obtaining an overall globally hyperbolic spacetime still admitting $\Sigma$ as a Cauchy surface and such that 
 the open past of $\Sigma$, $\Mb^-$, (in the deformed  spacetime) has the following property.
There is a second Cauchy surface $\Sigma_1$ in $\Mb^-$ whose open past  $\Mb_1^-$ includes a smooth time-like Killing field $\zeta$ satisfying the sufficient requirements for defining and associate quasifree state $\omega_\zeta$ on ${\cal A}(\Mb_1^-)$.  However, if $\Mb^+$ denotes the 
open future of $\Sigma$ (in the original spacetime), Propositions \ref{iff} and \ref{tsa} easily imply that
${\cal A}(\Mb^+) = {\cal A}(\Mb^-) =   {\cal A}(\Mb_1^-)$. 
Therefore $\omega_\zeta$ is a  state on ${\cal A}(\Mb^+) =  {\cal A}(\Mb)$.  Again  Propositions \ref{iff} and \ref{tsa} and the very definition of quasifree state easily prove that $\omega_\zeta$ is quasifree on ${\cal A}(\Mb)$ if it is quasifree on ${\cal A}(\Mb_1^-)$. $\Box$

\subsection{Unitarily inequivalent quasifree states gravitationally produced}
\label{sec:uni-ineq}
Coming back to what already pronounced in  (6) in Remark \ref{rem1}, we have the following definition.

\begin{definition} Two states $\omega_1$ and $\omega_2$ on ${\cal A}(\Mb)$ and the respective GNS representations are said to be {\bf unitarily equivalent}\footnote{It should be evident that the given definition does not depend on the particular GNS representation chosen for each state $\omega_i$.}  if there is an isometric surjective operator $U: {\cal H}_{\omega_1} \to {\cal H}_{\omega_2}$ such that $U\hat{\phi}_{\omega_1}(f) U^{-1} = \hat{\phi}_{\omega_2}(f)$ for every $f \in C_0^\infty(\Mc)$. 
\end{definition}

\begin{remark}
Notice that it is not necessary that $U\Psi_{\omega_1} = \Psi_{\omega_2}$ and it generally does not happen. As a consequence ${\cal H}_{\omega_2}$ includes vector states {\em different from the Fock vacuum} which are however quasifree.
\end{remark}
The question if a pair of states are unitarily equivalent 
naturally arises in the following situation.
 Consider a time-oriented globally hyperbolic spacetime $\Mb$ such that, in the future of a Cauchy surface $\Sigma_+$, the spacetime  is stationary with respect to the Killing vector field $\xi_+$ and it is also stationary in the past of another Cauchy surface $\Sigma_-$, in the past of $\Sigma_+$, referring to another Killing vector field $\xi_-$.  For instance we can suppose that $\Mb$ coincides to (a portion of) Minkowski spacetime in the two mentioned stationary regions and a gravitational curvature bump takes place between them.
This way, two preferred quasifree states $\omega_+$ and $\omega_-$ turn out to be defined on the whole 
algebra ${\cal A}(\Mb)$, not only in the algebras of observables localized in the two respective static regions regions.   The natural question is whether or not the GNS representations of $\omega_+$ and $\omega_-$ are unitarily equivalent, so that, in particular,  the state $\omega_-$ can be represented as a vector state $U\Psi_{\omega_-}$ in the 
Hilbert space ${\cal H}_{\omega_+}$ of the state $\omega_{+}$.  Notice that, even in the case the isometric surjective operator $U$ exists making the representations unitarily equivalent,
$U\Psi_{\omega_-} \neq  \Psi_{\omega_+}$ in general, so that $U\Psi_{\omega_-}$ may have non-vanishing projection in the subspace containing states with  $n$ particles in ${\cal H}_{\omega_+}$. This phenomenon is physically interpreted as {\em creation of particles due to the gravitational field} and $U$ has the natural interpretation of an $S$ {\em matrix}.\\
The following crucial result holds for pure quasifree states \cite{wald94}.
A more general result appears  in  \cite{Verch}  since it avoids the assumption that the states are pure and it deals with the notion of {\em quasiequivalence} of quasifree states. Quasiequivalence is weaker notion of equivalence, which essentially corresponds to unitary equivalence ``up to multiplicity''~\cite[Sec.2.4.4]{BR}. In particular, quasiequivalence reduces to unitary equivalence for irreducible representations, as for instance those induced by pure states.

\begin{theorem}[Unitary equivalence of pure quasifree states]
If $\Mb$ is a globally hyperbolic spacetime, consider two pure quasifree states $\omega_1$ and $\omega_2$ on ${\cal A}(\Mb)$ respectively induced by the scalar product $\mu_1$ and $\mu_2$ on $\Sol(\Mb) \cong {\cal E}$ and indicate by ${\cal R}_{\mu_1}$ and ${\cal R}_{\mu_2}$ the real Hilbert spaces obtained by respectively completing $\Sol$. \\
The pure states $\omega_1$ and $\omega_2$ may be unitarily equivalent only if they induce equivalent norms on 
$\Sol$, that is there are constants $C,C'>0$ with 
$$C\mu_1(x,x) \leq \mu_2(x,x) \leq C'\mu_1(x,x) \quad \forall x \in \Sol\:.$$
When the condition is satisfied  there is a unique bounded operator   $Q : {\cal R}_{\mu_1} \to {\cal R}_{\mu_1} $
such that 
$$\mu_1(x, Q y)= \mu_2(x,y)-\mu_1(x,y)\quad \forall x,y \in \Sol\:.$$
 In this case $\omega_1$ and $\omega_2$ are unitarily equivalent if and only if $Q$ is Hilbert-Schmidt%
	\footnote{Note that this result is stated incorrectly in Theorem~4.4.1
	of~\cite{wald94}, where the condition on the operator $Q$ is
	incorrectly given as \emph{trace class} instead of
	\emph{Hilbert-Schmidt}. The correct condition is actually given in
	Equation~(4.4.21) of~\cite{wald94} as the Hilbert-Schmidt property of
	the operator $\mathcal{E}$ and the mistake appears in identifying the
	corresponding property of $Q$. We thank Rainer Verch and especially Ko
	Sanders for bringing this to our attention.}
in ${\cal R}_{\mu_1}$.
\end{theorem}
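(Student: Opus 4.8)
The plan is to reduce the statement to the unitary implementability of a Bogoliubov transformation between the two Fock representations and then apply Shale's theorem. By Theorem~\ref{teoE} together with the almost K\"ahler description in Remark~(2) following it, each pure quasifree state $\omega_i$ realizes its one-particle structure as a complex Hilbert space structure on the completion ${\cal R}_{\mu_i}$ of $\Sol$: there is a bounded operator $J_i$ on ${\cal R}_{\mu_i}$ with $J_i^\dagger = -J_i$ and $J_i^2 = -I$ (the latter being equivalent to purity of $\omega_i$), such that $\mu_i(x,J_i y) = \tfrac{1}{2}\tau(x,y)$ and the one-particle inner product is $\langle x,y\rangle_i = \mu_i(x,y) + \tfrac{i}{2}\tau(x,y)$, with $K_i$ the inclusion $\Sol\hookrightarrow{\cal R}_{\mu_i}$. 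The GNS representation is then the Fock representation over $H_i := ({\cal R}_{\mu_i},J_i)$, and since the states are regular (part~(iv) of Theorem~\ref{teoE}) the field operators exponentiate to Weyl unitaries. A unitary $U$ with $U\hat{\phi}_{\omega_1}(f)U^{-1}=\hat{\phi}_{\omega_2}(f)$ for all $f$ exists precisely when the real-linear symplectic transition map identifying the two one-particle structures over the common dense domain $\Sol\subset{\cal R}_{\mu_1}\cap{\cal R}_{\mu_2}$ is unitarily implementable on Fock space.

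First I would establish the necessity of norm equivalence. A unitarily implementable Bogoliubov transformation is, in particular, a bounded real-linear isomorphism $T\colon H_1\to H_2$ with bounded inverse; since $T$ restricts to the identity on $\Sol$, its two-sided boundedness is exactly the pair of estimates $C\mu_1(x,x)\le\mu_2(x,x)\le C'\mu_1(x,x)$. This is the step I expect to be the main obstacle, as it requires controlling the implemented transformation at the one-particle level and excluding equivalences arising from genuinely inequivalent norms; I would argue it by analysing the action of $U$ on the one-particle subspace together with the boundedness forced by Shale's criterion (cf.~\cite{wald94,Verch}).

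Granting norm equivalence, ${\cal R}_{\mu_1}$ and ${\cal R}_{\mu_2}$ coincide as a single topological vector space ${\cal R}$ carrying two equivalent inner products. The symmetric bilinear form $\mu_2-\mu_1$ is then bounded on ${\cal R}_{\mu_1}$ (by Cauchy--Schwarz and the domination $\mu_2\le C'\mu_1$), so the Riesz lemma produces a \emph{unique} bounded self-adjoint operator $Q$ on ${\cal R}_{\mu_1}$ with $\mu_1(x,Qy)=\mu_2(x,y)-\mu_1(x,y)$, uniqueness coming from the non-degeneracy of $\mu_1$; moreover $I+Q>0$ since $\mu_2>0$. Rewriting $\mu_2(x,z)=\mu_1(x,(I+Q)z)$ and comparing $\mu_1(x,J_1 y)=\tfrac{1}{2}\tau(x,y)=\mu_2(x,J_2 y)=\mu_1(x,(I+Q)J_2 y)$, non-degeneracy yields the operator identity $J_1=(I+Q)J_2$ on ${\cal R}$.

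Finally I would carry out the Shale computation. Decomposing the transition map $T=I\colon H_1\to H_2$ into its complex-linear part $T_+=\tfrac{1}{2}(I-J_2 J_1)$ and its complex-antilinear part $T_-=\tfrac{1}{2}(I+J_2 J_1)$, and substituting $J_2 J_1=J_2(I+Q)J_2=-I+J_2 Q J_2$, I obtain $T_-=\tfrac{1}{2}J_2 Q J_2$. Since $J_2$ is bounded with bounded inverse $-J_2$ and the Hilbert--Schmidt operators form a two-sided ideal, $T_-$ is Hilbert--Schmidt if and only if $Q$ is; furthermore, norm equivalence makes membership in the Hilbert--Schmidt class independent of whether it is measured with respect to $\mu_1$ or $\mu_2$. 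Shale's theorem then asserts that the Bogoliubov transformation is unitarily implementable---equivalently, that $\omega_1$ and $\omega_2$ are unitarily equivalent---if and only if $T_-$ is Hilbert--Schmidt, which holds if and only if $Q$ is Hilbert--Schmidt in ${\cal R}_{\mu_1}$, as claimed.
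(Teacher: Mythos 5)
The paper itself does not prove this theorem: it is imported from the literature, with the proof deferred to \cite{wald94} (Theorem~4.4.1, modulo the trace-class/Hilbert--Schmidt correction recorded in the footnote) and, for the more general quasi-equivalence statement, to \cite{Verch}. Your overall strategy --- encode each pure state as a complex structure $J_i$ on the completion of $\Sol$ (the almost K\"ahler picture of the Remark following Theorem~\ref{teoE}), reduce unitary equivalence to implementability of the transition map, and invoke Shale's criterion --- is precisely the route of that literature, and the algebra in your last two paragraphs is correct: boundedness of $\mu_2-\mu_1$ in the $\mu_1$-norm plus the Riesz lemma gives the unique bounded self-adjoint $Q$ with $I+Q>0$; non-degeneracy of $\mu_1$ gives $J_1=(I+Q)J_2$; hence the antilinear part of the transition map is $T_-=\tfrac{1}{2}J_2QJ_2$, which is Hilbert--Schmidt iff $Q$ is, because $J_2$ is bounded with bounded inverse and the Hilbert--Schmidt operators form a two-sided ideal.

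The genuine gap is the necessity of norm equivalence, which you flag but do not close, and your proposed route for closing it is circular. You assert that a unitarily implementable Bogoliubov transformation ``is, in particular, a bounded real-linear isomorphism $T\colon H_1\to H_2$ with bounded inverse''; but a priori the transition map is only the identity on the dense subspace $\Sol$, and the statement that it extends to a bounded invertible map between the two completions \emph{is} the norm equivalence to be proven --- it is not part of the definition of unitary equivalence of the representations. Nor can boundedness be ``forced by Shale's criterion,'' since that criterion is a statement about maps already known to be bounded. Finally, ``analysing the action of $U$ on the one-particle subspace'' does not work as stated: as the paper itself remarks after the definition of unitary equivalence, $U\Psi_{\omega_1}\neq\Psi_{\omega_2}$ in general, so $U$ does not carry the one-particle subspace of the Fock space over $H_1$ into that over $H_2$, and there is no one-particle-level map to analyse until boundedness is established. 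The actual argument (Wald's Section~4.4, or the Araki--Yamagami quasi-equivalence criterion underlying \cite{Verch}) proceeds differently: unitary equivalence makes $\omega_2$ a vector state in the folium of $\omega_1$, and one must then extract two-sided bounds between $\mu_2$ and $\mu_1$ from the structure of such vector states (number-operator estimates in the Fock space over $H_1$). That is where the real analytic work of the theorem lies, and it is absent from your proposal.
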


\noindent In general, the said condition fails when $\omega_1$ and $\omega_2$ are stationary states associated with two 
stationary regions (in the past and in the future)  of a spacetime, as discussed in the introduction of this section \cite{wald94}, \cite[Ch.7]{fulling}.  It happens in particular when the Cauchy surfaces have infinite volume.
In this case the states turn out to be unitarily inequivalent. On the other hand  there is no natural preferred choice between $\omega_+$ and $\omega_-$ and this fact suggests that  the algebraic formulation is more useful in QFT in curved spacetime than the, perhaps more familiar, formulation in a Hilbert space.

\subsection{States invariant under the action of spacetime  symmetries}\label{statessym}
The quasifree state $\omega_\zeta$ on ${\cal A}(\Mb)$ mentioned to exist above for stationary globally hyperbolic spacetimes for massive scalar fields  is {\em invariant} under the action of $\zeta$ (which we assume to be complete for the sake of simplicity) in the following sense.  Just because $\zeta$ is a Killing field (and the subsequent construction does depend on the fact that $\zeta$ is time-like), the action of the one-parameter group of isometries  $\{\chi^{(\zeta)}_t\}_{t \in \mathbb R}$ generated by $\zeta$ leaves $\Sol$ invariant. This is equivalent to saying that when $\{\chi^{(\zeta)}_t\}_{t \in \mathbb R}$ acts on $C_0^\infty(\Mc)$ it  preserves $E$ (the commutation relations of quantum fields  are consequently preserved in particular).
In view of Proposition \ref{propind}, a one-parameter group of $*$-algebra isomorphisms 
$\alpha^{(\zeta)}_t : {\cal A}(\Mb) \to {\cal A}(\Mb)$
arises this way, completely defined by the requirement beyond the obvious  $\alpha_t(\II) = \II$ if $t\in \mathbb R$ and
$$\alpha^{(\zeta)}_t \left( \phi(f)\right)   \stackrel {\mbox{\scriptsize  def}} {=} \phi\left(  f\circ \chi^{(\zeta)}_{-t}\right)\:, \quad t \in \mathbb R\:, \quad f \in C_0^\infty(\Mc) \:.$$
 It turns out that, if $\omega_\zeta$ is constructed by the procedure above mentioned when $\zeta$ is time-like,
$\omega_\zeta$ is $\zeta$-invariant in the sense that: 
\begin{eqnarray}\omega_\zeta \circ  \alpha^{(\zeta)}_t = \omega_\zeta \quad \forall t \in \mathbb R\:.\label{inv}\end{eqnarray}
When passing to the GNS representation, Proposition \ref{propsymm} implies that there is a one-parameter group of unitary operators such that

(i) \:\: $U^{(\zeta)}_t \Psi_{\omega_\zeta} = \Psi_{\omega_\zeta}$ ,\:  $U^{(\zeta)}_t({\cal D}_{\omega_\zeta})= 
{\cal D}_{\omega_\zeta}$\:,

(ii) \:\:  $U^{(\zeta)}_t \pi_{\omega_\zeta}(a) U^{(\zeta)*}_t   \stackrel {\mbox{\scriptsize  def}} {=} \pi_{\omega_\zeta}\left(\alpha^{(\zeta)}_t(a)\right)$
for all $t\in \mathbb R$ and $a \in {\cal A}(\Mb)$.\\
Moreover we know that $\{U^{(\zeta)}_t\}_{t\in \mathbb R}$ is {\em strongly continuous} if and only if 
$$\lim_{t\to 0} \omega_\zeta\left(a^* \alpha^{(\zeta)}_t(a) \right) = \omega_\zeta(a^*a)\:,\quad \forall a\in {\cal A}(\Mb)\:.$$
In this case, Stone's theorem entails that there is a unique self-adjoint operator $H^{(\zeta)}$ with $e^{-itH^{(\zeta)}} = U^{(\zeta)}_t$ for every $t \in \mathbb R$ and $H^{(\zeta)}\Psi_{\omega_\zeta} =0\:.$ 
If $\sigma(H^{(\zeta)}) \subset [0,+\infty)$ and 
$\Psi_{\omega_\zeta}$ is, up to factors, the unique eigenvector of $H^{(\zeta)}$ with eigenvalue $0$, $\omega_\zeta$
is said to be a {\bf ground state} (this definition generally applies to an  invariant state under the action of a time-like Killing symmetry, no matter if the state is quasifree).

\begin{remark} The one-parameter group $U_t^{(\zeta)}$ associated with the time-like-Killing vector field $\zeta$
has the natural interpretation of {\bf time evolution} with respect to the {\em notion of time} associated with $\zeta$ and, in case the group is strongly continuous $H^{(\zeta)}$ is the natural {\bf Hamiltonian} operator 
associated with that evolution.  However, for a generic time-oriented globally hyperbolic spacetime, no notion of Killing time is suitable and consequently, no notion of (unitary) time evolution is possible. Time evolution {\em \`a la Schroedinger} is not a good notion to be extended to QFT in curved spacetime. Observables do not evolve, they are localized in bounded regions of spacetime by means of the smearing procedure. Causal relations are encompassed by the {\em Time-slice axiom} (see \cite{chapt:BD}) which is a theorem for free fields (Proposition \ref{tsa}). 
\end{remark}

\noindent Abandoning the case of time-like Killing symmetries, it is worth stressing that, generally speaking, every isometry $\gamma: \Mb \to \Mb$, not necessarily Killing and not necessarily time-like if Killing,
induces a corresponding automorphism of unital $*$-algebras, $\beta^{(\gamma)}$ of ${\cal A}(\Mb)$, via Proposition \ref{propind}, completely defined by the requirements  $\beta^{(\gamma)}(\phi(f))  \stackrel {\mbox{\scriptsize  def}} {=} \phi\left( f\circ \gamma^{-1} \right)$. If a state $\omega$ is invariant under $\beta^{(\gamma)}$, we can apply Proposition \ref{propsymm}, in order to unitarily implement this symmetry in the GNS representation of $\omega$. 
Some discrete symmetries  can be represented in terms of anti-linear automorphisms, like the time reversal in Minkowski spacetime. Again  $\beta^{(\gamma)}(\phi(f))  \stackrel {\mbox{\scriptsize  def}} {=} \phi\left( f\circ \gamma^{-1} \right)$ completely determine the anti-linear automorphism via Proposition \ref{propind}. If a state $\omega$ is invariant under $\beta^{(\gamma)}$, we can apply Proposition \ref{propsymm}, in order to implement this symmetry anti-unitarily in the GNS representation of $\omega$.

\begin{remark}$\null$

{\bf (1)} It easy to prove that, if the state $\omega : \Ac(\Mb) \to \mathbb C$ is invariant under the (anti-linear) automorphism $\beta: \Ac(\Mb) \to \Ac(\Mb)$ is {\em quasifree}, the spaces with fixed number of particles of the GNS Fock representation of $\omega$ are separately invariant under the action of the unitary (resp.  anti-unitary) operator $U^{(\beta)}$ implementing $\beta$ in the Fock representation of $\omega$ in view of Proposition \ref{propind}.

{\bf (2)} A known result \cite{Kay} establishes the following remarkable uniqueness result (actually proved for Weyl algebras, but immediately adaptable to our CCR framework).
\begin{proposition}[Uniqueness of pure invariant quasifree states]\label{uniquenessqfis} Assume that a quasifree state $\omega : \Ac(\Mb) \to \mathbb C$  is pure and  invariant under a one-parameter group of automorphisms 
$\{\beta_t\}_{t \in \mathbb R}$ of $\Ac(\Mb)$, giving rise to a strongly continuous unitary group $\{U_t\}_{t \in \mathbb R}$ implementing $\{\beta_t\}_{t \in \mathbb R}$ in the GNS representation of $\omega$. The pure quasifree state $\omega$  is uniquely determined by $\{\beta_t\}_{t \in \mathbb R}$ if the self-adjoint generator of $\{U_t\}_{t \in \mathbb R}$ restricted to  the one-particle Hilbert space of $\omega$ is positive without zero eigenvalues.
\end{proposition}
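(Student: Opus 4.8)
Since a quasifree state is completely determined by its two-point function (Definition~\ref{defqfs}), and the antisymmetric part of $\omega_2$ is fixed by the commutation relations to equal $\tfrac{i}{2}E$ independently of the state, it suffices to show that the real scalar product $\mu$ on $\Sol$ of Theorem~\ref{teoE}(a) — the symmetric part of $\omega_2$ — is uniquely fixed by the data. The plan is therefore to take two pure quasifree states $\omega_1,\omega_2$ that both satisfy the hypotheses (invariance under the common group $\{\beta_t\}$, strong continuity of the implementing unitaries, and a one-particle generator that is positive with no zero eigenvalue) and to prove $\mu_1=\mu_2$, whence $\omega_1=\omega_2$.

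First I would translate the hypotheses into the one-particle picture. By Proposition~\ref{propind} the automorphism $\beta_t$ is induced by a $\tau$-preserving linear flow $\sigma_t\colon\Sol\to\Sol$ that depends only on $\{\beta_t\}$ and is thus common to both states. Purity, via Theorem~\ref{teoE}(d), guarantees that each GNS representation is the irreducible Fock representation with $K_j(\Sol)$ dense in the one-particle space $H_j$, so the implementer is the second quantization $U_t^{(j)}=\Gamma(u^{(j)}_t)$; invariance together with the quasifree property (as noted in the remark immediately preceding this Proposition) then makes every fixed-particle-number sector invariant, and $u^{(j)}_t=U_t^{(j)}|_{H_j}=e^{-it h_j}$ is a strongly continuous one-particle group with self-adjoint generator $h_j\ge 0$, $\ker h_j=\{0\}$. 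The maps $K_j$ of Proposition~\ref{propSK} intertwine the flows, $U_t^{(j)}|_{H_j}\,K_j=K_j\,\sigma_t$, so using property (ii) of Proposition~\ref{propSK} I would encode each state in the single family of functions
$$
  F_j(t):=\langle K_j x\mid U_t^{(j)}K_j y\rangle
        =\mu_j(x,\sigma_t y)+\tfrac{i}{2}\tau(x,\sigma_t y),
  \qquad x,y\in\Sol .
$$

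The crux is an analyticity-plus-positivity argument. Because $h_j\ge 0$ and (with the convention $U^{(j)}_t=e^{-it h_j}$) each $F_j$ is the Fourier transform of a complex spectral measure supported in $[0,\infty)$, the bound $|e^{-it\lambda}|=e^{(\Im t)\lambda}\le 1$ for $\lambda\ge 0$, $\Im t\le 0$, shows that $F_j$ extends to a function analytic in the open lower half-plane $\{\Im t<0\}$, bounded there by $\|K_j x\|\,\|K_j y\|$, with the real-line values as boundary values. Since $\sigma_t$ and $\tau$ are common to both states, $\Im F_1(t)=\Im F_2(t)=\tfrac12\tau(x,\sigma_t y)$ for all real $t$. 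Hence $G:=F_1-F_2$ is bounded and analytic in the lower half-plane with real boundary values; its imaginary part is a bounded harmonic function with almost-everywhere vanishing boundary data, hence (by the Poisson representation of bounded harmonic functions, or equivalently by Schwarz reflection followed by Liouville) identically zero, so $G$ is a real constant. To fix the constant I would invoke the no-zero-eigenvalue hypothesis: the Ces\`aro mean $\tfrac1T\int_0^T F_j(t)\,dt$ converges (by dominated convergence) to the spectral mass at $0$, namely $\langle K_j x\mid P^{(j)}_{\{0\}}K_j y\rangle=0$ because $\ker h_j=\{0\}$; since $G$ is constant its Ces\`aro mean equals that constant, which is therefore $0$. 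Thus $F_1\equiv F_2$, and evaluating at $t=0$ (where $\tau(x,y)$ contributes only to the imaginary part) gives $\mu_1(x,y)=\Re F_1(0)=\Re F_2(0)=\mu_2(x,y)$ for all $x,y\in\Sol$.

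I expect the main obstacle to be the boundary-value step together with the elimination of the additive constant. Making rigorous that a bounded analytic function on the half-plane with real boundary data is constant requires care with the almost-everywhere (rather than pointwise) nature of Hardy-space boundary values — a Phragm\'en--Lindel\"of or reflection argument is probably cleaner than the naive Poisson statement — and, more importantly, it is precisely the absence of a zero eigenvalue that removes the residual constant. Physically that constant is the zero-mode (condensate) ambiguity, and the spectral condition $\ker h_j=\{0\}$ is exactly the hypothesis tailored to kill it; keeping the spectral and domain bookkeeping honest through the Ces\`aro limit is where the real work lies.
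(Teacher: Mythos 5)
Your proof is correct, and it coincides with the approach the paper itself relies on: the paper gives no proof of this proposition, deferring entirely to the cited reference \cite{Kay}, whose argument is exactly the one you reconstruct (encode each state in the one-particle function $F_j(t)=\mu_j(x,\sigma_t y)+\tfrac{i}{2}\tau(x,\sigma_t y)$, use positivity of the generator to continue $F_j$ analytically and boundedly into a half-plane, note the imaginary parts agree since $\tau$ and $\sigma_t$ are state-independent, apply Schwarz reflection plus Liouville to make the difference a real constant, and kill the constant by the Ces\`aro/mean-ergodic limit, which vanishes precisely because the generators have no zero eigenvalue). The only cosmetic caveat is that Proposition~\ref{propind} constructs automorphisms from symplectic flows rather than conversely, so the existence of the common flow $\sigma_t$ is really part of the intended (Bogoliubov-type) reading of the hypothesis — the same reading needed for the phrase ``restricted to the one-particle Hilbert space'' to make sense at all — and your boundary-value worries are unnecessary since the spectral representation gives continuity up to the real axis, making the reflection argument immediate.
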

\end{remark}

\noindent Let us focus on the {\em Minkowski vacuum}, that is  the quasifree  state $\omega_{\mathbb M}$ on four dimensional Minkowski spacetime $\mathbb M$ defined in  Section \ref{existenceqfs} by the two-point function (\ref{W}). As a matter of fact, $\omega_{\mathbb M}$ turns out to be invariant under the natural action 
of {\em orthochronous proper Poincar\'e group} and that the corresponding unitary representation of this connected Lie (and thus topological) group is strongly continuous.  In particular the self-adjoint generator of time displacements (with respect to every timelike direction), in the one-particle Hilbert space, satisfies the hypotheses of Proposition \ref{uniquenessqfis}. As $\omega_{\mathbb M}$ is pure, it {\em is therefore the unique pure quasifree state invariant under the orthochronous proper Poincar\'e group}.   $\omega_{\mathbb M}$ is a ground state with respect to any Minkowski  time evolution and,
by direct inspection, one easily sees that the state it is also invariant under the remaining discrete symmetries of Poincar\'e group $T$, $P$ and $PT$ which are consequently (anti-)unitarily implementable in the GNS Hilbert space.
Finally, it turns out that the one-particle space is irreducible under the action of the orthochronous proper Poincar\'e group, thus determining an {\em elementary particle} in the sense of the {\em Wigner classification}, with mass $m$ and zero spin.

\section{Hadamard quasifree states in curved spacetime}
The algebra of observables generated by the field $\phi(f)$ smeared with smooth functions is too 
small to describe important observables in QFT in curved spacetime. Maybe the most important is the stress energy tensor (obtained as a functional derivative of the action with respect to $g^{ab}$) that, for our Klein-Gordon field it reads, where $G_{ab}$ is the standard Einstein  tensor 
\begin{align}T_{ab}   \stackrel {\mbox{\scriptsize  def}} {=} &\; (1-2\xi)\nabla_{a}\phi\nabla_{b}\phi-2\xi\phi\nabla_{a}\nabla_b\phi-\xi \phi^2 G_{\mu\nu} \nonumber \\&\;+g_{ab}\left\{2\xi \phi^2 \box \phi+\left(2\xi-\frac12\right)\nabla^c\phi\nabla_{c}\phi+\frac12 m^2 \phi^2 
\right\} \label{Tabgen}\:.
\end{align}
It concerns products of fields evaluated at the same point of spacetime, like $\phi^2(x)$.  This observable, as usual smeared with a function $f \in C_0^\infty(\Mc)$,  could be formally interpreted as 
\begin{eqnarray} \phi^2(f) = \int_{\Mc} \phi(x)\phi(y) f(x) \delta(x,y) \:\:\dvol_{\Mb} \:. \label{phi2} \end{eqnarray}
However this object does not belong to ${\cal A}(\Mb)$.  Beyond  the fact that $T_{ab}$ describe the local content of energy, momentum and stress of the field,
 the stress-energy tensor is of direct relevance for describing the back reaction on the quantum fields on the spacetime geometry  through the  semi-classical Einstein equation 
\begin{equation} G_{ab}(x)  = 8\pi  \omega(T_{ab}(x))\label{backreaction}\end{equation}
or also, introducing a smearing procedure
$$ \int_{\Mc}  G_{ab}(x)  f(x) \:\:\dvol_{\Mb}  = 8\pi   \int_{\Mc} \omega( T_{ab}(x))    f(x) \:\:\dvol_{\Mb} \:, $$
where  $\omega( T_{ab}(x) )$ has the interpretation of the (integral kernel of the) expectation value of the quantum observable $T_{ab}$  with respect to some quantum state $\omega$. 
Barring technicalities due to the appearance of derivatives,  the overall problem is here to provide (\ref{phi2}) with a precise mathematical meaning, which in fact, is equivalent to a suitable  enlargement  the algebra ${\cal A}(\Mb)$. 

\subsection{Enlarging the observable algebra in Minkowski spacetime}\label{SectEnlM}
In flat spacetime $\Mb= \mathbb M$, for free QFT, at the level of {\em expectation values} and {\em quadratic forms} the above mentioned  enlargement of the algebra is performed exploiting  a {\em physically meaningful  reference state}, the  unique Poincar\'e invariant quasifree (pure) state introduced in Section \ref{existenceqfs} and discussed at the end of Section \ref{statessym}, $\omega_{\mathbb M}$.  We   call this state {\em Minkowski vacuum}.

Let us  first focus on the elementary observable $\phi^2$. We shall indicate it with $:\spa \phi^2(x) \spa:$ and we define it as a {\em Hermitian quadratic form} on ${\cal D}_{\omega_{\mathbb M}}$.

We start by  defining the operator on ${\cal D}_{\omega_{\mathbb M}}$ for $f,g \in C_0^\infty(\mathbb R^4)$
\begin{eqnarray}  :\spa \hat{\phi}(f) \hat{\phi}(g)\spa: \:   \stackrel {\mbox{\scriptsize  def}} {=} \hat{\phi}(f)\hat{\phi}(g)  -  \langle \Psi_{\omega_{\mathbb M}}|\hat{\phi}(f)\hat{\phi}(g) \Psi_{\omega_{\mathbb M}} \rangle I \label{PT}\end{eqnarray}
(As usual $\hat{\phi}(f)  \stackrel {\mbox{\scriptsize  def}} {=} \hat{\phi}_{\omega_{\mathbb M}}(f)$ throughout this section.)
Next, for $\Psi \in {\cal D}_{\omega_{\mathbb M}}$ we analyze  its integral kernel, assuming that it exists, $\langle \Psi|:\spa \hat{\phi}(x) \hat{\phi}(y)\spa: \Psi\rangle$ which  is symmetric since the antisymmetric part of the right-hand side of (\ref{PT}) vanishes in view of the commutation relations of the field.
The explicit form of the distribution  $\omega_{{\mathbb M} 2}(x,y) =\langle \Psi_{\omega_{\mathbb M}}|\hat{\phi}(x)\hat{\phi}(y) \Psi_{\omega_{\mathbb M}}\rangle$ appears in (\ref{W}).
We prove below that the mentioned formal kernel $\langle \Psi|:\spa \hat{\phi}(x) \hat{\phi}(y)\spa: \Psi\rangle$ not only exists but it also is  a jointly  smooth function. Consequently we are allowed to define, for any $\Psi \in {\cal D}_{\omega_{\mathbb M}}$,
 \begin{equation}\langle \Psi| :\spa\hat{\phi}^2\spa:(f) \Psi \rangle     \stackrel {\mbox{\scriptsize  def}} {=} \int_{\mathbb M^2} \langle \Psi|:\spa \hat{\phi}(x) \hat{\phi}(y)\spa: \Psi\rangle f(x) \delta(x,y)\:\: \dvol_{\mathbb M^2}(x,y)\:.\label{phi22}\end{equation}
Finally, the polarization identity uniquely defines $:\spa\hat{\phi}^2\spa:(f)$ as a symmetric quadratic form  ${\cal D}_{\omega_{\mathbb M}}\times{\cal D}_{\omega_{\mathbb M}}$.
\begin{align}
\langle \Psi'| &:\spa\phi^2\spa:(f) \Psi \rangle   \stackrel {\mbox{\scriptsize  def}} {=} \frac{1}{4} \left( \langle \Psi' + \Psi | :\spa\phi^2\spa:(f)  (\Psi'+\Psi) \rangle 
 -\langle \Psi' - \Psi |  :\spa\phi^2\spa:(f)  (\Psi'-\Psi)\rangle\right. \nonumber\\
&\left. -i \langle \Psi' + i\Psi |  :\spa\phi^2\spa:(f)  (\Psi'+i\Psi)
+i  \langle \Psi' -i \Psi | :\spa\phi^2\spa:(f)  (\Psi' -i\Psi) \rangle
\right)\label{polar}
\end{align}
There is no guarantee that an operator $:\spa\phi^2\spa:(f)$ really exists on ${\cal D}_{\omega_{\mathbb M}}$ satisfying (\ref{phi22})\footnote{By Riesz lemma, it exists if an only if the map ${\cal D}_{\omega_{\mathbb M}} \ni \Psi' \mapsto \langle \Psi'| :\spa\phi^2\spa:(f) \Psi \rangle$ is continuous for every $\Psi \in {\cal D}_{\omega_{\mathbb M}}$}, however if it exists, since ${\cal D}_{\omega_{\mathbb M}}$ is dense and (\ref{polar}) holds,  it is uniquely determined by the class of 
the expectation values $\langle \Psi| :\spa\hat{\phi}^2\spa:(f) \Psi \rangle$ on the states  $\Psi \in {\cal D}_{\omega_{\mathbb M}}$.
As promised, let us  prove that the kernel defined in  (\ref{phi22}) is a smooth function. 
First of all, notice that, as a general result arising from the  GNS   construction, every $\Psi \in {\cal D}_{\omega}$ can be written as \begin{equation}\Psi =  \sum_{n \geq 0,  i_1, \ldots, i_n \geq 1 } C^{(n)}_{i_1 \ldots i_n} \hat{\phi}(f^{(n)}_{i_1}) \cdots  \hat{\phi}(f^{(n)}_{i_n}) \Psi_{\omega_{\mathbb M}}\label{Psiexp}\end{equation} where only a finite number of coefficients $C^{(n)}_{i_1 \ldots i_n} \in \mathbb C$ is non-vanishing and the term in the sum  corresponding to $n=0$ is defined to have the form $c^0 \Psi_{\omega_\mathbb M}$.
We have
\begin{multline} \label{expansion}  \langle \Psi|\hat{\phi}(x)\hat{\phi}(y) \Psi\rangle  =
 \sum_{n \geq 0,  i_1, \ldots, i_n \geq 1 } \sum_{m \geq 0,  j_1, \ldots, j_n \geq 1 } \overline{C^{(m)}_{j_1 \ldots j_n}}  C^{(n)}_{i_1 \ldots i_n} \\
 \left\langle \left.\Psi_{\omega_{\mathbb M}}\right|
 \hat{\phi}(f^{(m)}_{j_m}) \cdots  \hat{\phi}(f^{(m)}_{j_1})  \hat{\phi}(x)\hat{\phi}(y)  \hat{\phi}(f^{(n)}_{i_1}) \cdots  \hat{\phi}(f^{(n)}_{i_n}) \Psi_{\omega_{\mathbb M}} \right\rangle\:.
\end{multline}
Taking advantage of the quasifree property of $\omega_{\mathbb M}$, hence using the expansion of $n$-point functions in terms of the $2$-point function of Definition \ref{defqfs}, we can re-arrange the right hand side of (\ref{expansion}) as (all the sums are over finite terms)
\begin{multline}\label{expansion2}
	\langle \Psi|\hat{\phi}(x)\hat{\phi}(y) \Psi\rangle
	= C^{0}_{\Psi} \omega_{\mathbb{M}2}(x,y) \\
		+ \sum_{m\ge 0, j\ge 1} \sum_{m'\ge 0, j'\ge 1} C^{(m)(m')}_{\Psi,j,j'}
			\omega_{\mathbb{M}2}(f^{(m)}_j,x) \omega_{\mathbb{M}2}(f^{(m')}_{j'},y) \\
		+ \sum_{m\ge 0, j\ge 1} \sum_{n\ge 0, i\ge 1} C^{(m)(n)}_{\Psi,j,i}
			\omega_{\mathbb{M}2}(f^{(m)}_j,x) \omega_{\mathbb{M}2}(y,f^{(n)}_{i}) \\
		+ \sum_{n'\ge 0, i'\ge 1} \sum_{n\ge 0, i'\ge 1} C^{(n')(n)}_{\Psi,i',i}
			\omega_{\mathbb{M}2}(x,f^{(n')}_{i'}) \omega_{\mathbb{M}2}(y,f^{(n)}_{i}) \: ,
\end{multline}
with all sums finite and some $C_\Psi$-coefficients that depend on the state $\Psi$. We can be more specific about the first coefficient, in fact, according to the formula from Definition~\ref{defqfs}, we have $C^0_\Psi = \langle \Psi | \Psi \rangle$. Recall also, from Proposition~\ref{prp:mink-smooth}, that $y \mapsto \omega_{\mathbb{M}2}(f,y)$ and $x \mapsto \omega_{\mathbb{M}2}(x,f)$ are smooth for any test function $f\in C^\infty_0(\Mc)$. Hence, we can interpret Equation~\eqref{expansion2} as saying that
\begin{equation}\label{expansion3}
	\langle \Psi|:\spa \hat{\phi}(x) \hat{\phi}(y)\spa:\Psi\rangle
	= \langle \Psi|\hat{\phi}(x)\hat{\phi}(y) \Psi\rangle
		- \langle \Psi| \Psi \rangle \omega_{\mathbb M}(x,y)
	\in C^\infty (\Mc\times \Mc) \: .
\end{equation}

 More complicated operators, i.e. {\bf Wick polynomials}  and corresponding differentiated Wick polynomials, generated by {\bf Wick monomials}, $:\spa\hat{\phi}^n\spa:(f)$, of arbitrary order $n$,   can analogously be defined as quadratic forms, by means of a recursive procedure of subtraction of divergences. The stress energy operator is a differentiated  Wick polynomial of order $2$. \\ 
The procedure for defining $:\spa\hat{\phi}^n\spa:(f)$ as a quadratic form is as 
follows. First define recursively, where the tilde just means that the indicated element has to be omitted,
\begin{align}
:\spa \hat{\phi}(f_1)\spa: \: &   \stackrel {\mbox{\scriptsize  def}} {=} \:\hat{\phi}(f_1)\nonumber \\
:\spa \hat{\phi}(f_1) \cdots \hat{\phi}(f_{n+1}) \spa: \: &   \stackrel {\mbox{\scriptsize  def}} {=} \: :\spa \hat{\phi}(f_1) \cdots \hat{\phi}(f_n) \spa:  \hat{\phi}(f_{n+1})\nonumber \\
 &\:\: \:\:\quad  - \sum_{l=1}^n :\spa \hat{\phi}(f_1) \cdots  \widetilde{\hat{\phi}(f_l)} \cdots \hat{\phi}(f_n) \spa: \: \omega_{\mathbb M 2}(f_l,f_{n+1})\:. \label{phin}
\end{align}
These elements of $\Ac(\mathbb M)$ turn out to be symmetric under interchange of $f_1, f_2, \ldots f_n$ as it can be proved by induction\footnote{Observe in particular that $:\spa\hat{\phi}(f) \hat{\phi}(g)\spa: - :\spa\hat{\phi}(g) \hat{\phi}(f)\spa: = iE(f,g)\II- \omega_{\mathbb M2}(iE(f,g)\II)\II=0$.}.
By induction, it is next possible to prove that, for $n\geq 2$ and  $\Psi \in {\cal D}_{\omega_{\mathbb M}}$, there is a {\em jointly  smooth kernel}  $$\langle \Psi | :\spa \hat{\phi}(x_1) \cdots \hat{\phi}(x_n) \spa: \Psi \rangle$$  which produces $\langle \Psi | :\spa \hat{\phi}(f_1) \cdots \hat{\phi}(f_n) \spa: \Psi \rangle$ by integration.
This result  arises from (\ref{phin}) as a consequence of the fact that 

(a) $\omega_{\mathbb M}$ is quasifree so that Definition \ref{defqfs} can be used to compute the said kernels, 

(b) $\Psi \in {\cal D}_{\omega_{\mathbb M}}$ so that the expansion (\ref{Psiexp}) can be used, 

(c) the functions $F_k : x \mapsto \omega_{\mathbb M 2}(x,f_k) = \overline{\omega_{\mathbb M 2}(f_k,x)}$ are smooth when  $f_k \in C_0^\infty(\Mc)$ as was mentioned above.\\
Indeed, we have
\begin{eqnarray}  \langle \Psi|:\spa\hat{\phi}(x_1)\cdots \hat{\phi}(x_n)\spa: \Psi\rangle  =
 \sum_{n \geq 0,  i_1, \ldots, i_n \geq 1 } \sum_{m \geq 0,  j_1, \ldots, j_n \geq 1 } \overline{C^{(m)}_{j_1 \ldots j_n}}  C^{(n)}_{i_1 \ldots i_n}\nonumber \\
 \left\langle \left.\Psi_{\omega_{\mathbb M}}\right|
 \hat{\phi}(f^{(m)}_{j_m}) \cdots  \hat{\phi}(f^{(m)}_{j_1})  :\spa\hat{\phi}(x_1)\cdots\hat{\phi}(x_n)\spa:  \hat{\phi}(f^{(n)}_{i_1}) \cdots  \hat{\phi}(f^{(n)}_{i_n}) \Psi_{\omega_{\mathbb M}} \right\rangle \label{expansionn}\:.
\end{eqnarray}
after having expanded the normal product $ :\spa\hat{\phi}(g_1)\cdots\hat{\phi}(g_n)\spa: $ in the right-hand side, one can evaluate the various $n$-point functions arising this way by applying Definition \ref{defqfs}.
It turns out that all terms $\omega_{\mathbb M2}(x_i,x_j)$ always  appear in a sum with  corresponding terms $-\omega_{\mathbb M2}(x_i,x_j)$  arising by the definition (\ref{phin}) and thus give no contribution. The remaining factors are of the form $F_k(x_j)$ and thus are smooth.\\
We therefore are in a position to write the definition of $\langle \Psi| :\spa \hat{\phi}^n\spa:(f) \Psi \rangle$ if $\Psi \in {\cal D}_{\omega_{\mathbb M}}$
\begin{equation}\langle \Psi| :\spa \hat{\phi}^n\spa:(f) \Psi \rangle =
\int_{\Mc^n} \langle \Psi | :\spa \hat{\phi}(x_1) \cdots \hat{\phi}(x_n) \spa: \Psi \rangle f(x_1) \delta(x_1,\ldots x_n) \dvol_{\Mb^n}\label{phinN}\end{equation}
Exactly as before, polarization extends the definition to a quadratic form on ${\cal D}_{\omega_{\mathbb M}}\times {\cal D}_{\omega_{\mathbb M}}$. There is no guarantee that operators fitting these quadratic forms really exist.

\begin{remark} $\null$
The definition (\ref{phin}) can be proved to be formally equivalent to the formal definition
\begin{equation}:\spa \hat{\phi}(x_1) \cdots \hat{\phi}(x_n) \spa:  \:   \stackrel {\mbox{\scriptsize  def}} {=} \left. \frac{1}{i^n}\frac{\delta^n}{\delta f(x_1)\cdots \delta f(x_n)} \right|_{f=0}e^{i \hat{\phi}(f) +\frac{1}{2}\omega_{\mathbb M2}(f,f)} \label{formT}
\end{equation}
Though the exponential converges in the strong operator topology  to a unitary operator, the {\bf Weyl generator}, restricted to the dense domain ${\cal D}_{\omega_{\mathbb M}}$,
$e^{i \hat{\phi}(f)}$ can be viewed here as a formal series and 
this  series can be truncated at finite, sufficiently large, order in view of linearity of the exponent and  $f=0$.
\end{remark}
\subsection{Enlarging the observable algebra  in curved spacetime}
The discussed definition of Wick polynomials is {\em equivalent} in Minkowski spacetime to the more popular one based on the well known re-ordering procedure of creation and annihilation operators as can be proved by induction. Nevertheless this second approach is not natural in curved spacetime because, to be implemented, it needs the existence of a physically preferred reference state as Minkowski vacuum 
in flat spacetime, which in the general case it is not given. To develop a completely covariant theory another approach has been  adopted,  which generalises to curved spacetime the previously outlined definition of Wick polynomials based on a ``divergence subtraction'' instead of a re-ordering procedure.  The idea is that, although it is not possible to 
uniquely assign each spacetime with a physically distinguishable state, it is possible to  select a type of  divergence in common with all  physically relevant states is every spacetime.
These  preferred quasifree states with the same type of divergence   ``resembling'' Minkowski vacuum in a generic spacetime  are called {\em Hadamard states}.  Minkowski vacuum belongs to this class and these states are  remarkable also in view of their  {\em microlocal features}, which revealed to be of crucial importance  for the  technical advancement of the  theory, as we will describe later.
Exploiting these distinguished states,  it is possible  to generalize the outlined approach  
in order to enlarge ${\cal A}(\Mb)$, including other algebraic elements as the stress-energy tensor operator  \cite{stress, HW04}. Actually this is nothing but the first step to generalize the ultraviolet renormalization procedure to curved spacetime \cite{wald94,bfk,bf,hw,hw2b}.
The rest of the chapter is devoted to discuss some elementary properties of Hadamard states.

Let us quickly remind some local features of (pseudo)Riemannian differential geometry \cite{O'Neill}, necessary to introduce the notion of Hadamard states from a geometric viewpoint.
If $(\Mc, g)$ is a smooth Riemannian or Lorentzian manifold, an open set $C\subset M$
 is said a {\bf normal convex} neighborhood
 if
 there is a open set $W \subset TM$ with the form $W=\{ (q,v)\:|\: q\in C, v\in S_q\}$ where  $S_q\subset T_qM$ is 
 a star-shaped open neighborhood of the origin, such that $$exp\spa\rest_W : (q,v) \mapsto
 exp_q v$$ is a diffeomorphism onto $C\times C$. It is clear that $C$ is connected and  there is only 
one geodesic segment
 joining any pair $q,q'\in C$ if we require that it is completely contained in $C$. It is  $[0,1] \ni t\mapsto exp_q(t ((exp_q)^{-
1}q'))$. Moreover if $q\in C$ and we fix a basis $\{e_\alpha|_q\}\subset T_qM$,
  $$t = t^\alpha e_\alpha|_q \mapsto exp_q(t^\alpha e_\alpha|_q)\:,\quad  t\in S_q$$ defines a set of 
 coordinates on $C$ centered in $q$ which is called the {\bf normal Riemannian coordinate 
system}  centered in   $q$.
 In $(M,g)$ as above, $\sigma(x,y)$ indicates the {\bf squared (signed) geodesic distance} of $x$ from $y$. With our signature  $(+,-,\cdots,-)$, it is defined as
 $$\sigma(x,y)   \stackrel {\mbox{\scriptsize  def}} {=} -g_x(exp_x^{-1}y,exp_x^{-1}y)\:.$$
$\sigma(x,y)$ turns out to be  smoothly defined on $C\times C$ if $C$ is a convex normal 
neighborhood where we also have $\sigma(x,y)=\sigma(y,x)$.
 The class of the convex normal neighborhoods of a point $p\in \Mc$ is a fundamental system of
 neighborhoods of $p$ \cite{friedlander,BEE}.\\
In Euclidean manifolds $\sigma$ defined as above  is everywhere nonnegative with the standard Euclidean choice of the signature. 

In a convex neighborhood $C$ of a spacetime $\Mb$, taking in particular advantage of several properties of $\sigma$, it is possible to define a local  {\em approximate solution} of KG equation, technically called a {\em parametrix}, which has essentially the same short-distance singularity of the two point function of Minkowski vacuum. Its construction uses only the local geometry and the parameters defining the equation of motion but does not refers to particular states, which are global objects. The technical idea can be traced back to Hadamard \cite{Hadamard} (and extensively
 studied by Riesz \cite{Riesz})
 and it is therefore called {\em Hadamard parametrix}. In the rest of the chapter we only consider a four dimensional spacetime, essentially following \cite{HackMoretti}. A quick technical discussion on the general case (details and properties of the constructions strongly depend of the dimension of the spacetime) also in relation with heath kernel expansion, can be found  in \cite{stress} (see also \cite{garabedian, friedlander, gunther,BGP} for more extended discussions also on different types of parametrices and their use in field theory).  In a convex neighborhood $C$ of a  four dimensional spacetimes the {\bf  Hadamard parametrix}  of order $N$
of the two-point function has the form
\begin{equation} H^{(N)}_{\epsilon}(x,y)=\frac{u(x,y)}{(2\pi)^2\sigma_\epsilon(x,y)}+\sum\limits^N_{n=0}v_n \sigma^n\log\left(\frac{\sigma_\epsilon(x,y)}{\lambda^2}\right)\label{ZNn}\end{equation}
where $x,y \in C$,  $T$ is any local time coordinate increasing towards the future,
$\lambda>0$ a  length scale
 and 
\begin{eqnarray} \sigma_\epsilon(x,y)   \stackrel {\mbox{\scriptsize  def}} {=} \sigma(x,y) +2i\epsilon (T(x)-T(y)) + \epsilon^2\label{sigma}\:,\end{eqnarray}
finally, the cut in the complex domain of the $\log$ function is assumed along the negative axis in (\ref{ZNn}).
Recursive differential equations 
 (see the appendix  A of \cite{stress} and also  \cite{Riesz,garabedian,friedlander,fulling,Moretti2, Thomas}) determine  $u=u(x,y)$ and all the {\bf Hadamard coefficients} $v_n= v_n(x,y)$ in $C$ as smooth functions, when assuming $u(x,x)=1$ and  $n= 0, 1, 2,\ldots $.
These recurrence relations  have been obtained by requiring that the sequence of the $H_{0}^{(N)}(x,y)$
defines a local, $y$-parametrized,``approximate solution'' of the KG equation for $\sigma(x,y)\neq 0$  (with some further details we can say that the error with respect to a true solution  is of order $\sigma^N$ for each $N$). That solution
would be  exact in the  $N\to \infty$ limit of the sequence provided the limit
exists.  The limit exists in the analytic case, but in the smooth general case the sequence diverges.
However, as proved in \cite[\textsection 4.3]{friedlander}, if $\chi: \mathbb R \to [0, 1]$ is a smooth function with
$\chi(r)=1$ for $|r| \leq 1/2$ and $\chi(r)=0$ for $|r|>0$ one can always find a sequence of numbers 
$0< c_1 < c_2 < \cdots <c_n \to +\infty$  for that
\begin{equation} v(x,y)   \stackrel {\mbox{\scriptsize  def}} {=} \sum_{n=0}^\infty v_n(x,y) \sigma(x,y)^n \chi(c_n \sigma(x,y))\label{V}\end{equation}
uniformly converges, with all derivatives,  to a $C^\infty$ function on $C\times C$. A parametrix $H_\epsilon$
\begin{equation} H_{\epsilon}(x,y)=\frac{u(x,y)}{(2\pi)^2\sigma_\epsilon(x,y)}+ v(x,y)\log\left(\frac{\sigma_\epsilon(x,y)}{\lambda^2}\right)\label{Z}\end{equation}
arises this way. This parametrix distributionally satisfies KG equation in both arguments up to jointly smooth functions of $x$ and $y$. In other words, there is a smooth function $s$ defined in $C\times C$ such that  if $f,g \in C_0^\infty(C)$ and defining $P  \stackrel {\mbox{\scriptsize  def}} {=} \Box_\Mb + m^2 + \xi R$,
\begin{equation}\lim_{\epsilon \to 0^+} \int_{C\times C}  H_{\epsilon}(x,y) (Pf)(x) g(y) \dvol_{\Mc\times \Mc} =
\int_{C\times C} s(x,y) f(x) g(y) \dvol_{\Mc\times \Mc}\:.\end{equation}
The analog holds swapping the role of the test functions.
We are in a position to state our main definition.
\begin{definition}
 \label{def_HadamardFormScalar}
With $\Mb$ four dimensional, we say that a (not necessarily quasifree) state $\omega$ on
 $\Ac(\Mb)$ and its two point function  $\omega_2$ are {\bf  Hadamard} if $\omega_2 \in {\cal D}'(\Mc \times \Mc)$ 
and  every point of $\Mb$ admits an open normal  neighborhood $C$ where 
\begin{equation}\omega_2(x,y) - H_{0^+}(x,y)=  w(x,y)  \quad \mbox{for some }\:\:  w\in C^\infty(C\times C)\:. \label{hadamard}\end{equation}
Here $0^+$ indicates  the standard weak distributional limit as $\epsilon \to 0^+$ (``first integrate against test functions and next  take the limit'').
\end{definition}

\begin{remark}\label{remHadpar} $\null$

{\bf (1)}  The given definition does not depend either on the choice of $\chi$ or the sequence of the $c_n$ used in (\ref{V}) since different choices simply change $w$ as one may easily prove. Similarly, the definition does not   depend  on the choice of the local time function $T$ used in the definition of $\sigma_\epsilon$. This fact is far from being obvious and requires a more detailed analysis  \cite{kw}.

{\bf (2)}   Using the following result arising form recurrence relations determining the Hadamard coefficients, one finds that the distribution $$\left(v(x,y)  - \sum_{k=0}^N v_n(x,y) \sigma(x,y)^n\right) \ln \sigma_{0^+}(x,y)$$
is a function in  $C^N(O \times O)$. Exploiting  this result,  it is not difficult to prove that the requirement (\ref{hadamard}) is equivalent to the following requirement:
\begin{equation} \omega_2(x,y) - H^{(N)}_{0^+}(x,y) = w_N(x,y)\quad \mbox{for each $N\ge 1$, with }\:\: w_N \in C^N(C\times C)\label{hadamardN}\:.\end{equation}
The equivalent definition of  Hadamard state in \cite{ra1} was, in fact, nothing but  Definition \ref{def_HadamardFormScalar} with (\ref{hadamard}) replaced by (\ref{hadamardN}).

{\bf (3)} Minkowski vacuum $\omega_{\mathbb M}$ defined by the two point function (\ref{W}) is Hadamard. In particular, for $m>0$, it holds\footnote{The function $z \mapsto I_1(\sqrt{z})/\sqrt{z}$, initially defined for $Re(z)>0$, admits a unique analytic extension on the whole space $\mathbb C$ and the formula actually refers to this extension.} 
$$\omega_{\mathbb M2}(x,y) =\frac{1}{4\pi^2} \frac{1}{\sigma_{0^+}(x,y)} + \frac{m^2}{2(2\pi)^2}\frac{I_1(m \sqrt{\sigma(x,y)})}{m \sqrt{\sigma(x,y)}} \ln \left(m^2 \sigma_{0^+}(x,y)\right) + w(x,y)$$
where $w$ is smooth. The result holds also for $m=0$ and in that case, only the first term in the right-hand side does not vanish in the expansion above. Similarly, quasifree states {\em invariant under  the  symmetries generated by a timelike Killing vector field $\zeta$}  as the states  considered in Sect. \ref{existenceqfs} (with all the hypotheses specified therein) are Hadamard \cite{fnw,wald94} if the spacetime admits spacelike Cauchy surfaces normal to $\zeta$, that is if the spacetime is {\em static}.  This last condition is essential because there 
are spacetimes admitting timelike Killing vectors  but not spacelike Cauchy surfaces normal to them which do not admit {\em invariant Hadamard} quasifree  states,  like Kerr spacetime and Schwartzschild-de~Sitter spacetime \cite{kw}.

{\bf (4)} Referring to the literature before the cornerstone  results \cite{ra1,ra2} (we consider in Sec. \ref{secmicrohad}),  Definition \ref{def_HadamardFormScalar} properly refers to {\em locally} Hadamard states. This is because there also exists a notion of {\em global Hadamard state} (Definition 3.4 in \cite{ra1}), discussed in \cite{kw} in a completely rigorous way for the first time. This apparently more restrictive global condition  essentially requires  (see \cite{kw,ra1} for the numerous technical  details), for a certain open neighborhood ${\cal N}$ of a Cauchy surface of $\Mb$
such that  $\sigma(x,y)$ is always well defined if $(x,y) \in {\cal N}\times {\cal N}$ (and this neighbourhood can always be constructed independently from the Hadamard requirement), that (\ref{hadamardN}) is valid producing the known singularity for causally related arguments, and there are no further singularities for arbitrarily far, spacelike separated, arguments $(x,y) \in {\cal N}\times {\cal N}$.  In this regard a technically  important result, proved in the appendix B of \cite{kw}, is that, analogous to Proposition~\ref{prp:mink-smooth} in the case of Minkowski space, 
\begin{equation}
\Mc \ni x \mapsto \omega(\phi(f) \phi(x)) = \overline{\omega(\phi(x)\phi(f))}
	\in C^\infty(\Mc)
 \label{Fomega}
\end{equation} 
if $f \in C_0^\infty(\Mc)$ and $\omega$ is a quasifree {\em globally} Hadamard state on $\Ac(\Mb)$. We shall prove this fact later using the microlocal approach.
This fact has an important consequence we shall prove later using the microlocal approach: if $\omega$
and $\omega'$ are (locally) Hadamard states, then $\Mc \times \Mc \ni (x,y) \mapsto \omega_2(x,y) -\omega'_2(x,y)$ is smooth. This fact is far from obvious,  since  Definition \ref{def_HadamardFormScalar} guarantees only that the difference is smooth when $x$ and $y$ belong to the same  sufficiently small neighborhood.\\
 An important feature of the global Hadamard condition for a quasifree Hadamard state is that it {\em propagates} \cite{fsw,wald94}: If it holds in a neighborhood of a Cauchy surface it holds in a neighborhood of any other Cauchy surface. We shall come back later to this property making use of the local notion only. This fact, together with the last comment in (3) proves that quasifree Hadamard states for massive fields (and $\xi =0$) exist in globally hyperbolic spacetimes by means of a deformation argument similar to the one exploited in Sect. \ref{existenceqfs}.\\
 We shall not insist on the distinction between the {\em global} and the {\em local}  Hadamard property because, in \cite{ra2}, it was established  that a local  Hadamard state on $\Ac(\Mb)$ is also a global one (the converse is automatic).
It was done  exploiting the {\em microlocal approach}, which we shall discuss shortly.

{\bf (5)} It is possible to prove that \cite{wald94} if a globally hyperbolic spacetime has one (and thus all) compact Cauchy surface, all pure quasifree Hadamard states for the massive KG field (with $\xi=0$) are unitarily equivalent.  
However it is not sufficent to deal with folia of only \emph{pure} quasifree Hadamard states as this excludes very significant examples. Consider the massive KG field (with $\xi=0$) on an ultrastatic spacetime with a compact Cauchy surface. Both the unique time-translation invariant pure state and any {\em thermal} (KMS) state with temperature $T>0$ are Hadamard, but they are not unitarily equivalent, since the former is pure while the latter is not.  There is, in fact, a more general result \cite{Verch} (actually stated in terms of Weyl algebras). Consider an open  region $O$ which  defines a  globally hyperbolic spacetime $\Ob$ in its own right, in a  globally hyperbolic spacetime $\Mb$, such that $\overline{O}$ is compact, and a pair of quasifree Hadamard states $\omega_1,\omega_2$ for the massive KG field ($\xi=0)$
on $\Ac(\Mb)$. It is possible to prove that the restriction to $\Ac(\Ob)\subset \Ac(\Mb)$ of  any density matrix state associated to the GNS construction of $\omega_1$ coincides with the restriction to  $\Ac(\Ob)$ of some  density matrix state associated to the GNS construction of $\omega_2$. 
\end{remark}

\noindent  It is now  possible to recast all the content of Sect. \ref{SectEnlM} in a generic globally hyperbolic spacetime $\Mb$
enlarging the algebra of observables $\Ac(\Mb)$, at the level of quadratic forms, defining the  expectation values  of Wick monomials $:\spa\phi^n\spa:(f)$ with respect to Hadamard states $\omega$ or vector states $\Psi \in {\cal D}_\omega$ with $\omega$  Hadamard. Remarkably, all of that  can be done simultaneously for all states in the said class without picking out any reference state. This is the first step for a completely {\em local} and {\em covariant} definition.
First, define for smooth functions $f_k$ supported in a convex normal neighborhood $C$ 
\begin{equation}
:\spa \phi(f_1) \cdots \phi(f_{n}) \spa:_H   \stackrel {\mbox{\scriptsize  def}} {=} 
\int_{\Mc^n} :\spa \phi(x_1) \cdots \phi(x_n) \spa:_H   f_1(x_1) \ldots f_n(x_n) \:\dvol_{\Mb^n}(x_1,\ldots, x_n)\:, \label{phinH}
\end{equation}
where we have defined the {\em completely symmetrized} formal kernels,
\begin{equation}:\spa \phi(x_1) \cdots \phi(x_n) \spa:_H   \:   \stackrel {\mbox{\scriptsize  def}} {=} \left. \frac{1}{i^n}\frac{\delta^n}{\delta f(x_1)\cdots \delta f(x_n)} \right|_{f=0}e^{i \phi(f) +\frac{1}{2}H_{0^+}(f,f)} \label{formTH}\:.
\end{equation}
Notice that $H_{0^+}$ can be replaced with its symmetric part $H^S_{0^+}$ and that, in (\ref{phinH}), only the symmetric part of the product $f_1(x_1) \ldots f_n(x_n)$ produces a contribution to the left-hand side.
Equivalently, these monomials regularized with respect to the Hadamard parametrix can be define recursively as
\begin{align}
:\spa \phi(f_1)\spa:_H \: &   \stackrel {\mbox{\scriptsize  def}} {=} \:\phi(f_1)\nonumber \\
:\spa \phi(f_1) \cdots \phi(f_{n+1}) \spa:_H \: &   \stackrel {\mbox{\scriptsize  def}} {=} \: :\spa \phi(f_1) \cdots \phi(f_n) \spa:_H  \phi(f_{n+1})\nonumber \\
 &\:\: \:\:\quad  - \sum_{l=1}^n :\spa \phi(f_1) \cdots  \widetilde{\phi(f_l)} \cdots \phi(f_n) \spa:_H \: \tilde{H}(f_l,f_{n+1})\:, \label{phin-H} \\
 & \quad \text{where $\tilde{H} = H^S_{0^+} + \frac{i}{2}E$,}\nonumber
\end{align}
in analogy with the relation between Equations~\eqref{phin} and~\eqref{formT}.
Now consider a quasifree Hadamard  state $\omega$ and indicate by $\omega_\Psi$ the generic state 
indexed by the normalized vector $\Psi \in {\cal D}_\omega$ (so that $\omega = \omega_{\Psi}$ when 
$\Psi$ is the Fock vacuum). 
By induction, it is  possible to prove that, for $n\geq 2$,
there is a {\em jointly  smooth kernel}  $$\omega_\Psi(:\spa\phi(x_1) \cdots \phi(x_n) \spa:_H  )$$  which produces $\omega_\Psi(:\spa\phi(f_1) \cdots \phi(f_n) \spa:_H  ) $ by integration when the supports of the functions $f_k$ belong to $C$.\\
Exactly as for the Minkowski vacuum representation, this result  arises from (\ref{phin-H}) as a consequence of the following list of facts:

(a) $\omega$ is quasifree so that Definition \ref{defqfs} can be used to compute the said kernels, 

(b) $\Psi \in {\cal D}_{\omega}$ so that the expansion (\ref{Psiexp}) can be used, 

(c) the functions in (\ref{Fomega}) are smooth (see (4) in Remark \ref{remHadpar}  and section \ref{secmicrohad}),

(d) the local singularity of two-point functions of quasifree Hadamard states is the same as the one of $H_{0^+}$.

\noindent Consider a normalized $\Psi \in {\cal D}_\omega$, given without loss of generality by
\begin{equation}
	\Psi =  \sum_{n \geq 0,  i_1, \ldots, i_n \geq 1 } C^{(n)}_{i_1 \ldots i_n} \hat{\phi}_\omega(f^{(n)}_{i_1}) \cdots  \hat{\phi}_\omega(f^{(n)}_{i_n}) \Psi_{\omega}\label{PsiexpN} \:,
\end{equation} where only a finite number of coefficients $C^{(n)}_{i_1 \ldots i_n} \in \mathbb C$ is non-vanishing, which defines the algebraic state $\omega_\Psi(\cdot) = \langle \Psi | (\cdot) \Psi \rangle$. Then, for instance, with the same argument used to achieve (\ref{expansion3}) we have
\begin{equation}\label{expansion2H}
	\omega_\Psi(:\spa\phi(x_1) \phi(x_2)\spa:_H)
	- \omega(\phi(x_1)\phi(x_2)) + \tilde{H}(x_1,x_2) \in C^\infty(\Mc\times \Mc) \: ,
\end{equation}
where the smoothness is assured because the resulting expression consists of a linear combination of products like $\omega(\phi(x_1)\phi(g)) \omega(\phi(f) \phi(x_2))$, with some test functions $f$ and $g$. Note that the combination of the second and third terms in~\eqref{expansion2H} can be rewritten as
\begin{align*}
	\omega(\phi(x_1)\phi(x_2)) - \tilde{H}(x_1,x_2)
	&= \omega(\phi(x_1)\phi(x_2)) - H^S_{0^+}(x_1,x_2) - \frac{i}{2} E(x_1,x_2) \\
	&= \frac{1}{2} \omega(\phi(x_1)\phi(x_2)) - H_{0^+}(x_1,x_2) \\
	& \quad {}
		+ \frac{1}{2} \omega(\phi(x_2)\phi(x_1)) - H_{0^+}(x_2,x_1) ,
\end{align*}
which is obviously smooth by the very definition of the Hadamard property of $\omega$. Hence $\omega_\Psi(:\spa\phi(x_1) \phi(x_2)\spa:_H)$ is also smooth.
We are in a position to define the expectation values of the Wick monomials for $f \in C_0^\infty(\Mc)$
such that its support is included in $C$,
\begin{equation}\omega_\Psi(:\spa \phi^n\spa:_H (f))=
\int_{\Mc^n}\omega_\Psi(:\spa \phi(x_1) \cdots \phi(x_n) \spa:_H )f(x_1) \delta(x_1,\ldots, x_n) \dvol_{\Mb^n}\label{phinHN}\end{equation}
Exactly as before, polarization extends the definition to a quadratic form on ${\cal D}_{\omega}\times {\cal D}_{\omega}$. There is no guarantee that operators fitting these quadratic forms really exist. The question of their existence as operators will be addressed later, in Section~\ref{sec:alg-wick}.

\begin{remark}$\label{remhad3}\null$ 

{\bf (1)} The restriction on the support of $f$ is not very severe. The restriction  can be removed making use of  a partition of unity (see for example \cite{stress,HW04} referring to more generally differentiated Wick polynomials).

{\bf (2)} The given definition of $\omega(:\spa \phi^n\spa:_H (f))$ is affected by several ambiguities due to the effective construction of $H_\epsilon$. A complete classification of these ambiguities, promoting Wick polynomials to properly defined elements of a $*$-algebra,  can be presented  from a very general  viewpoint,  adopting a {\em locally covariant framework} \cite{hw,km}, we shall not consider in this introductory review  (see \cite{chapt:CR}). We only say that these ambiguities are completely  described by a class of scalar polynomials in the mass and Riemann  curvature tensor and their covariant derivatives. The finite order of these polynomials is fixed by scaling properties of Wick polynomials.
The coefficients of the polynomials are smooth functions of the parameter $\xi$. 
 We stress that this classification is the first step of the ultraviolet renormalization program which, in curved spacetime and differently from flat spacetime where all curvature vanish, starts with classifying the finite renormalization counterterms of Wick polynomials instead of only dealing with time-ordered Wick polynomials.

{\bf (3)} Easily extending the said definition, using the fact that $\omega_\Psi(:\spa \phi(x_1)\phi(x_2) \spa:_H )$ is smooth and thus can be differentiated,
 one can define a notion of {\em differentiated Wick polynomials} which include, in particular, the {\em stress energy tensor} as a Hermitian quadratic form evaluated on Hadamard states or vector states in the dense subspace ${\cal D}_\omega$ in the GNS Hilbert space of a Hadamard state $\omega$. This would be enough to 
implement the computation of the back reaction of the quantum matter in a given state to the geometry of the spacetime  through (\ref{backreaction}) especially in cosmological scenario (see \cite{chapt:HP}). This program has actually been initiated much earlier than the algebraic approach was adopted in QFT in curved spacetime \cite{BirrellDavies} and the notion of Hadamard state was invented, through several steps,  in this context. The requirements  a physically sensible object  $\omega(:\spa T_{ab}\spa:_H (x))$ should satisfy was clearly discussed by several authors, Wald in particular (see \cite{wald94} for a complete account and \cite{HackMoretti} for more recent survey). The most puzzling  issue in this context perhaps concerns the interplay of the conservation requirement $\nabla_a \omega(:\spa T^{ab}\spa:_H (x))=0$ and the appearance  of the {\em trace anomaly}.  We shall come back to these issues later, at the end of Section~\ref{sec:alg-wick}.
\end{remark}

\subsection{The notion of wavefront set and its elementary properties}
Microlocal analysis permits us to completely reformulate the theory of Hadamard states into a much more powerful formulation where, in particular, the Wick polynomials can be defined as proper operators and not only Hermitian quadratic forms.

Following \cite{Thomas,Strohmaier}, let us start be introducing the notion of  wavefront set. To motivate it, let us recall that a smooth function  on $\mathbb R^m$ with compact support has a rapidly decreasing Fourier transform. If we take a distribution $u$ in ${\cal D}^\prime(\mathbb R^m)$ and multiply it by an $f\in{\cal D}(\mathbb R^m)$ with $f(x_0)\neq 0$, then $uf$ is an element of ${\cal E}^\prime(\mathbb R^m)$, {\it i.e.}, a distribution with compact support. If $fu$ were smooth, then its Fourier transform $\widehat{fu}$ would be smooth and rapidly decreasing (with all its derivatives). The failure of $fu$ to be smooth in a neighbourhood of $x_0$ can therefore be quantitatively described by the set of directions in Fourier space%
	\footnote{Our convention for the Fourier transform is so that $f(x) = \frac{1}{(2\pi)^m}\int e^{-ikx} \hat{f}(k)\, d^m k$. This convention agrees with those of~\cite{Thomas,ra1,ra2}, but has the opposite sign in the exponential with respect to~\cite{Strohmaier}. This means that our wavefont sets need to be negated to be compared to those of~\cite{Strohmaier}. Fortunately, in all cases where this is done, the wavefront sets happen to be negation symmetric.} %
where $\widehat{fu}$ is not rapidly decreasing. Of course it could happen that we choose $f$ badly and therefore `cut' some of the singularities of $u$ at $x_0$. To see the full singularity structure of $u$ at $x_0$, we therefore need to consider all 
 test functions which are non-vanishing at $x_0$. With this in mind, one first defines the wavefront set of distributions on (open subsets of) $\mathbb R^m$ and then extends it to curved manifolds in a second step.\\

\noindent In the rest of the chapter ${\cal D}(\Mc)  \stackrel {\mbox{\scriptsize  def}} {=} C_0^\infty(\Mc, \mathbb C)$ for every smooth manifold $\Mc$.
An open neighbourhood $G$ of $k_0\in\mathbb R^m$ is called {\bf conic} if $k\in G$ implies $\lambda k\in G$ for all $\lambda >0$. 

\begin{definition}[Wavefront set]
 \label{def_WaveFrontSet}
Let $u\in{\cal D}^\prime(U)$, with open $U\subset \mathbb R^m$.
A point $(x_0,k_0)\in U\times (\mathbb R^m\setminus\{0\})$ is called a {\bf regular directed} point of $u$ if there is $f\in{\cal D}(U)$ with $f(x_0)\neq 0$ such that, for every $n\in\mathbb N$, there is a constant $C_n\geq 0$ fulfilling
$$|\widehat{fu}(k)|\le C_n (1+|k|)^{-n}$$
for all $k$ in an open conic neighbourhood of $k_0$. The {\bf wavefront set} $WF(u)$, of $u \in {\cal D}'(U)$
  is the complement in $U \times (\mathbb R^m\setminus\{0\})$ of the set of all regular directed points of $u$.
\end{definition}

\begin{remark}
Obviously, if $u,v \in {\cal D}'(U)$  the wavefront set is not additive and, in general, one simply has $WF(u+v) \subset WF(u) \cup WF(v)$.
\end{remark}

As, an elementary example, let us consider the wavefront set of the distribution $\delta_y(x) = \delta(x-y)$ on $\mathbb{R}^n$~\cite[p.103]{Strohmaier}:
\begin{equation}\label{eq:wf-delta1}
	WF(\delta_y) = \{ (y,k_y) \in T^*\mathbb{R}^n \mid k_y \ne 0 \} .
\end{equation}

\noindent If $U \subset \mathbb R^m$ is an open and non-empty subset,  $T^*U$ is naturally identified with $U \times \mathbb R^m$. In the rest of the chapter  $T^*U \setminus 0    \stackrel {\mbox{\scriptsize  def}} {=} \{(x,p) \in T^*U \:|\: p \neq 0\}$. \\
If $U \subset \mathbb R^m$ is an open non-empty set,  $\Gamma \subset T^*U \setminus 0$ is a {\bf cone}
when   $(x,\lambda k) \in \Gamma$ if $(x,k) \in \Gamma$ and $\lambda>0$.
If the mentioned  cone $\Gamma$  is closed {\em in   the topology of} $T^*U\setminus 0$, we define 
$${\cal D}_\Gamma'  \stackrel {\mbox{\scriptsize  def}} {=} \{ u \in {\cal D}'(U)\:|\: WF(u) \subset \Gamma\}\:.$$ 

\begin{remark}
All these  definitions can be restated for the case of  $U$ replaced with a general smooth manifold and we shall exploit this opportunity shortly.
\end{remark}
We are in a position  to define a relevant notion of convergence \cite{hormander}.

\begin{definition}[Convergence in H\"ormander pseudotopology] 
If $u_j \in {\cal D}_\Gamma'(U)$ is a sequence and $u \in {\cal D}_\Gamma'(U)$, we write $u_j \to u$ in ${\cal D}_{\Gamma}'(U)$ if both the conditions below hold.

(i) $u_j \to u$ weakly in ${\cal D}'(U)$ as $j \to +\infty$,

(ii) $\sup_j \sup_V |p|^N |\widehat{\phi u_j}(p)| <\infty$, $N=1,2, \ldots$, if $\phi \in {\cal D}(U)$ and $V \subset T^*U$ is any  closed cone, whose projection on $U$ is $\supp(\phi)$,
such that $\Gamma \cap V  = \emptyset$. \\
 In this case, we say that $u_j$  converges to $u$ in the {\bf H\"ormander pseudotopology}. 
\end{definition}
It turns out that test functions (whose wavefront set is always empty as said below) are dense even with respect to that notion of convergence \cite{hormander}.
\begin{proposition}
 If $u \in {\cal D}'_\Gamma(U)$, there is a sequence of smooth functions $u_j \in {\cal D}(U)$ such that $u_j \to u$ in $ {\cal D}'_\Gamma(U)$.
\end{proposition}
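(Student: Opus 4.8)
The plan is to produce the approximating sequence by \emph{mollification} (convolution with a rescaled bump function), which on the Fourier side amounts to multiplication by a uniformly bounded symbol tending to $1$; this is the standard argument behind Theorem~8.2.3 of~\cite{hormander}. Fix $\rho \in C_0^\infty(\mathbb{R}^m)$ with $\int \rho \, d^m x = 1$ and $\supp \rho \subset \{|x|\le 1\}$, and set $\rho_j(x) = j^m \rho(jx)$, so that $\{\rho_j\}$ is an approximate identity and $\widehat{\rho_j}(k) = \widehat{\rho}(k/j)$, with $|\widehat{\rho}(k/j)| \le \int |\rho|\, d^m x =: C_0$ and $\widehat{\rho}(k/j) \to \widehat{\rho}(0) = 1$ as $j\to\infty$.

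First I would reduce to the case of compactly supported $u$. Exhausting $U$ by compacts $K_1 \subset K_2 \subset \cdots$ and choosing $\psi_\ell \in C_0^\infty(U)$ with $\psi_\ell \equiv 1$ on a neighbourhood of $K_\ell$, each $\psi_\ell u$ lies in ${\cal E}'(U)$ and, since $\WF(\psi_\ell u) \subset \WF(u) \subset \Gamma$, still belongs to ${\cal D}'_\Gamma$. Because conditions (i) and (ii) in the definition of convergence are tested only against a fixed $\phi \in {\cal D}(U)$ --- for which $\psi_\ell \equiv 1$ on $\supp \phi$ once $\ell$ is large --- a diagonal argument reduces the proposition to exhibiting, for each compactly supported $u \in {\cal D}'_\Gamma$, the approximants $u_j := \rho_j * u$. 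For $j$ large these are smooth functions with support in $(\supp u) + \{|x|\le 1/j\} \subset U$, hence $u_j \in {\cal D}(U)$.

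Condition (i), weak convergence $u_j \to u$ in ${\cal D}'(U)$, is the elementary statement that convolution with an approximate identity converges in ${\cal D}'$. The real content is condition (ii): the \emph{uniform-in-$j$} rapid decay of $\widehat{\phi u_j}$ along any closed cone $V$ over $\supp \phi$ disjoint from $\Gamma$. Here I would write $\phi u_j = \rho_j * (\phi u) + \big(\phi(\rho_j * u) - \rho_j * (\phi u)\big)$. The first term has Fourier transform $\widehat{\rho}(p/j)\,\widehat{\phi u}(p)$; since $\WF(\phi u) \subset \Gamma$ forces $\widehat{\phi u}$ to decay rapidly in the directions singled out by $V$, and $|\widehat{\rho}(p/j)| \le C_0$, this term inherits rapid decay uniformly in $j$. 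The commutator term is controlled because multiplication by the smooth $\phi$ followed by convolution against $\rho_j$ produces the kernel $\rho_j(x-y)\big(\phi(x)-\phi(y)\big)$, in which the factor $\phi(x)-\phi(y) = O(|x-y|)$ lowers the order by one; iterating this gain yields decay of every order, again uniformly in $j$.

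The hard part is precisely this last uniform estimate. The microlocal decay of $\widehat{\phi u}$ along $V$ is \emph{not} pointwise decay of $\widehat u$ (the cone $V$ may be ``bad'' for $u$ over points of $\supp u$ outside $\supp\phi$); it arises from non-stationary-phase cancellation in the convolution $\widehat{\phi u} = (2\pi)^{-m}\widehat{\phi} * \widehat{u}$, localised by $\phi$ to a region where $V$ avoids $\Gamma$. Re-running that integration-by-parts estimate with the extra factor $\widehat{\rho}(k/j)$ inserted, and checking that a uniformly bounded multiplier does not spoil the bounds so that the constants may be taken independent of $j$, is the technical crux; it is carried out in the proof of Theorem~8.2.3 of~\cite{hormander}, to which I would appeal for the detailed oscillatory-integral bookkeeping. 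Finally, covering $\supp\phi$ by finitely many small balls on which the fibres of $\Gamma$ are nearly constant lets one arrange that $V$ stays uniformly separated from $\Gamma$, making the estimate genuinely uniform over the cone.
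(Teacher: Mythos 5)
The paper actually offers no proof of this proposition at all — it merely cites H\"ormander — and your mollification-plus-cutoff sketch is exactly the standard argument behind Theorem 8.2.3 of that reference, to which you too ultimately defer for the key uniform estimate, so your approach coincides with the paper's. One small caution within that sketch: the Taylor-expansion treatment of the commutator term $\phi(\rho_j * u) - \rho_j*(\phi u)$ gains powers of $1/j$, not decay in $|p|$, so as stated it does not yield the required uniform rapid decay on the cone; the clean way to close that step (and what H\"ormander's proof effectively does) is to write $\widehat{\phi(\rho_j*u)}(p)=\int \rho(w)\,e^{-ipw/j}\,\widehat{\phi(\cdot+w/j)\,u}(p)\,dw$ and invoke the uniform wavefront estimate over the bounded family of translated cutoffs $\{\phi(\cdot+z)\}_{|z|\le\epsilon}$, using your final compactness remark to keep the enlarged support and the cone separated from $\Gamma$.
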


\noindent Let us immediately state a few elementary  properties of wavefront sets \cite{hormander1, hormander,Strohmaier,FJ}.  We remind the reader that $x\in U$ is a {\bf regular point} of a distribution $u\in{\cal D}^\prime(U)$ if there is an open neighborhood $O\subset U$ of $x$ such that $\langle u, f \rangle = \langle h_u, f\rangle$ for some $h_u\in {\cal D}(U)$ and every $f \in {\cal D}(U)$ supported in $O$. The closure of the complement of the set of regular points is the {\bf singular support} of $u$ by definition.

\begin{theorem}[Elementary properties of $WF$]\label{WFelem}
 \label{thm_PropertiesWavefront} Let $u\in{\cal D}^\prime(U)$, $U\subset \mathbb R^m$ open and non-empty.

{\bf (a)}  $u$ is smooth if and only if $WF(u)$ is empty. More precisely, the {\em singular support} of $u$ is the projection of $WF(u)$ on $\mathbb R^m$.

{\bf (b)}  If $P$ is a partial differential operator on $U$  with smooth coefficients: $$WF(Pu)\subset WF(u)\,.$$

{\bf (c)}  Let  $V\subset\mathbb R^m$ be an open set and let $\chi:V\to U$ be a diffeomorphism. The pull-back $\chi^*u \in {\mathcal D}'(V)$ of $u$ defined by $\chi^*u(f)=u(\chi_*f)$ for all $f\in{\cal D}(V)$ fulfils
$$WF(\chi^*u)=\chi^*WF(u)  \stackrel {\mbox{\scriptsize  def}} {=} \left\{(\chi^{-1}(x),\chi^*k)\;|\;(x,k)\in WF(u)\right\}\,,$$
where $\chi^*k$ denotes the pull-back of $\chi$ in the sense of {\em cotangent vectors}.

{\bf (d)} Let $V\subset \mathbb R^n$ be an open set and $v \in {\cal D}'(V)$, then $WF(u\otimes v)$ is included in
$$(WF(u) \times WF(v))\cup ((\supp u \times \{0\})\times WF(v)) \cup 
(WF(u)\times (\supp v \times \{0\})) \:.$$

{\bf (e)} Let $V\subset \mathbb R^n$,  $K \in {\cal D}'(U\times V)$ and   $f \in {\cal D}(V)$, then
$$WF({\cal K}f) \subset \{(x, p)\in TU \setminus 0 \:|\:  (x,y,p, 0) \in WF(K) \:\:\mbox{for some}\: \:y \in \supp(f) \}\:,$$ 
where  ${\cal K} :{\cal D}(V) \mapsto {\cal D}'(U)$ is the continuous linear map associated to $K$
in view of Schwartz kernel theorem.
\end{theorem}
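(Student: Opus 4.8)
The plan is to derive all five properties from the definition of the wavefront set together with one technical workhorse: multiplication by a test function acts in Fourier space as convolution with a Schwartz function, and such convolution shrinks a cone of rapid decay only slightly. Concretely, I would first record the lemma that if $g\in{\cal D}(U)$ and $\widehat{\psi u}$ decays rapidly on a closed cone $\Gamma'$, then $\widehat{g\psi u}=\hat g*\widehat{\psi u}$ decays rapidly on every closed cone $\Gamma\subset\mathrm{int}\,\Gamma'$; the estimate follows by splitting the convolution integral into the region where the integration variable lies in $\Gamma'$ (handled by the rapid decay of $\widehat{\psi u}$) and its complement (handled by the Schwartz decay of $\hat g$, since there the two directions are uniformly separated). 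This lemma expresses the pseudolocal principle and lets one multiply cutoffs freely, so that the microlocal regularity of $u$ at $(x_0,k_0)$ depends on the cutoff only through a neighborhood of $x_0$.

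For \textbf{(a)}, if $u$ is smooth then $fu$ is a compactly supported smooth function for any $f\in{\cal D}(U)$, so $\widehat{fu}$ is Schwartz and every direction is regular; conversely, if every $(x_0,k)$ with $k\neq0$ is regular directed, then for each $k$ on the unit sphere I pick a cutoff and a conic neighborhood on which the corresponding $\widehat{f_k u}$ decays rapidly, extract a finite subcover by compactness of the sphere, and multiply the finitely many cutoffs (using the convolution lemma) to obtain a single $f$ with $f(x_0)\neq0$ and $\widehat{fu}$ rapidly decreasing in \emph{all} directions, whence $fu$ is smooth near $x_0$. The same localization shows that $x_0$ lies off the singular support exactly when no $(x_0,k)$ belongs to $WF(u)$, which is the refined claim. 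For \textbf{(b)} I would write $\chi Pu=\sum_\alpha \chi a_\alpha\,\partial^\alpha(\psi u)$ with $\psi\equiv1$ near $\supp\chi$ (legitimate because $P$ is local, so $\chi\,\partial^\alpha((1-\psi)u)=0$), whence $\widehat{\chi Pu}(k)=\sum_\alpha \widehat{\chi a_\alpha}*\big[(ik)^\alpha\widehat{\psi u}\big](k)$; each factor $(ik)^\alpha\widehat{\psi u}$ inherits rapid decay on $\Gamma'$ (a polynomial times a rapidly decreasing function is rapidly decreasing), and convolution with the Schwartz function $\widehat{\chi a_\alpha}$ preserves this by the lemma, giving $(x_0,k_0)\notin WF(Pu)$.

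For \textbf{(d)} the simplification is that the Fourier transform factorizes, $\widehat{(f\otimes g)(u\otimes v)}(k,l)=\widehat{fu}(k)\,\widehat{gv}(l)$. Fixing $(x_0,y_0,k_0,l_0)$ outside the right-hand side, I would split a conic neighborhood of $(k_0,l_0)$ into the regions where $|k|$ dominates, where $|l|$ dominates, and where both are comparable, and in each region use that at least one of $\widehat{fu}$, $\widehat{gv}$ decays rapidly (because the relevant base covector is not in $WF(u)$ resp.\ $WF(v)$, or is zero while the point lies off $\supp u$ resp.\ $\supp v$) while the other is at most polynomially bounded, being the Fourier transform of a compactly supported distribution. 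Covering the mixed directions is exactly what forces the three union terms. Part \textbf{(e)} then follows by writing $\mathcal{K}f=\pi_*\big(K\cdot(1\otimes f)\big)$, the fibre integration over $V$ of a distribution whose $y$-support lies in $\supp f$; the transformation law of $WF$ under this pushforward---that fibre covectors must vanish and the fibre base point ranges over the support---gives precisely the stated inclusion, and it is the dual companion of (c).

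The genuine obstacle is \textbf{(c)}, the diffeomorphism invariance; everything else is comparatively soft. Here I would localize near $x_0$ and write $\widehat{f\,\chi^*u}(k)=u\!\big(\chi_*(f\,e^{ik\cdot x})\big)$, then insert a Fourier representation of the localized $u$ to obtain an oscillatory integral with phase $\Phi(x,\eta)=\eta\cdot\chi(x)-k\cdot x$. The plan is a non-stationary phase estimate: for $\eta$ ranging over the cone on which $\widehat{gu}$ fails to decay and $k$ confined to a closed cone about a direction \emph{different} from $\chi^*\eta$, the $x$-gradient $\partial_x\Phi=(D\chi)^{\!\top}\eta-k$ is bounded below in modulus by a positive multiple of $|k|+|\eta|$, so iterated integration by parts in $x$ yields arbitrary polynomial decay, uniformly in $\eta$. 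Controlling this uniformity---carefully separating the stationary region $k\approx(D\chi)^{\!\top}\eta$, which reproduces the image $\chi^*WF(u)$, from the non-stationary region, and dominating the $\eta$-integral against the at-most-polynomial growth of $\widehat{gu}$ there---is the delicate heart of the argument; the reverse inclusion, hence equality, comes by applying the same reasoning to $\chi^{-1}$. Since all of this is classical H\"ormander theory, one may alternatively cite \cite{hormander} and reproduce only the convolution lemma and the non-stationary phase step for completeness.
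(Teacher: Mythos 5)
The paper does not actually prove Theorem~\ref{WFelem}: it states these facts as classical and refers to the literature (\cite{hormander1,hormander,Strohmaier,FJ}), so there is no internal proof to compare against. Your sketch is a correct reconstruction of exactly the arguments contained in those references (H\"ormander's Chapter~8): the cutoff-convolution lemma as the workhorse, compactness of the unit sphere plus products of cutoffs for (a), locality of $P$ plus polynomial multiplication in Fourier space for (b), the non-stationary phase estimate with phase $\Phi(x;\eta,k)=\eta\cdot\chi(x)-k\cdot x$ for (c) with the reverse inclusion via $\chi^{-1}$, factorization $\widehat{(f\otimes g)(u\otimes v)}=\widehat{fu}\,\widehat{gv}$ with the three-region cone splitting for (d), and the identity $\widehat{g\,{\cal K}f}(k)=\widehat{(g\otimes f)K}(k,0)$ (equivalently your fibre-integration formulation) for (e). One ingredient you use tacitly and should state explicitly in the convolution lemma: in the region where the integration variable lies \emph{outside} the cone of rapid decay, the Schwartz decay of $\hat g$ alone does not suffice; you also need the Paley--Wiener--Schwartz polynomial bound $|\widehat{\psi u}(\eta)|\le C(1+|\eta|)^{M}$ (which you do invoke later in (d)) to make that part of the convolution integral converge and decay rapidly in $|k|$. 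With that made explicit, the outline is sound, and your closing option of citing \cite{hormander} for the delicate uniformity in (c) is precisely what the paper itself does.
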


The result (e), with a suitably improved  statement,  can be extended to to the case of $f$ replaced by a distribution \cite{hormander}.\\
From (c) we conclude that  the wavefront set transforms covariantly under diffeomorphisms as a subset of $T^*U$, with $U$ an open subset of $\mathbb R^m$. Therefore we can immediately extend the definition of $WF$ to distributions on a  manifold $\Mc$ simply by patching together wavefront sets in different coordinate patches of $\Mc$ with the help of a partition of unity. As a result, for $u\in{\cal D}^\prime(\Mc)$, $WF(u)\subset T^*\Mc\setminus 0$. Also the notion of convergence in the H\"ormander pseudotopology easily extends to manifolds. All the statements of theorem \ref{WFelem} extend to the case where $U$ and $V$ are smooth manifolds.

\begin{figure}
\begin{center}
	\includegraphics{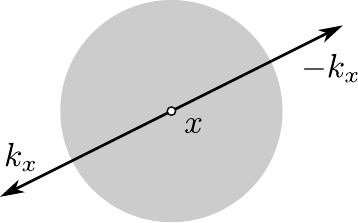}
\end{center}
\caption{Wavefront set of $\delta(x,y)$ on $\Mc\times \Mc$, defined in~\eqref{eq:wf-delta}, consists of points of the form $(x,x,k_x,-k_x)$, $(x,k_x) \in T^*\Mc\setminus 0$.}
\label{fig:wf-delta}
\end{figure}

Following up on~\eqref{eq:wf-delta1}, an elementary example of a distribution on a manifold is $\delta(x,y)$ defined on $\Mc\times \Mc$. Its wavefront set is (Figure~\ref{fig:wf-delta})
\begin{equation}\label{eq:wf-delta}
	WF(\delta) = \{ (x,x,k_x,-k_x) \in T^*\Mc^2\setminus 0 \mid (x,k_x) \in T^*\Mc \setminus 0 \} \: .
\end{equation}
The necessity of the sign reversal in the covector $-k_x$ corresponding to the second copy of $\Mc$ can be seen from the formula $\delta(x,y) = \delta(x-y)$ on $\mathbb{R}^n$. 

To conclude this very short survey,  we wish to stress some remarkable results of wavefront set technology respectively concerning  (a) the theorem of  {\em propagation of singularities}, (b) the {\em product of distributions}, (c) {\em composition of kernels}.

Let us start with  an elementary version of the celebrated {\em theorem of propagation of singularities} formulated as in \cite{Strohmaier}.

\begin{remark}\label{remarkps}$\null$

{\bf (1)} Let us remind the reader that if, in local coordinates, $P= \sum_{|\alpha| \leq m} a_\alpha(x) \partial^\alpha$ is a differential operator of order $m\ge 1$ (it is assumed that $a_\alpha\neq 0$ for some $\alpha$ with $|\alpha|=m$)   on a manifold $\Mc$, where $a$ is a {\em multi-index} \cite{hormander}, and $a_\alpha$ are smooth coefficients,  then the polynomial $\sigma_P(x,p) = \sum_{|\alpha| =  m} a_\alpha(x) (ip)^\alpha$ is called the {\bf principal symbol} of $P$. It is possible to prove that $(x, \xi) \mapsto \sigma_P(x,p)$ determines a  well defined function on $T^*\Mc$ which, in general is complex valued.   The {\bf characteristic set} of $P$, indicated by
$char(P) \subset T^*\Mc\setminus 0$, denotes the set of zeros of $\sigma_P$ made of {\em non-vanishing} covectors.
The principal symbol  $\sigma_P$ can be used as a {\em Hamiltonian function} on $T^*\Mc$ and the maximal solutions of Hamilton equations define the {\bf local flow} of $\sigma_P$ on $T^*\Mc$.   

{\bf (2)} 
The principal symbol of the Klein-Gordon operator is  $-g^{ab}(x) p_ap_b$. 
It is an easy exercise \cite{Strohmaier} to prove that if $\Mb$ is a Lorentzian manifold and 
$P$ is a {\bf normally hyperbolic operator}, i.e., the principal symbol is the same as the one of Klein-Gordon operator, then the integral curves of  the local flow of $\sigma_P$ are nothing but the lift to $T^*\Mc$
of the  {\em geodesics} of the metric $g$ parametrized by an {\em affine parameter}.
Finally,  $char(P) = \{(x,p) \in T^*\Mc \setminus 0 \:|\: g^{ab}(x) p_ap_b =0\}$  
\end{remark}

\begin{theorem}[Microlocal regularity and propagation of singularities]\label{tps}
Let $P$ be a differential operator on a manifold $\Mc$ whose principal symbol is real valued, if $u,f \in {\cal D}'(\Mc)$ are such that $Pu=f$ then the following facts hold.

{\bf (a)}  $WF(u) \subset char(P) \cup WF(f)$,

{\bf (b)} $WF(u)\setminus Wf(f)$ is invariant under the local flow of $\sigma_P$ on  $T^*\Mc \setminus WF(f)$.  
\end{theorem}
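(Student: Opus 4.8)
The plan is to invoke the pseudodifferential operator ($\Psi$DO) calculus, which supplies the microlocal refinement of the elementary properties collected in Theorem~\ref{WFelem}. The starting point is the operator-theoretic reformulation of Definition~\ref{def_WaveFrontSet}: a point $(x_0,\xi_0)\in T^*\Mc\setminus 0$ fails to lie in $WF(u)$ if and only if there is a properly supported $A\in\Psi^0(\Mc)$, elliptic at $(x_0,\xi_0)$ (non-vanishing principal symbol there), with $Au\in C^\infty(\Mc)$; equivalently $WF(u)=\bigcap\{\,char(A): A\in\Psi^0,\ Au\in C^\infty\,\}$. Two facts from the calculus are used throughout: $\Psi$DOs do not enlarge wavefront sets, $WF(Au)\subset WF(u)$, and the commutator $[P,A]$ has principal symbol $\sigma_{[P,A]}=\tfrac1i\{p,a\}=\tfrac1i H_p a$, where $p=\sigma_P$ is real by hypothesis and $a=\sigma_A$.

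For part~(a) I would argue by microlocal elliptic regularity. Fix $(x_0,\xi_0)\notin char(P)\cup WF(f)$. Since $p(x_0,\xi_0)\neq0$, the operator $P$ is microlocally elliptic at $(x_0,\xi_0)$, so inverting the symbol asymptotically produces a microlocal parametrix $Q\in\Psi^{-m}(\Mc)$ with $QP=\mathrm{Id}+R$, where $R$ is microlocally smoothing near $(x_0,\xi_0)$. Applying $Q$ to $Pu=f$ gives $u=Qf-Ru$. As $(x_0,\xi_0)\notin WF(f)$ and $\Psi$DOs do not enlarge $WF$, the term $Qf$ is microlocally smooth at $(x_0,\xi_0)$, while $Ru$ is smooth there by construction; hence $(x_0,\xi_0)\notin WF(u)$, which is precisely $WF(u)\subset char(P)\cup WF(f)$.

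Part~(b) is the substantial one, and I would prove it by the positive-commutator (energy) method of H\"ormander. Since $p$ is real, $H_p$ generates the bicharacteristic flow, which on $char(P)$ consists of null bicharacteristics (the lifted null geodesics in the Klein--Gordon case, by Remark~\ref{remarkps}). Invariance of $WF(u)\setminus WF(f)$ follows once one shows that microlocal $H^s$ regularity \emph{propagates}: if $u$ is microlocally $H^s$ near a point $\gamma_0$ of a bicharacteristic segment on which $f$ is microlocally $C^\infty$, then $u$ is microlocally $H^s$ at the forward endpoint $\gamma_1$. The device is to choose a symbol $a\ge 0$, homogeneous and conically supported in a thin neighborhood of the segment, arranged so that
\[
	H_p a = -b^2 + e \,,
\]
with $b$ elliptic at $\gamma_1$ and $e$ supported where $u$ is already known to be smooth. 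Quantizing $A=\mathrm{Op}(a)$ and using the standard commutator identity that expresses $\langle i[P,A]u,u\rangle$ through pairings involving $Pu=f$, together with the sharp G\r{a}rding inequality to absorb the sign-indefinite lower-order remainders, yields the a priori bound $\|\mathrm{Op}(b)u\|^2\le C$, the right-hand side being finite by microlocal smoothness of $f$ along the segment and of $u$ near $\gamma_0$. Since $s$ is arbitrary (handled by the usual increment argument in the Sobolev exponent), smoothness propagates to $\gamma_1$, and running the construction in both flow directions gives full invariance.

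The main obstacle is the rigorous execution of part~(b): designing the commutant $a$ with the correct monotonicity along the Hamilton flow, controlling the indefinite error terms through the sharp G\r{a}rding inequality, and, most delicately, making the formal commutator pairings legitimate when $u$ is merely a distribution. This last point requires a regularization, inserting smoothing cutoffs $\mathrm{Op}(\chi(\xi/\Lambda))$ and deriving all estimates uniformly in $\Lambda$ before letting $\Lambda\to\infty$, so that $[P,A]u$ may genuinely be paired against $u$. That uniform control, combined with the inductive bootstrap in the Sobolev exponent, is where essentially all of the analytic work resides.
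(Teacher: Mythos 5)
The paper does not actually prove Theorem~\ref{tps}: the result is imported from the literature (``formulated as in \cite{Strohmaier}''), the underlying theorem being H\"ormander's microlocal elliptic regularity and propagation-of-singularities theorem, so there is no internal proof to compare yours against. Your outline is precisely the standard proof of that quoted result. Part (a), microlocal elliptic regularity via a properly supported parametrix $Q$ with $QP=\mathrm{Id}+R$ and $R$ microlocally smoothing at the point in question, is correct and essentially complete. Part (b) follows the correct strategy, H\"ormander's positive-commutator (energy) method, and you identify exactly where the analytic content lies: the escape function $a$ with $H_p a=-b^2+e$, the sharp G\r{a}rding inequality, the $\Lambda$-regularization needed to make the pairings $\langle i[P,A]u,u\rangle$ legitimate for distributional $u$, and the bootstrap in the Sobolev exponent. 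As written, however, these steps are named rather than executed, so part (b) is a faithful roadmap to the proof in the pseudodifferential literature (e.g., the treatment cited via \cite{hormander}) rather than a self-contained argument. Two small points a complete write-up should also dispose of: first, by part (a) the set $WF(u)\setminus WF(f)$ lies in $char(P)$, and $char(P)$ is itself invariant under the Hamilton flow since $p$ is constant along its own integral curves, which is needed for the invariance assertion to be meaningful; second, at points of $char(P)$ where $H_p$ vanishes or is radial the commutant construction degenerates (for $a$ homogeneous of degree zero, $H_p a$ vanishes along radial directions), but there the invariance claim is trivial because wavefront sets are conic, so these degenerate cases must be split off before running the energy estimate.
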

Let us conclude with the famous H\"ormander definition of {\em product of distributions} \cite{hormander1,hormander}. We need a preliminary definition.
If $\Gamma_1, \Gamma_2 \subset T^*\Mc\setminus 0 $ are closed cones,
$$\Gamma_1+ \Gamma_2  \stackrel {\mbox{\scriptsize  def}} {=} \left\{(x, k_1+k_2) \subset T^*\Mc \;|\;(x,k_1)\in \Gamma_1,\; (x,k_2)\in \Gamma_2\:\: \mbox{for some $x\in \Mc$}\right\}.$$

\begin{theorem}[Product of distributions]\label{teoprod}
Consider  a pair of closed cones $\Gamma_1, \Gamma_2  \subset  T^*\Mc \setminus 0$. If 
$$\Gamma_1 + \Gamma_2 \not \ni (x, 0) \quad \mbox{for all $x \in \Mc$,}$$ then there is a unique bilinear  map,  the {\bf product} of $u_1$ and $u_2$,
$${\cal D}'_{\Gamma_1} \times  {\cal D}'_{\Gamma_2} \ni (u_1,u_2) \mapsto u_1u_2 \in {\cal D}'(\Mc),$$ such that

(i)  it reduces to the standard pointwise product  if $u_1, u_2 \in {\cal D}(\Mc)$,

(ii) it is jointly sequentially continuous in the H\"ormander pseudotopology: If $u_j^{(n)} \to u_j$
in $D_{\Gamma_j}(\Mc)$ for $j=1,2$ then $u^{(n)}_1u^{(n)}_2 \to u_1u_2$ in ${\cal D}_{\Gamma}(\Mc)$, where  $\Gamma$ is a closed cone in $T^*\Mc\setminus 0$ defined as  $\Gamma   \stackrel {\mbox{\scriptsize  def}} {=}  \Gamma_1\cup \Gamma_2\cup \left(\Gamma_1\oplus\Gamma_2\right)$. \\
In particular the following bound always holds if the above product is defined:  
\begin{eqnarray} WF(u_1u_2)\subset  \Gamma_1\cup \Gamma_2\cup \left(\Gamma_1+\Gamma_2\right)\,. \label{boundWF}\end{eqnarray}
\end{theorem}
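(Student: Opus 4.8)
The plan is to reduce the statement to a local construction in $\mathbb{R}^m$ by means of a partition of unity, to define the product through convolution of Fourier transforms, and to extract both the wavefront bound and the continuity/uniqueness statements from a single family of estimates. Since multiplication of distributions is a local operation and $WF$ is defined chart by chart, it suffices to construct the product near an arbitrary point $x_0 \in \Mc$, working in a coordinate neighbourhood identified with an open set $U \subset \mathbb{R}^m$. Fixing $\phi \in {\cal D}(U)$ with $\phi(x_0) \neq 0$, the compactly supported distributions $v_j := \phi u_j \in {\cal E}'(U)$ satisfy $WF(v_j) \subset \Gamma_j$; by the Paley--Wiener--Schwartz theorem their Fourier transforms $\hat v_j$ are entire of at most polynomial growth, $|\hat v_j(k)| \le C_j(1+|k|)^{N_j}$, while $WF(v_j) \subset \Gamma_j$ forces $\hat v_j$ to decay rapidly outside every closed conic neighbourhood of the direction set $\Sigma_j := \{k \mid (x,k) \in \Gamma_j \text{ for some } x \in \supp\phi\}$.

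I would then define the product locally by
\begin{equation*}
	\widehat{v_1 v_2}(k) := \frac{1}{(2\pi)^m}\int_{\mathbb{R}^m} \hat v_1(k-\eta)\,\hat v_2(\eta)\, d\eta ,
\end{equation*}
which is the only admissible definition in view of property (i), since under the paper's Fourier conventions pointwise multiplication of test functions corresponds exactly to convolution of their Fourier transforms. The analytic heart of the argument, and the step I expect to be the main obstacle, is to show that this convolution integral converges absolutely to a function of polynomial growth and, more sharply, decays rapidly in every direction $k_0 \notin \Sigma_1 \cup \Sigma_2 \cup (\Sigma_1 + \Sigma_2)$. The hypothesis $\Gamma_1 + \Gamma_2 \not\ni (x,0)$ is precisely what guarantees $\Sigma_1 \cap (-\Sigma_2) = \emptyset$, so that a large $k-\eta$ pointing into $\Sigma_1$ and a large $\eta$ pointing into $\Sigma_2$ cannot cancel. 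Concretely, for $|k|$ large in a direction near $k_0$ I would split the $\eta$-integration into the region where $\eta$ lies outside a conic neighbourhood of $\Sigma_2$ (there $\hat v_2(\eta)$ is rapidly decreasing and the polynomial bound on $\hat v_1(k-\eta)$ is harmless) and its complement, where necessarily $k-\eta$ lies outside a conic neighbourhood of $\Sigma_1$ (by the separation just noted together with $k_0 \notin \Sigma_1 + \Sigma_2$), so that $\hat v_1(k-\eta)$ is rapidly decreasing; bookkeeping of these two estimates yields the required rapid decay.

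This same analysis localizes the directions of non-rapid decay of $\widehat{v_1 v_2}$ to $\Sigma_1 \cup \Sigma_2 \cup (\Sigma_1 + \Sigma_2)$, which upon patching over charts gives the wavefront bound~\eqref{boundWF}, and in particular shows that $v_1 v_2$ is a well-defined element of ${\cal D}'(U)$ whose inverse Fourier transform makes sense. Finally I would obtain uniqueness and joint continuity together. Using the density of test functions in ${\cal D}'_{\Gamma_j}(U)$ stated above, choose smooth $u_j^{(n)} \to u_j$ in the H\"ormander pseudotopology; for smooth functions the product is the pointwise one by (i), and the \emph{uniform} version of the convolution estimates from the previous step shows that $u_1^{(n)} u_2^{(n)}$ converges in ${\cal D}'_{\Gamma}(U)$, with $\Gamma = \Gamma_1 \cup \Gamma_2 \cup (\Gamma_1 \oplus \Gamma_2)$, to the distribution just constructed. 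This simultaneously verifies (ii) and shows that any bilinear map obeying (i) and (ii) must coincide with our construction, establishing uniqueness; independence of the cutoff $\phi$ and consistency of the patched local products then follow from this uniqueness.
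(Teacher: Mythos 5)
The paper does not actually prove Theorem~\ref{teoprod}: it is quoted as a classical result of H\"ormander and referred to \cite{hormander1,hormander}, so there is no ``paper proof'' to compare against line by line. Your outline is precisely the standard argument behind those references (H\"ormander's Theorem 8.2.10): localize with a cutoff $\phi$, pass to compactly supported pieces $v_j=\phi u_j$, define the product by convolution of Fourier transforms, and extract absolute convergence, rapid decay outside $\Sigma_1\cup\Sigma_2\cup(\Sigma_1+\Sigma_2)$, the wavefront bound, and joint sequential continuity plus uniqueness from uniform versions of the same estimates. This is the right route, and the uniqueness argument via density of ${\cal D}(\Mc)$ in ${\cal D}'_{\Gamma_j}$ together with properties (i) and (ii) is exactly as it should be.

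One point needs tightening. You assert that the hypothesis $\Gamma_1+\Gamma_2\not\ni(x,0)$ ``is precisely what guarantees $\Sigma_1\cap(-\Sigma_2)=\emptyset$.'' As you have defined them, $\Sigma_j$ collects covector directions over \emph{all} points of $\supp\phi$, whereas the hypothesis is pointwise in $x$: it does not forbid $(x,k)\in\Gamma_1$ and $(y,-k)\in\Gamma_2$ for distinct $x,y\in\supp\phi$, so for an arbitrary cutoff the claimed separation can fail and your convolution estimate breaks down. The fix is standard but must be said: fix $x_0$, note that the fibers $\Gamma_{1,x_0}$ and $-\Gamma_{2,x_0}$ are disjoint closed cones, hence their sphere intersections are disjoint compact sets with positive angular separation; then, using closedness of $\Gamma_1,\Gamma_2$ in $T^*\Mc\setminus 0$ and a compactness argument, choose $\supp\phi$ small enough that the separation persists for the direction sets over $\supp\phi$. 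The same shrinking-support argument (intersecting over ever smaller cutoffs around $x_0$) is what upgrades your bound on the directions of non-rapid decay, which a priori mixes covectors based at different points of $\supp\phi$, to the sharp fiberwise bound $WF(u_1u_2)\subset\Gamma_1\cup\Gamma_2\cup(\Gamma_1+\Gamma_2)$, where the sum is taken over a common base point. With that supplement your proof is complete and coincides with the cited one.
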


From the examples~\eqref{eq:wf-delta1} and~\eqref{eq:wf-delta} and the simple observation that
\begin{equation}
	\mathbb{R}^n\setminus \{0\} + \mathbb{R}^n\setminus \{0\}
	= \mathbb{R}^n \ni 0,
\end{equation}
it is clear that the multiplication of two $\delta$-functions with overlapping supports, as is to be expected, does not satisfy the above conditions.

Let us come to the last theorem concerning the composition of distributional kernels.
Let $X,Y$ be smooth manifolds. If $K \in {\cal D}'(X \times Y)$, the continuous map  associated to $K$ by the Schwartz kernel theorem  will be denoted  by ${\cal K}: {\cal D}(Y) \to {\cal D}'(X)$. We shall also adopt 
the following standard notations:
\begin{align}WF(K)_X &  \stackrel {\mbox{\scriptsize  def}} {=} \{(x,p) \:|\: (x,y, p,0) \in WF(K) \quad \mbox{for some $y\in Y$}\}\nonumber\:, \\
WF(K)_Y &  \stackrel {\mbox{\scriptsize  def}} {=} \{(y,q) \:|\: (x,y,0,q) \in WF(K) \quad \mbox{for some $x\in X$}\}\nonumber\:, \\
WF'(K) &  \stackrel {\mbox{\scriptsize  def}} {=} \{(x,y,p,q) \:|\: (x,y,p,-q) \in WF(K)\}\nonumber\:, \\
WF'(K)_Y &  \stackrel {\mbox{\scriptsize  def}} {=} \{(y,q) \:|\: (x,y,0,-q) \in WF(K) \quad \mbox{for some $x\in X$}\}\:.\nonumber 
\end{align}

\begin{theorem}[Composition of kernels]\label{thm:kern}
Consider three smooth manifolds $X, Y, Z$ and  $K_1 \in {\cal D}'(X\times Y)$, $K_2\in {\cal D}'(Y \times Z)$.
If $WF'(K_1)_Y \cap WF(K_2)_Y = \emptyset$  and the projection $$\supp K_2 \ni (y,z) \mapsto z \in Z$$ is proper
(that is, the inverse of a compact set is compact), then the composition ${\cal K}_1 \circ {\cal K}_2$ is well defined, giving rise to $K\in {\cal D}'(X,Z)$,  and reduces to the standard one when the kernel are smooth. It finally holds (the symbol $\circ$ denoting the composition of relations)
\begin{multline} 
	WF'(K) \subset  WF'(K_1) \circ WF'(K_2)
	 \cup (WF(K_1)_X \times Z \times \{0\}) \\ 
	{} \cup  (X \times \{0\} \times WF'(K_2)_Z) \:.
\end{multline} 
\end{theorem}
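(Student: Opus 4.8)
The plan is to realize the kernel composition as a sequence of three operations whose wavefront-set behaviour is already under control: the exterior tensor product, multiplication by a diagonal delta, and integration (pushforward) over the intermediate variable $Y$. Concretely, for the composed kernel one wants $K(x,z) = \int_Y K_1(x,y)K_2(y,z)\,dy$, which I would write as $K = \int_{Y\times Y}\big[(K_1\otimes K_2)\cdot(1\otimes\delta_{\mathrm{diag}}\otimes 1)\big]$, where $K_1\otimes K_2\in{\cal D}'(X\times Y\times Y\times Z)$, the factor $\delta_{\mathrm{diag}}=\delta(y-y')$ ties the two copies of $Y$ together, and the integration is over the two $Y$ variables. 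Each of these three moves is governed by a result available above, and chaining their wavefront estimates produces the asserted bound on $WF'(K)$.

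First I would form $u = K_1\otimes K_2$ and bound $WF(u)$ using Theorem~\ref{WFelem}(d): it is contained in $(WF(K_1)\times WF(K_2))\cup((\supp K_1\times\{0\})\times WF(K_2))\cup(WF(K_1)\times(\supp K_2\times\{0\}))$. Next I would multiply $u$ by the distribution $1\otimes\delta_{\mathrm{diag}}\otimes 1$, whose wavefront set (by~\eqref{eq:wf-delta}) consists of covectors $(x,y,y,z;0,\eta,-\eta,0)$ with $\eta\neq 0$. Existence of this product is decided by Theorem~\ref{teoprod}: I must check that the sum of the two cones never contains a zero covector. A short case analysis against the three pieces of $WF(u)$ shows the only potentially dangerous contribution is a covector $(x,y,y,z;0,-\kappa,\kappa,0)$ coming from the piece $WF(K_1)\times WF(K_2)$, and this occurs precisely when $(y,\kappa)\in WF'(K_1)_Y\cap WF(K_2)_Y$. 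The hypothesis $WF'(K_1)_Y\cap WF(K_2)_Y=\emptyset$ is therefore exactly what makes the product well defined; the support pieces contribute no zero covector because they force $\eta_1=0$ or $\eta_2=0$, incompatible with $\kappa\neq 0$.

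Then I would integrate the resulting distribution over the (diagonalised) $Y$ variable. Here the properness of $\supp K_2\ni(y,z)\mapsto z$ guarantees that, when tested against $\phi(x)\psi(z)$ with compact supports, the $Y$-integration runs over a compact set, so the pushforward $K\in{\cal D}'(X\times Z)$ exists and reduces to the ordinary composition $\int K_1 K_2\,dy$ when the kernels are smooth (Fubini). The wavefront rule for integrating out a variable keeps exactly those covectors of the product whose $Y$-component vanishes and drops that component. Feeding in the three pieces of $WF(u)$ and imposing the diagonal relation $\eta_1=-\eta_2$ together with vanishing total $Y$-momentum then yields the three terms of the conclusion: the piece $WF(K_1)\times WF(K_2)$ gives the relation composition $WF'(K_1)\circ WF'(K_2)$; the piece $WF(K_1)\times(\supp K_2\times\{0\})$ forces $\eta_1=0$, hence contributes $WF(K_1)_X\times Z\times\{0\}$; and the piece $(\supp K_1\times\{0\})\times WF(K_2)$ forces $\eta_2=0$, contributing $X\times\{0\}\times WF'(K_2)_Z$ (the primes matching up automatically from the definitions of the decorated wavefront sets).

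I expect the main obstacle to be the integration step rather than the product step. The product is handled cleanly by Theorem~\ref{teoprod} once the cone condition is verified, but integrating out $Y$ requires a pushforward wavefront estimate that is not stated above in exactly the needed form — Theorem~\ref{WFelem}(e) is its specialisation to the action on a single test function. I would therefore either invoke the distributional extension of~(e) mentioned in the text just after Theorem~\ref{WFelem}, or prove the pushforward bound directly by testing against $\phi(x)\psi(z)$, using the H\"ormander pseudotopology and the density of smooth approximants to justify exchanging the integral with the wavefront estimate; the properness hypothesis is precisely what controls the non-compact $Y$-direction and secures convergence. Verifying that the bookkeeping of primes and the relation composition $WF'(K_1)\circ WF'(K_2)$ comes out correctly is routine but must be done carefully to match the stated form.
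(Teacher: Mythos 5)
The paper itself gives no proof of this theorem: it is quoted as a standard result of microlocal analysis (H\"ormander's kernel composition theorem, from the reference cited as \cite{hormander}), so there is no in-paper argument to compare yours against line by line. Judged on its own merits, your proposal reconstructs the standard proof and is sound. The decomposition into tensor product, multiplication by the diagonal delta, and fibre integration is exactly how the composition is usually microlocalized; your cone-sum case analysis correctly isolates $WF'(K_1)_Y\cap WF(K_2)_Y=\emptyset$ as precisely the condition that makes the product exist (the support pieces of the tensor-product bound indeed cannot produce a zero total covector); your use of properness of $\supp K_2\ni(y,z)\mapsto z$ to make the fibre over a compact set of $X\times Z$ compact is the right way to define the pushforward; and the final bookkeeping of the three terms is correct, including the subtle point that contributions with vanishing intermediate covector land inside the relation composition $WF'(K_1)\circ WF'(K_2)$, since elements of $WF'(K_1)$ and $WF'(K_2)$ are allowed to have zero covector in one factor. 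The one ingredient not available in the paper's toolkit, as you yourself flag, is the fibre-integration (pushforward) wavefront bound: this is a genuine external input, essentially equivalent to the distributional extension of Theorem~\ref{WFelem}(e) that the paper only alludes to, and it must either be imported from the literature or proved by a direct rapid-decay estimate in the integrated variables. Finally, note an equivalent and slightly leaner variant, closer to H\"ormander's own argument: work on the triple product $X\times Y\times Z$, multiply the two pullbacks $K_1\otimes 1_Z$ and $1_X\otimes K_2$ (the same hypothesis guarantees this product exists), and push forward along the projection to $X\times Z$; this avoids the quadruple product and the explicit diagonal delta, though it buys nothing essential beyond economy.
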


Comparing with~\eqref{eq:wf-delta}, note that $WF'(\delta)$ is the diagonal subset $\Delta \subset T^*\Mc\times T^*\Mc$. In the composition of relations, $\Delta$ acts as an identity, which is consistent with the above theorem and the fact that $\delta(x,y)$ acts as an identity for the composition of distributional kernels.

\subsection{Microlocal reformulation}\label{secmicrohad}
Let us focus again on the two-point function of Minkowski quasifree vacuum state. Form (\ref{W}) we see that the singular support of $\omega_{\mathbb M2}(x,y)$ is  the set 
of couples $(x,y) \in \mathbb M \times \mathbb M$ such that $x-y$ is {\em light like}. From  (a) in theorem \ref{WFelem}, we conclude that $WF(\omega_{\mathbb M2})$ must project onto this set. On the other hand
 (\ref{fourieromega}) can be re-written as
\begin{eqnarray}
\omega_{{\mathbb M}2}(x,y) = \frac{1}{(2\pi)^3} \int_{\mathbb R^4} e^{-i(px+qy)} \theta(p^0) \delta(p^2 + m^2) \delta(p+q) d^4q d^4p \label{fourieromega2}\:,
\end{eqnarray}
where translational invariance is responsible for the appearance of  $\delta(p+q)$ in (\ref{fourieromega2}).
From this couple of facts, also noticing the presence of $\theta(p^0)$ in the integrand,  one guesses  that  the wavefront set of the Minkowski two-point function must be
\begin{eqnarray}\label{eq_MinkwoskiWF} WF(\omega_{\mathbb M2})=\left\{(x,y,p,-p)\in T^*\Mc^2 \:|\:  p^2=0,\;p \:||\: (x-y),\; p^0>0\right\}\:.\end{eqnarray}
Identity  (\ref{eq_MinkwoskiWF}) is, in fact,  correct and holds true also for $m=0$ \cite{rs2}.
The condition $p^0>0$ encodes the {\em energy positivity} of the Minkowski vacuum state. Notice that the couples $(x,y) \in \mathbb M \times \mathbb M$ giving contribution to the wavefront set are always connected by a {\em light-like geodesic} co-tangent to $p$. For $x=y$ there are infinitely many such geodesics, if we allow ourselves to consider zero length curves (consisting of a single point) with a given tangent vector.
 
The structure (\ref{eq_MinkwoskiWF})  of the wavefront set of the two-point function of Minkowski vacuum 
is a particular case of the general notion of a Hadamard state. We re-adapt here
the content of the cornerstone papers \cite{ra1,ra2}  to our formulation. We note that we do not make use of the global 
Hadamard condition  (see (4) in Remark \ref{remHadpar}).
 The following theorem collects various results of \cite{ra1,ra2}.

\begin{figure}
\begin{center}
	\includegraphics{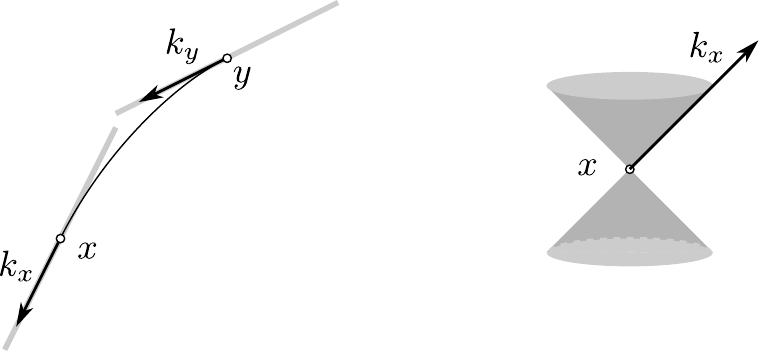}
\end{center}
\caption{The null geodesic relation $(x,k_x) \sim (x,k_y)$ defined in Theorem~\ref{radtheorem}. The points $x$ and $y$ must be linked by a null geodesic, the covectors $k_x$ and $k_y$ must be parallel transported images of each other and both covectors must be coparallel, all with respect to the same null geodesic. Any causal ordering between $x$ and $y$ is admissible. Also, $k_x$, $-k_x$ and $\lambda k_x$ ($\lambda \ne 0$) are all considered coparallel to the same geodesic. In the coincident case, $x=y$, we agree that there are infinitely many (zero-length) null geodesics joining $x$ to itself, corresponding to different non-vanishing null covectors $k_x\in T^*_x\Mc$.}
\label{fig:rel-geod}
\end{figure}

\begin{figure}
\begin{center}
	\includegraphics{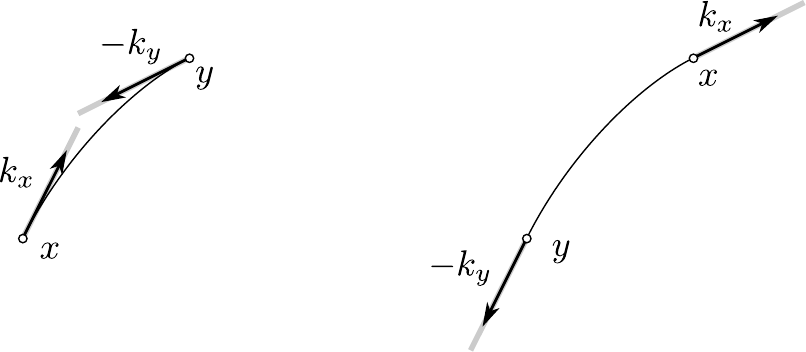}
\end{center}
\caption{The Hadamard form $\mathcal{H}$ of a wavefront set, as defined in Theorem~\ref{radtheorem}. It consists of a subset of points $(x,y,k_x,-k_y) \in T^*\Mc^2$, where $(x,k_x) \sim (y,k_y)$ are linked but the null geodesic relation (Figure~\ref{fig:rel-geod}). The restriction is that $k_x\triangleright 0$, meaning that $k_x(v) \ge 0$ for any future-directed $v\in T_x\Mc$. We illustrate the two possible causal orderings $x \in J^-(y)$ and $x \in J^+(y)$.}
\label{fig:wf-hadamard}
\end{figure}

\begin{theorem}[``Radzikowski theorem'']\label{radtheorem}
 \label{theorad}
For a $4$-dimensional globally hyperbolic (time oriented) spacetime $\Mb$ and referring to the unital $*$-algebra of Klein-Gordon quantum field $\Ac(\Mb)$ with $m^2, \xi \in \mathbb R$ arbitrarily fixed, let $\omega$ be a state on $\Ac(\Mb)$, not necessarily quasifree. \\
{\bf (a)} The following statements are equivalent,\\

 (i)  $\omega$ is Hadamard in the sense of Def. \ref{def_HadamardFormScalar},\\

(ii) the wavefront set of  the two-point function $\omega_2$ has the {\bf Hadamard form} on $\Mb$ or equivalently, it satisfies the {\bf microlocal spectrum condition} on $\Mb$:
\begin{equation}
	WF(\omega_2)=\left\{(x,y,k_x,-k_y)\in T^* \Mc^2\setminus 0\;|\;
(x,k_x)\sim(y,k_y),\;k_x\triangleright 0\right\}\, \stackrel {\mbox{\scriptsize  def}}{=} {\mathcal H}.
\end{equation}
Here, $(x,k_x)\sim(y,k_y)$ means  that there exists a null geodesic $\gamma$ connecting $x$ to $y$ such
that $k_x$ is coparallel and cotangent to $\gamma$ at $x$ and $k_y$ is the parallel transport of $k_x$ from $x$ to $y$ along $\gamma$, Figure~\ref{fig:rel-geod}. $k_x\triangleright 0$ means that $k_x$ does not vanish and is future-directed ($k_x(v) \ge 0$ for all future-directed $v\in T_x\Mc$), Figure~\ref{fig:wf-hadamard}.\\
{\bf (b)} If $\omega'$ is another Hadamard state on  $\Ac(\Mb)$, then $\omega_2-\omega_2' \in C^\infty (\Mc\times \Mc, \mathbb C)$.
\end{theorem}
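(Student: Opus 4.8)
The plan is to deduce part (b) directly from part (a) together with the time-orientation asymmetry built into the microlocal spectrum condition. By part (a), the Hadamard property of both states means $WF(\omega_2) = WF(\omega'_2) = {\cal H}$. Setting $d = \omega_2 - \omega'_2$, the subadditivity of the wavefront set (noted in the Remark after Definition~\ref{def_WaveFrontSet}) gives at once $WF(d) \subset WF(\omega_2) \cup WF(\omega'_2) = {\cal H}$. Since ${\cal H}$ is nonempty, this bound by itself does not conclude the argument, and the additional input I would exploit is a reflection symmetry of $d$.

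First I would observe that $d$ is symmetric under interchange of its arguments. Indeed, by~\eqref{ASE} in Proposition~\ref{prp:2pt} the antisymmetric parts of $\omega_2$ and $\omega'_2$ both equal $iE$, so upon subtraction $d(f,g) = d(g,f)$ for all test functions. Equivalently $s^*d = d$, where $s\colon \Mc\times \Mc \to \Mc\times \Mc$ is the swap $s(x,y) = (y,x)$. By the diffeomorphism covariance of the wavefront set, Theorem~\ref{WFelem}{\bf (c)}, this forces $WF(d)$ to be invariant under the induced map on $T^*\Mc^2$, namely $(x,y,k_x,k_y) \mapsto (y,x,k_y,k_x)$, which I denote by $\hat s$. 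Thus $WF(d) = \hat s(WF(d))$.

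Next I would compare ${\cal H}$ with its image $\hat s({\cal H})$. Every point of ${\cal H}$ has the form $(x,y,k_x,-k_y)$ with $k_x\triangleright 0$ future-directed and $k_y$ the parallel transport of $k_x$ along the connecting null geodesic; since parallel transport preserves the time orientation of a null covector, $k_y$ is future-directed as well, so the second-slot covector $-k_y$ is past-directed. Hence every element of ${\cal H}$ carries a future-directed covector in the first slot and a past-directed one in the second, whereas every element of $\hat s({\cal H})$ carries a past-directed covector in the first slot. Therefore ${\cal H}\cap \hat s({\cal H}) = \emptyset$. Combining $WF(d)\subset {\cal H}$ with $WF(d) = \hat s(WF(d)) \subset \hat s({\cal H})$ yields $WF(d) \subset {\cal H}\cap \hat s({\cal H}) = \emptyset$, and by Theorem~\ref{WFelem}{\bf (a)} a distribution with empty wavefront set is smooth, so $d\in C^\infty(\Mc\times \Mc,\CC)$.

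The main care-point will be getting the action of $s$ on covectors exactly right, including the sign bookkeeping, when applying Theorem~\ref{WFelem}{\bf (c)}, and legitimately invoking that parallel transport along the null geodesic preserves the future/past split of null covectors, so that $-k_y$ is genuinely past-directed. This last fact, equivalently the asymmetry encoded in the condition $k_x\triangleright 0$ of the Hadamard form, is precisely what makes ${\cal H}$ and $\hat s({\cal H})$ disjoint and is the crux of the whole argument; the coincident case $x=y$ (where the connecting geodesic has length zero and $k_y = k_x$) is covered by the same reasoning, since there $\hat s({\cal H})$ still flips a future-directed first-slot covector into a past-directed one.
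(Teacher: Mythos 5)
Your proposal addresses only part (b); part (a) is taken as an input (``by part (a), the Hadamard property of both states means $WF(\omega_2)=WF(\omega'_2)={\mathcal H}$'') but is never established. Part (a) is the actual core of the theorem --- the equivalence between the geometric definition of Definition~\ref{def_HadamardFormScalar} and the microlocal spectrum condition --- and it admits no short argument of the kind you give: the paper itself disposes of it purely by citation, invoking Theorem~9.2 of \cite{ra2} (a local Hadamard state is globally Hadamard) and Theorem~5.1 of \cite{ra1} (globally Hadamard iff the wavefront set has the form ${\mathcal H}$), results which rest on the Duistermaat--H\"ormander theory of distinguished parametrices. So, judged as a proof of the full statement, your text has a genuine gap at (a); the honest resolution --- and the paper's own --- is to quote the literature there, and you should say so explicitly rather than silently assuming it.

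For part (b) your argument is correct, and it takes a genuinely different route from the paper, which again only cites (``(b) immediately arises from Theorem 4.3 in \cite{ra2}''). Your chain of steps --- $WF(d)\subset{\mathcal H}$ by subadditivity of the wavefront set; symmetry of $d$ because the antisymmetric parts of $\omega_2$ and $\omega'_2$ both equal $\tfrac{i}{2}E$ by \eqref{ASE}; swap-covariance of $WF$ from Theorem~\ref{WFelem}(c); and disjointness ${\mathcal H}\cap\hat s({\mathcal H})=\emptyset$ because every element of ${\mathcal H}$ carries a future-directed covector in the first slot and a past-directed one in the second (parallel transport along a null geodesic preserving the time orientation of a nonzero null covector), whence $WF(d)=\emptyset$ and $d\in C^\infty$ by Theorem~\ref{WFelem}(a) --- checks out in every detail, including the coincident case $x=y$. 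In effect you have reconstructed the argument behind Radzikowski's Theorem~4.3 instead of citing it, which buys the reader a self-contained and elementary proof where the paper offers only a reference. One step worth making explicit: passing from $d(f,g)=d(g,f)$ for all test functions to $s^*d=d$ as distributions on $\Mc\times\Mc$ uses the density of finite linear combinations of product test functions $f\otimes g$ in $C_0^\infty(\Mc\times\Mc)$; without that remark, symmetry of the bilinear form does not formally yield symmetry of the Schwartz kernel.
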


\begin{proof} 
(a) Suppose that  $\omega$ satisfies  (i),  then it is globally Hadamard in the sense\footnote{Results in \cite{ra1,ra2} are stated for $\xi=0$ in KG operator, however they are generally valid for $m^2$ replaced by a given smooth function,  as specified  at the beginning of p. 533 in \cite{ra1}.} of \cite{ra1} due to Theorem 9.2 in \cite{ra2}. Theorem 5.1 in \cite{ra1} implies that (ii) holds.  Conversely, if (ii) is valid, Theorem 5.1 in \cite{ra1} entails that 
$\omega$ is globally and thus locally Hadamard so that (i) holds true.  (b) immediately arises from Theorem 4.3 in \cite{ra2}. $\Box$
\end{proof}

\begin{figure}
\begin{center}
	\includegraphics{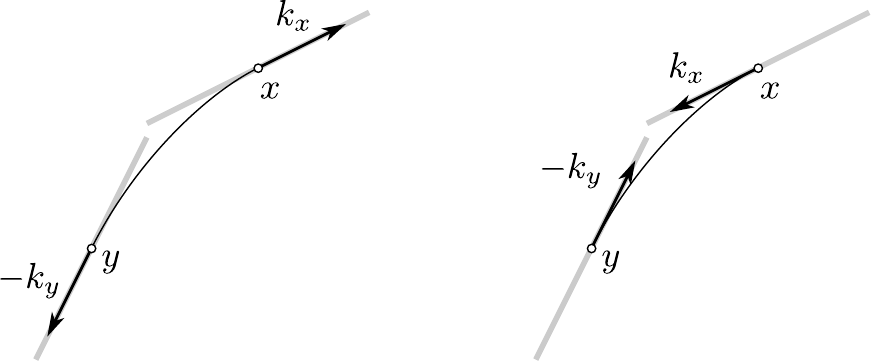}
\end{center}
\caption{The wavefront sets of the retarded fundamental solution $E^+$ of the Klein-Gordon operator, as defined in Proposition~\ref{prp:wf-retadv}, consist of the union of $WF(\delta)$ (Figure~\ref{fig:wf-delta}) and of the points $(x,k_x,y,-k_y)\in T^*\Mc^2$, where $(x,k_x) \sim (y,k_y)$ are linked by the geodesic relation (Figure~\ref{fig:rel-geod}), with the causal precedence condition $x\in J^+(y)$. We illustrate the two cases when $k_x$ is coparallel and anti-coparallel to the future-directed geodesic from $y$ to $x$. The wavefront set of the advanced fundamental solution $E^-$ is defined in the same way, with the exception that we require the causal precedence condition $x\in J^-(y)$ instead.}
\label{fig:wf-retadv}
\end{figure}

It is also helpful to have a characterization of the wavefront set of the retarded and advanced fundamental solutions~\cite{ra1,Strohmaier}.
\begin{proposition}\label{prp:wf-retadv}
The retarded and advanced fundamental solutions of the Klein-Gordon operator $P = \Box_{\Mb}+m^2+\xi R$ on $\Mb$, $E^+,E^-\in {\mathcal D}'(\Mc\times \Mc)$ respectively, have the following wavefront sets (Figure~\ref{fig:wf-retadv}):
\begin{multline}
	WF(E^\pm) = WF(\delta) \\
		{} \cup \left\{ (x,y,k_x,-k_y) \in T^*\Mc^2\setminus 0 \mid
		(x,k_x) \sim (y,k_y), x\in J^\pm(y) \right\} \stackrel {\mbox{\scriptsize  def}}{=} {\mathcal F}_\pm,
\end{multline}
where $\sim$ denotes the same relation as in Theorem~\ref{radtheorem}.
\end{proposition}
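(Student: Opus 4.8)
The plan is to treat $E^\pm$ as the fundamental solutions of the normally hyperbolic operator $P = \Box_\Mb + m^2 + \xi R$ and to read off their wavefront sets from the microlocal regularity and propagation of singularities theorem (Theorem~\ref{tps}), together with the support properties. Recall that $E^\pm$ satisfy $P_x E^\pm = \delta$ in $\mathcal{D}'(\Mc\times\Mc)$, and since $P$ is formally self-adjoint (Remark~\ref{remark2}{\bf (3)}) and $E^+(x,y) = E^-(y,x)$, also $P_y E^\pm = \delta$. Moreover $\supp E^\pm \subset \{(x,y): x\in J^\pm(y)\}$. The argument splits into an upper bound (confining $WF(E^\pm)$ to the set $\mathcal{F}_\pm$) and a lower bound (showing every point of $\mathcal{F}_\pm$ actually occurs).

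For the upper bound, apply Theorem~\ref{tps}{\bf (a)} to $P\otimes I$ and to $I\otimes P$, whose principal symbols are $-g^{ab}(x)\xi_a\xi_b$ and $-g^{ab}(y)\eta_a\eta_b$ (Remark~\ref{remarkps}). This gives, for any $(x,y,\xi,\eta)\in WF(E^\pm)\setminus WF(\delta)$, both $g^{ab}(x)\xi_a\xi_b = 0$ and $g^{ab}(y)\eta_a\eta_b = 0$. To see that $\xi$ and $\eta$ are separately nonzero (hence genuinely null), use the regularity property $\mathcal{E}^\pm\colon C_0^\infty(\Mc)\to C^\infty(\Mc)$ of the retarded/advanced solution operator: by Theorem~\ref{WFelem}{\bf (e)} the smoothness of $\mathcal{E}^\pm f$ for every $f\in\mathcal{D}(\Mc)$ forces $WF(E^\pm)$ to contain no point of the form $(x,y,\xi,0)$, and the analogous statement for the map in the other variable excludes $(x,y,0,\eta)$. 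Finally, $WF(E^\pm)$ projects into $\supp E^\pm$, which imposes $x\in J^\pm(y)$. Thus every off-diagonal wavefront point has $\xi,\eta$ nonzero null covectors with $x\in J^\pm(y)$.

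The diagonal lower bound is immediate: since $P\otimes I$ cannot enlarge wavefront sets, Theorem~\ref{WFelem}{\bf (b)} gives $WF(\delta)=WF(P_xE^\pm)\subset WF(E^\pm)$, and $WF(\delta)$ has the form \eqref{eq:wf-delta}. For the off-diagonal part, invoke Theorem~\ref{tps}{\bf (b)}: $WF(E^\pm)\setminus WF(\delta)$ is invariant under the Hamiltonian flow of $-g^{ab}(x)\xi_a\xi_b$ on $T^*\Mc^2\setminus WF(\delta)$. For this symbol the variables $(y,\eta)$ are constants of motion while $(x,\xi)$ is transported along the lift of a null geodesic (Remark~\ref{remarkps}{\bf (2)}). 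A bicharacteristic issuing from a diagonal seed $(x_0,x_0,\xi_0,-\xi_0)\in WF(\delta)$ therefore sweeps out exactly the points $(x,x_0,\xi,-\xi_0)$ with $\xi$ the parallel transport of $\xi_0$ along the null geodesic from $x_0$ to $x$; writing $k_x=\xi$ and $k_y=-\eta=\xi_0$, this is precisely the relation $(x,k_x)\sim(y,k_y)$ of Theorem~\ref{radtheorem}. Combined with the causal restriction from the support, this identifies the off-diagonal wavefront set with the flow-out part of $\mathcal{F}_\pm$.

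The main obstacle is the genuine \emph{presence} of the off-diagonal flow-out, not merely its containment. Theorem~\ref{tps}{\bf (b)} asserts only that $WF(E^\pm)\setminus WF(\delta)$ is a union of complete bicharacteristics; it does not by itself guarantee that the bicharacteristics emanating from the diagonal carry a nonzero singularity. Establishing this requires the precise conormal structure of $E^\pm$ at the diagonal together with the fact that this leading singularity propagates---this is the content of the Duistermaat--H\"ormander theorem on distinguished parametrices, and is exactly the point where I would lean on \cite{ra1,Strohmaier}. An alternative route to the lower bound is to first determine $WF(E)$ for the causal propagator $E=E^+-E^-$ and then separate the contributions of $E^+$ and $E^-$ using their disjoint supports $\{x\in J^\pm(y)\}$; but this only relocates the same difficulty, since one must still know that the commutator singularity is nontrivial along each null geodesic.
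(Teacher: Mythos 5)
You should first be aware that the paper contains no proof of Proposition~\ref{prp:wf-retadv} at all: the statement is imported from \cite{ra1,Strohmaier}, where it ultimately rests on the Duistermaat--H\"ormander theory of distinguished parametrices. Your deferral of the genuinely hard part --- that the off-diagonal flow-out actually carries singularities --- to those same sources is therefore at the paper's own level of rigor, and your diagonal inclusion $WF(\delta)=WF(P_xE^\pm)\subset WF(E^\pm)$ via Theorem~\ref{WFelem}(b) is correct. The trouble lies in the upper bound $WF(E^\pm)\subset{\mathcal F}_\pm$, the part you do attempt in detail, where there are two genuine gaps. (i) The exclusion of points $(x,y,\xi,0)$ and $(x,y,0,\eta)$ does not follow from Theorem~\ref{WFelem}(e) in the direction you use it. That theorem is the inclusion $WF({\mathcal K}f)\subset\{(x,p)\::\:(x,y,p,0)\in WF(K)\ \mbox{for some}\ y\in\supp f\}$, an upper bound on $WF({\mathcal K}f)$; smoothness of ${\mathcal E}^\pm f$ for every $f$ is \emph{implied by}, but does not imply, emptiness of the right-hand side. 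The converse you need is in fact false: on $\mathbb{R}\times\mathbb{R}$ the kernel $K(x,y)=\int_1^\infty e^{i(x\theta+y\sqrt{\theta})}a(\theta)\,d\theta$, with $a$ a smooth cutoff supported in $[1,\infty)$, has nonempty wavefront set consisting entirely of points whose second covector vanishes, and yet ${\mathcal K}f\in C^\infty$ for every test function $f$, because $\hat{f}(\sqrt{\theta})$ still decays rapidly in $\theta$. (ii) Even granting (i), your second paragraph only shows that off-diagonal points of $WF(E^\pm)$ have both covectors null and base points with $x\in J^\pm(y)$. That is strictly weaker than membership in ${\mathcal F}_\pm$, which requires the full relation $(x,k_x)\sim(y,k_y)$: a connecting null geodesic, coparallel covectors, related by parallel transport. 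Your third paragraph parametrizes the flow-out of the diagonal, but parametrizing the candidate set is not a proof that every off-diagonal wavefront point lies on it, so the containment is never established.

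Both gaps are closed by a single propagation-plus-support argument that you have all the tools to run. Let $(x_0,y_0,\xi_0,\eta_0)\in WF(E^+)\setminus WF(\delta)$. By Theorem~\ref{tps}(b) applied to $P_xE^+=\delta$, the bicharacteristic $t\mapsto(x(t),y_0,\xi(t),\eta_0)$, in which $(x(t),\xi(t))$ is the lifted inextendible null geodesic through $(x_0,\xi_0)$ while $(y_0,\eta_0)$ stays frozen, remains inside $WF(E^+)$ until it meets $WF(\delta)$. If it never meets $WF(\delta)$ --- which is automatic when $\eta_0=0$, since points of $WF(\delta)$ have $\eta=-\xi\neq0$ --- then, because $WF(E^+)$ lies over $\supp E^+\subset\{x\in J^+(y)\}$, the geodesic $x(t)$ would have to remain in $J^+(y_0)$ forever; but choosing a Cauchy surface $\Sigma$ with $y_0\in I^+(\Sigma)$, the inextendible causal curve $x(t)$ must cross $\Sigma$ and hence leave $J^+(y_0)\subset I^+(\Sigma)$, a contradiction. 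Therefore $\eta_0\neq0$ (and, running the same argument with $P_yE^+=\delta$, also $\xi_0\neq0$), and the bicharacteristic must hit $WF(\delta)$ at some $t_*$, which forces $x(t_*)=y_0$ and $\xi(t_*)=-\eta_0$. This is precisely the statement that $y_0$ lies on the null geodesic determined by $(x_0,k_x)$ with $k_x=\xi_0$, and that $k_y:=-\eta_0$ is the parallel transport of $k_x$ along it, i.e.\ $(x_0,k_x)\sim(y_0,k_y)$; combined with the support condition this yields $WF(E^\pm)\subset{\mathcal F}_\pm$. With this substitution your sketch becomes a complete proof of the inclusion, while the reverse inclusion remains, as you rightly say, a citation to \cite{ra1,Strohmaier}.
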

With this result and the microlocal  technology previously introduced we can prove some 
remarkable properties of Hadamard states, especially in relation with what was already discussed in (4) in Remark \ref{remHadpar}. The second statement, for $n=4$, implies that the singularity structure of Hadamard states propagates through the spacetime.  

\begin{proposition}\label{prp:hadform}  Consider a state $\omega$ on $\Ac(\Mb)$, with $\omega_2 \in {\cal D}'(\Mc \times \Mc)$, where $\Mb$ is a (time oriented) globally hyperbolic spacetime  with dimension $n\geq 2$. The following facts hold.

{\bf (a)} If $WF(\omega_2)$ has the Hadamard form, then $\Mc \ni x \mapsto \omega_2(x, f)$ is smooth for every $f \in C_0^\infty(\Mc)$.

{\bf(b)} If  $WF(\omega_2\spa\rest_{O\times O})$ has the Hadamard form on $O$,  where $O$ is an open neighborhood  of a smooth spacelike Cauchy surface  $\Sigma$ of $\Mb$, then $WF(\omega_2)$ has the Hadamard form on $\Mb$.
\end{proposition}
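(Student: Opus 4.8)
For part (a) the plan is to read off the statement directly from the kernel estimate of Theorem~\ref{WFelem}(e). I would regard $\omega_2$ as the Schwartz kernel of the continuous map $\mathcal{K}\colon C_0^\infty(\Mc)\to\mathcal D'(\Mc)$, $\mathcal{K}f=\omega_2(\cdot,f)$. Then Theorem~\ref{WFelem}(e) bounds $WF(\mathcal{K}f)$ by the set of $(x,p)$ such that $(x,y,p,0)\in WF(\omega_2)$ for some $y\in\supp f$. The point is that the Hadamard form ${\mathcal H}$ contains no covector with vanishing second entry: in ${\mathcal H}$ the second covector is $-k_y$, where $k_y$ is the parallel transport along a null geodesic of the nonzero null covector $k_x$, hence $k_y\neq 0$. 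Therefore $WF(\mathcal{K}f)=\emptyset$, and $x\mapsto\omega_2(x,f)$ is smooth by Theorem~\ref{WFelem}(a). This is exactly the assertion~\eqref{Fomega}.

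For part (b) the starting point is that $\omega_2$ is a bisolution: by~\eqref{KGomega} of Proposition~\ref{prp:2pt} one has $P_x\omega_2=P_y\omega_2=0$ distributionally, where $P=\Box_\Mb+m^2+\xi R$ acts in the first, resp. second, factor of $\Mc\times\Mc$. Since $P_x$ and $P_y$ are normally hyperbolic with real principal symbol, I would apply Theorem~\ref{tps} in each slot to obtain two facts: that $WF(\omega_2)\subset\mathrm{char}(P_x)\cap\mathrm{char}(P_y)$, so each covector is null (up to the possibility of vanishing); and that $WF(\omega_2)$ is invariant under the bicharacteristic flow of $P_x$ (which drags $x$ along the null geodesic cotangent to $k_x$, parallel-transporting $k_x$, while fixing $(y,k_y)$) and, independently, under that of $P_y$. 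Alongside this I would record two geometric facts. First, ${\mathcal H}$ is itself invariant under both flows: moving $x$ along the null geodesic $\gamma$ realizing $(x,k_x)\sim(y,k_y)$ preserves the relation $\sim$, and, since parallel transport preserves nullness and time-orientation, preserves $k_x\triangleright 0$; symmetrically for $y$. Second, because $\Sigma$ is a Cauchy surface, every maximally extended null geodesic meets $\Sigma$ and hence enters $O$, so along the two commuting flows any point of $WF(\omega_2)$ with both covectors nonzero can be driven into $T^*(O\times O)$.

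The conclusion then follows by a transport-and-pull-back argument. Given $(x,y,k_x,-k_y)\in WF(\omega_2)$ with $k_x,k_y\neq 0$, I flow $x$ into $O$ by the $P_x$-flow and then $y$ into $O$ by the $P_y$-flow (the two flows commute, acting on different factors), arriving at a point of $WF(\omega_2)\cap T^*(O\times O)$. By locality of the wavefront set, $WF(\omega_2)\cap T^*(O\times O)=WF(\omega_2|_{O\times O})={\mathcal H}\cap T^*(O\times O)$ by hypothesis, so this point lies in ${\mathcal H}$; running both flows backwards and using the flow-invariance of ${\mathcal H}$ returns the original point to ${\mathcal H}$. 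The reverse inclusion ${\mathcal H}\subset WF(\omega_2)$ is proved symmetrically, flowing a point of ${\mathcal H}$ along its geodesic $\gamma$ into $O\times O$ and invoking the flow-invariance of $WF(\omega_2)$. This yields $WF(\omega_2)={\mathcal H}$ on all of $\Mb$.

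The main obstacle is precisely the covector components I have set aside: the flow argument stalls on points $(x,y,k_x,0)$ or $(x,y,0,k_y)$ of $WF(\omega_2)$, since there one of the two Hamiltonian vector fields vanishes and the corresponding slot cannot be transported into $O$. Such zero-covector points are not excluded by the bisolution property alone (for instance a degenerate kernel $v(x)$ with $Pv=0$ and $v$ singular is a counterexample), so the genuinely hard step is to rule them out. I expect to accomplish this using the positivity of the state: $\omega_2$ is of positive type, $\omega_2(\bar f,f)\geq 0$, and together with the Hermiticity $\overline{\omega_2(x,y)}=\omega_2(y,x)$ (whence $WF(\omega_2)$ is symmetric under $(x,y,\xi,\eta)\mapsto(y,x,-\eta,-\xi)$) this forbids a purely one-sided singularity; a microlocal Cauchy--Schwarz estimate then shows $WF(\omega_2)$ cannot meet $\{k_x=0\}$ or $\{k_y=0\}$. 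Once every element of $WF(\omega_2)$ has both covectors null and nonzero, the propagation argument above applies verbatim.
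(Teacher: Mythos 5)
Your part (a) is correct and is exactly the paper's argument: Theorem~\ref{WFelem}(e) gives $WF(\omega_2(\cdot,f))=\emptyset$ because ${\mathcal H}$ contains no element of the form $(x,y,k_x,0)$, and Theorem~\ref{WFelem}(a) then gives smoothness.

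Part (b), however, has a genuine gap, and it is precisely the one you flag yourself at the end. Your strategy is Radzikowski's original propagation-of-singularities argument, and it stalls on possible elements of $WF(\omega_2)$ of the form $(x,y,k_x,0)$ or $(x,y,0,k_y)$: there the Hamiltonian vector field of the relevant principal symbol vanishes, so Theorem~\ref{tps} provides no transport of such points into $T^*(O\times O)$, and the hypothesis on $O$ cannot be brought to bear. Everything hinges on ruling these points out, and that step is deferred to an unproved claim, namely that positivity of $\omega$ plus ``a microlocal Cauchy--Schwarz estimate'' excludes them. This is not a routine estimate; it is exactly the gap in the proof of Theorem 5.1 of \cite{ra1} that the paper's remark following this proposition discusses, and in the literature it was closed only with substantial additional machinery: by Sahlmann and Verch \cite{sv}, who combined Radzikowski's partial proof with the older propagation-of-Hadamard-form results \cite{fsw,fnw,kw}, and by Sanders \cite{sanders-phd}, who did exploit positivity through the GNS representation but needed the Hilbert-space-valued wavefront set calculus of \cite{svw} rather than an elementary Cauchy--Schwarz argument. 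As written, your proof therefore reduces the proposition to an open lemma which is the hardest part of the statement.

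The paper's proof avoids the problem altogether by never feeding the unknown global $WF(\omega_2)$ into a propagation argument. Using the bisolution property it establishes the exact identity $\omega_2 = {\mathcal S}^t \circ \omega_2|_{O\times O} \circ {\mathcal S}$, where ${\mathcal S}f = f - P[\chi_+E^-f + \chi_-E^+f]$ maps $C^\infty_0(\Mc)\to C^\infty_0(O)$ for a partition of unity $\chi_++\chi_-=1$ adapted to $\Sigma$, and then applies the composition-of-kernels Theorem~\ref{thm:kern}. The only wavefront sets entering the estimate are $WF(\delta)$, $WF(E^\pm)$ (Proposition~\ref{prp:wf-retadv}) and $WF(\omega_2|_{O\times O})={\mathcal H}_O$, all known by hypothesis or by general theory and all free of zero-covector elements, so the hypotheses of Theorem~\ref{thm:kern} can be verified; the geometric content is then the relation computation ${\mathcal H}'_O\circ{\mathcal F}'_\pm\subset{\mathcal H}'_{\Mc}$, which is essentially the null-geodesic transitivity and orientation bookkeeping you already carried out, but packaged so that the zero-covector issue never arises. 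Reworking your part (b) along these lines would turn your sketch into a complete proof.
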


\begin{proof} (a) From (e) in Theorem \ref{WFelem} and the Hadamard form of $WF(\omega_2)$ we conclude that $WF(\omega_2(\cdot, f)) = \emptyset$. Next, (a) in Theorem \ref{WFelem} implies the thesis.

(b) The 2-point function $\omega_2(x,y)$ is a bisolution of the Klein-Gordon operator $P = \Box_{\Mb}+m^2+\xi R$, as in~\eqref{KGomega}. So the value of $\omega_2(f,g)$, for $f, g \in C^\infty_0(\Mc)$, depends on the arguments only up to the addition of any term from $P[C^\infty_0(\Mc)]$. In fact, we can choose $h, k \in C^\infty$ such that $\supp (f+P[h])$ and $\supp (g + P[k])$ are both contained in $O$. More precisely, we can define an $S\in {\mathcal D}'(O\times \Mc)$ such that the corresponding operator maps ${\mathcal S}\colon C^\infty_0(\Mc) \to C^\infty_0(O)$ and we have the identity $\omega_2 = {\mathcal S}^t \circ \omega_2 \circ {\mathcal S}$. Then, using the result of Theorem~\ref{thm:kern} on the composition of kernels and the fact that $\omega_2$ has the Hadamard form on $O$, we can show that $\omega_2$ has the Hadamard form on all of $\Mc$.

Consider a smooth partition of unity $\chi_+ + \chi_- = 1$ adapted to the Cauchy surface $\Sigma$. That is, there exist two other Cauchy surfaces, $\Sigma_+$ in the future of $\Sigma$ and $\Sigma_-$ in the past of $\Sigma$, such that $\supp \chi_+ \subset J^+(\Sigma_-)$ and $\supp \chi_- \subset J^-(\Sigma_+)$. Such an adapted partition of unity always exists if $O$ is globally hyperbolic in its own right and, if not, since $\Mb$ is globally hyperbolic, any open neighborhood of $\Sigma$ will contain a possibly smaller neighborhood of $\Sigma$ that is also globally hyperbolic~\cite{BernalSanchez1,BernalSanchez2}.

Let ${\mathcal S}f = f - P[\chi_+E^-f+\chi_-E^+f]$, with the corresponding integral kernel
\begin{equation}
	S(x,y) = \delta(x,y) - P_x[\chi_+(x) E^-(x,y) + \chi_-(x) E^+(x,y)] ,
\end{equation}
where the subscript on $P_x$ means that it is acting only on the $x$ variable. A straight forward calculation shows that $S$ has the desired properties. Multiplication by a smooth function and the application of a differential operator does not increase the wavefront set, hence
\begin{equation}
	WF(S) \subset WF(\delta) \cup WF(E^-) \cup WF(E^+)
\end{equation}
as a subset of $T^*(O \times \Mc)$. The $\delta$-function has the wavefront set
\begin{equation}
	WF(\delta) = \{ (x,x,k_x,-k_x) \in T^*\Mc^2\setminus 0 \mid (x,k_x) \in T^*\Mc \setminus 0 \} \: .
\end{equation}
The wavefront sets ${\mathcal F}_\pm = WF(E^\pm)$ of the retarded and advanced fundamental solutions was given in Proposition~\ref{prp:wf-retadv}. The Hadamard form ${\mathcal H}$ of the wavefront set was defined in Theorem~\ref{radtheorem}. We can now appeal to Theorem~\ref{thm:kern} on the wavefront set of the composition of kernels to how that $WF(\omega_2) = WF({\mathcal S}^t \circ \omega_2 \circ {\mathcal S}) \subset {\mathcal H}$. The first thing to check is that $WF(S)_{\Mc_i}$, $WF(\omega_2)_{\Mc_i}$, $i=1,2$ denoting respectively the first and the second factor in $\Mc\times \Mc$, are all empty, because they contain no element of the form $(x,y,k_x,0)$ or $(x,y,0,k_y)$. Second, due to the hypothesis $WF'(\omega_2)|_O \subset {\mathcal H}'_O$, the symmetry of the composition and the fact that composition with $\delta(x,y)$ leaves any wavefront set invariant, it is sufficient to check that the compositions of wavefront sets as relations satisfy ${\mathcal H}'_O \circ {\mathcal F}'_\pm \subset {\mathcal H}'_\Mc$. 

Consider any $(x,y,k_x,k_y) \in {\mathcal H}'_O$ and $(y,z,k_y,k_z) \in {\mathcal F}'_\pm$, so that $(x,z,k_x,k_z) \in {\mathcal H}'_O \circ {\mathcal F}'_\pm$. Then $(x,k_x) \sim (y,k_y)$ and $(y,k_y) \sim (z,k_z)$ in $\Mc$ according to the relation $\sim$ defined in Theorem~\ref{radtheorem}, so that $(x,k_x) \sim (z,k_z)$ by transitivity of that relation in $\Mc$. The only question is about the allowed orientations of $k_x$ and $k_y$. By the Hadamard condition on $O$, we have $k_x \triangleright 0$ and $k_y \triangleright 0$. On the other hand, the condition of being a point in ${\mathcal F}_\pm$ induces the condition that either both $k_y \triangleright 0$ and $k_z \triangleright 0$ or both $k_y \triangleleft 0$ and $k_z \triangleleft 0$. Combining the two conditions we find that $k_z \triangleright 0$, and hence that $(x,z,k_x,k_z) \in {\mathcal H}'_\Mc$. This concludes the proof.  $\Box$
\end{proof}

\begin{remark} With an elementary re-adaptation, statement (b)  holds true weakening the hypotheses, only requiring that $\omega_2 \in {\cal D}'(\Mc\times \Mc)$ and that it satisfies KG equation in both arguments up to smooth functions $r,l\in C^\infty(\Mc\times \Mc, \mathbb C)$, i.e.
$P_x \omega_2(x,y) =l(x,y)$, $P_y \omega_2(x,y) =r(x,y)$. In this form, it   closes a gap\footnote{The gap is the content of the three lines immediately before th proof {\bf (ii)} ${\bf 3} \Rightarrow {\bf 2}$ on p. 547 of \cite{ra1}: The reasoning presented there cannot exclude elements  of the form either $(x_1,x_2, 0,p_2)$ or 
$(x_1,x_2,p_1, 0)$ from $WF(\omega_2)$ outside ${\cal N}$. The idea of our proof was suggested by N. Pinamonti to the authors.}  present in the proof of the main result of \cite{ra1},  Theorem 5.1 (the fact that {\bf 1} implies {\bf 3}), and proves the statement on p. 548 of \cite{ra1} immediately after the proof of the mentioned theorem. It should be mentioned that the same gap had been previously explicitly identified and filled in the work of Sahlmann and Verch~\cite{sv}. These authors merged the partial proof of Radzikowski with the more restrictive result on the `propagation of Hadamard form' obtained previously in the works~\cite{fsw,fnw,kw} without the methods of microlocal analysis. Somewhat later, the same gap was also filled in the thesis of Sanders~\cite{sanders-phd}, who relied on purely microlocal but somewhat sophisticated methods developed earlier in~\cite{svw}. On the other hand, our method, though sharing some similarity in spirit with the ideas in~\cite{fsw,fnw,kw}, is both purely microlocal and rather elementary. In fact, it only takes advantage of the microlocal analysis in the guise of the theorem on the composition of wavefront sets.
\end{remark}

The microlocal formulation gave rise to noticeable results also closing some long standing problems. 
In particular it was proved that the so called Unruh state describing black hole radiation is Hadamard \cite{DMP11} and that the  analogous state, describing thermal radiation in equilibrium with a black hole, the so called Hartle-Hawking state is similarly Hadamard \cite{Sanders}. These results are physically important because they permit one  to compute the back reaction of the quantum radiation  on the geometry, since the averaged, renormalized  stress-energy tensor  $\omega(:\spa T_{ab}\spa:)$ can be defined in these states as previously discussed ((3) in Remark \ref{remhad3}). Other recent applications concerned the 
definition of relevant Hadamard states in asymptotically flat spacetimes at null infinity \cite{Mo09,GW2}, 
and spacelike infinity \cite{GW1}.
Natural Hadamard states for cosmological models have been discussed \cite{DMP09} also in relation with the problem of the Dark Energy \cite{Dark}.  An improved semiclassical formulation  where Einstein equations and the equation of evolution of the Hadamard quantum state and observables are solved simultaneously has been proposed in \cite{Pi11}. See \cite{Thomas,BDH} for  recent reviews also regarding fields with spin or helicity, in particular \cite{DappiaggiSiemens}  for the vector potential field.

\subsection{Algebra of Wick products}\label{sec:alg-wick}  Let us come to the proof of existence of Wick monomials $:\spa \phi^n\spa:(f)$ as algebraic objects, since we only have defined the expectation values $\omega_\Psi(:\spa \phi^n\spa:(f))$ in (\ref{phinHN}).
 We first introduce normal Wick products {\em defined with respect to a reference quasifree Hadamard 
state $\omega$} \cite{bfk,bf,hw}. 
 Referring to  the GNS triple for $\omega$, $({\cal H}_\omega, {\cal D}_\omega, \pi_\omega, \Psi_\omega)$
Define the elements, symmetric under interchange of $f_1,\ldots, f_n \in {\cal D}(\Mc)$,
$$\hat{W}_{\omega, 0}   \stackrel {\mbox{\scriptsize  def}} {=} \II\:, \quad \hat{W}_{\omega, n}(f_1,\ldots,f_n)   \stackrel {\mbox{\scriptsize  def}} {=} :\spa\hat{\phi}_\omega(f_1) \cdots  \hat{\phi}_\omega(f_n)\spa: _\omega  \quad \in \Ac(\Mb)$$
for $n=1,2,\ldots,$ where as before,
\begin{equation}:\spa \hat{\phi}_\omega(x_1) \cdots \hat{\phi}_\omega(x_n) \spa:_\omega  \:   \stackrel {\mbox{\scriptsize  def}} {=} \left. \frac{1}{i^n}\frac{\delta^n}{\delta f(x_1)\cdots \delta f(x_n)} \right|_{f=0}e^{i \hat{\phi}(f) +\frac{1}{2}\omega_{2}(f,f)}  \label{phinHhat}
\end{equation}
  The operators $\hat{W}_{\omega, n}(f_1,\ldots,f_n)$ can be extended to (or directly defined on) \cite{bf,hw}
an
  invariant subspace of ${\cal H}_\omega$, the  {\bf microlocal domain of 
  smoothness} \cite{bf,hw}, 
  $D_\omega \supset {\cal D}_\omega$, which is dense, invariant under the action of $\pi_\omega(\Ac(\Mb))$
and the associated unitary Weyl operators, and contains $\Psi_\omega$ and all of unit vectors  of ${\cal H}_\omega$ which induce Hadamard quasifree states on $\Ac(\Mb)$.
The map $$f_1\otimes \cdots \otimes f_n \mapsto \hat{W}_{\omega, n}(f_1,\ldots,f_n)$$ uniquely extends by complexification and linearity  to a map defined on $${\cal D}(\Mc)\otimes \cdots \otimes {\cal D}(\Mc)\:.$$
Finally, if $\Psi \in D_\omega$, the map ${\cal D}(\Mc)\otimes \cdots \otimes {\cal D}(\Mc) \ni h \mapsto  \hat{W}_{\omega, n}(h)\Psi$
  turns out to be continuous with respect to the relevant topologies: The one of ${\cal H}_\omega$ in the image and the one of ${\cal D}(\Mc^n)$ in the domain.
 A {\em vector-valued distribution}  ${\cal D}(\Mc^n) \ni h \mapsto  \hat{W}_{\omega, n}(h)$, uniquely arises  this way.
Actually, since $:\spa\hat{\phi}_\omega(f_1) \cdots  \hat{\phi}_\omega(f_{n})\spa: _\omega\:$ is symmetric 
 by construction, the above mentioned  distribution  is similarly symmetric and can be defined on the subspace ${\cal D}_n(\Mc) \subset  {\cal D}(\Mc^n)$ of the symmetric test functions:
 $${\cal D}(\Mc^n) \ni h \mapsto  \hat{W}_{\omega, n}(h)\:.$$
  By Lemma 2.2 in \cite{bf}, if $\Psi\in D_\omega$ the wavefront set  $WF\left(\hat{W}_{\omega, n}(\cdot) \Psi\right)$ of the vector-valued 
distributions 
  $t\mapsto \hat{W}_{\omega, n}(t) \Psi$, is contained in the set 
  \begin{eqnarray}
  {\bf F}_n(\Mb) \stackrel {\mbox{\scriptsize  def}} {=} \{(x_1,k_1,\ldots, x_n,k_n) \in 
(T^*\Mc)^n \setminus \{0\}
  | k_i\in V^-_{x_i}, i=1,\ldots,n\}  \label{Fn}\:,
  \end{eqnarray}
with $V_x^{+/-}$ denoting the set of all nonzero time-like and light-like
   co-vectors at $x$ which are future/past directed.  Theorem \ref{teoprod}, which can be proved to hold in this case too,  implies that we are allowed to define the product between a  distribution $t$
   and a vector-valued distribution $\hat{W}_{\omega, n}(\cdot) \Psi$ 
   provided $WF(t)+ {\bf F}_n(M,{\bf g})\not \ni \{(x,0)\:|\: x \in \Mc^n\}$. To this end, with ${\cal D}_n'(\Mc) \subset {\cal D}'(\Mc^n)$ denoting the subspace of symmetric distributions,   define
   $${\cal E}'_n(\Mc)  \stackrel {\mbox{\scriptsize  def}} {=}  \left\{ t \in {\cal D}_n'(\Mc)\:|\:
   \mbox{$\supp t$ is compact, $WF(t) \subset {\bf G}_n(\Mc)$ } \right\}$$
   where
   $${\bf G}_n(\Mc)  \stackrel {\mbox{\scriptsize  def}} {=} T^*\Mc^n\setminus
   \left( \bigcup_{x\in \Mc}(V_x^+)^n \cup \bigcup_{x\in \Mc}(V_x^-)^n\right)\:.$$
   It holds $WF(t)+ {\bf F}_n(\Mc)\not \ni \{(x,0)\:|\: x \in \Mc^n\}$ for $t\in {\cal E}'_n(\Mc)$. By 
consequence, the product 
   $$t \odot  \hat W_{\omega, n}(\cdot)\Psi$$
   of the distributions $t$ and $\hat{W}_{\omega, n}(\cdot) \Psi$ can be defined for
   every $\Psi \in D_\omega$ and  it turns out to be a well-defined
    vector-valued symmetric distribution,
     ${\cal D}_n(M) \ni f \mapsto t \odot  \hat W_{\omega, n}(f) \Psi$,
     with  values  in $D_\omega$. Thus, we have also defined an operator valued symmetric distribution, ${\mathcal D}_n(M) \ni f \mapsto t \odot \hat{W}_{\omega,n}(f)$, defined on and leaving invariant the domain $D_\omega$, acting as $\Psi \mapsto t\odot \hat{W}_{\omega,n}(f)\Psi$. This fact 
permits us to smear  $\hat{W}_{\omega, n}$ with $t\in {\cal E}'_n(\Mc)$, 
just  defining
  $$ \hat W_{\omega, n}(t)   \stackrel {\mbox{\scriptsize  def}} {=}  \left(t \odot \hat 
W_{\omega,n}\right)(f)\:,$$
where 
$f\in {\cal D}_n(\Mc)$ is equal to $1$ on  $\supp t$.
   It is simple to prove that the definition does not depend on $f$
    and the new smearing operation reduces to the usual one for  $t\in {\cal D}_n(\Mc) \subset {\cal 
E}'_n(M, {\bf g})$. Finally,
   since $f\delta_n \in {\cal E}'_n(\Mc)$ for $f\in {\cal D}(M)$, 
where $\delta_n$ is the Dirac delta supported on the diagonal of $\Mc^n = \Mc\times \cdots \times \Mc$ ($n$ times),
the following operator-valued 
distribution 
   is well-defined
   on  $D_\omega$ which, is then an invariant subspace,
   $$f\mapsto :\spa\hat{\phi}^n\spa:_\omega(f)
   \stackrel {\mbox{\scriptsize  def}} {=}\hat{W}_{\omega,n}(f\delta_n)\:,$$
 
\begin{definition}
  $ :\spa\hat\phi^n\spa:_\omega(f)$ is the 
   {\bf normal ordered product of $n$ field operators with respect to $\omega$}. 
  ${\cal W}_\omega(\Mb)$ is the $*$-algebra generated by
  $\II$ and the operators $\hat{W}_{\omega,n}(t)$
  for all $n\in \mathbb N$ and $t\in {\cal E}'_n(M,{\bf g})$ with involution given by 
   $\hat{W}_{\omega,n}(t)^* \stackrel {\mbox{\scriptsize  def}} {=}
  \hat{W}_{\omega,n}(t)^\dagger\spa\rest_{D_\omega} (= \hat{W}_{\omega,n}(\overline{t}))$.
\end{definition}

\begin{remark} $\null$

{\bf (1)} As proved in \cite{hw}, each product $\hat{W}_{\omega,n}(t)\hat{W}_{\omega,n'}(t')$ can be decomposed as a finite linear combination of terms $\hat{W}_{\omega,m}(s)$ extending the Wick theorem, and other natural identities, in particular related with commutation relations, hold.

{\bf (2)} $\pi_\omega({\cal A}(\Mb))$ 
  turns out to be a sub $*$-algebra of ${\cal W}_\omega(\Mb)$ since 
  $\hat{\phi}_\omega(f) = \:\: :\spa \hat{\phi}\spa:_\omega(f)$ for $f\in {\cal D}(M)$.
   \end{remark}
 If  $\omega,\omega'$ are two quasifree Hadamard states, 
   ${\cal W}_\omega(\Mb)$ and ${\cal W}_{\omega'}(\Mb)$ 
 are isomorphic  (not unitarily
 in general) under a canonical 
 $*$-isomorphism $$\alpha_{\omega' \omega} : {\cal W}_\omega(\Mb) \to  {\cal 
W}_{\omega'}(\Mb)\:,$$
 as shown in Lemma 2.1 in \cite{hw}. Explicitly, $\alpha_{\omega'\omega}$ is induced by linearity from the requirements
\begin{eqnarray}  \alpha_{\omega'\omega}(\II) = \II\:, \quad 
\alpha_{\omega'\omega}(W_{n, \omega}(t)) =  \sum_{k} W_{n-2k, \omega'}(\langle d^{\otimes k}, t\rangle) \:,
\end{eqnarray}
where $d(x_1,x_2)\stackrel {\mbox{\scriptsize  def}} {=} \omega(x_1,x_2)-\omega'(x_1,x_2)$
(only the symmetric part matters here) and
\begin{align}\langle d^{\otimes k}, t\rangle (x_1,\ldots, x_{n-2k}) &\stackrel {\mbox{\scriptsize  def}} {=} \frac{n!}{(2k)! (n-2k)!}\int_{\Mc^{2k}} t(y_1,\ldots, y_{2k}, x_1, \ldots, x_{n-2k})\nonumber \\
&\times \prod_{i=1}^k d(y_{2i-1}, y_{2i}) \dvol_{\Mb}(y_{2i-1})\dvol_{\Mb}(y_{2i}) \label{alpha}\end{align}
for $2k \leq n$ and $\langle d^{\otimes k}, t\rangle =0$ if $2k>n$.\\
  These $*$-isomorphisms also satisfy $$\alpha_{\omega'' \omega'}\circ \alpha_{\omega' \omega} =
  \alpha_{\omega'' \omega}$$ and  $$\alpha_{\omega' \omega} (\hat \phi_\omega(t)) = \:\: \hat 
\phi_{\omega'}(t)\:.$$
The idea behind these isomorphisms is evident: Replace everywhere $\omega$ by $\omega'$.
For instance
$$\alpha_{\omega'\omega}(:\spa \hat{\phi}^2\spa:_\omega(f)) = :\spa \hat{\phi}^2\spa:_{\omega'}(f) + \int_{\Mc} (\omega-\omega')(x,x) f(x) \dvol_{\Mb}\: \II$$
where $\omega-\omega'$ is smooth for (b) in Theorem \ref{radtheorem}.\\
One can eventually define  
an abstract unital $*$-algebra ${\cal W}(\Mb)$, generated by  elements $\II$ and $W_n(t)$ 
with $t \in {\cal E}'_n(\Mc)$,
isomorphic to each concrete unital $*$-algebra  ${\cal 
W}_\omega(\Mb)$ by
$*$-isomorphisms  $\alpha_\omega : {\cal W}(\Mb) \to {\cal W}_\omega(\Mb)$ such 
that,  if $\omega,\omega'$ are quasifree 
Hadamard  states, 
$\alpha_{\omega'}\circ \alpha_\omega^{-1} = \alpha_{\omega' \omega}$. \\
As above ${\cal A}(\Mb)$ is isomorphic to the  $*$-algebra of ${\cal W}(\Mb)$ 
generated by $\II$ and $W_1(f) = :\spa\hat{\phi}\spa:(f) = \phi(f)$ for $f \in {\cal D}(\Mc)$.

\begin{remark}
It is not evident how (Hadamard) states initially defined on $\Ac(\Mb)$ (continuously) extend to states on ${\cal W}(\Mb)$. This problem has been extensively discussed in \cite{HR} in terms of relevant topologies.
\end{remark}
It is now possible to define a notion of local Wick monomial {\em which does not depend on a preferred Hadamard state}. If $t \in {\cal E}'_n(\Mc)$ has support sufficiently concentrated around the diagonal of $\Mc^n$, realizing ${\cal W}(\Mb)$  as ${\cal 
W}_\omega(\Mb)$ for some quasifree Hadamard state $\omega$, we define 
a {\bf local covariant Wick polynomial}   as 
$$W_n(t)_H \stackrel {\mbox{\scriptsize  def}} {=} \alpha^{-1}_\omega \left(\alpha_{H\omega}(W_{n,\omega}(t))\right)$$
where $\alpha_{H\omega}$ is defined as in (\ref{alpha}) replacing $\omega'$ by the Hadamard parametrix $H_{0^+}$.
One easily proves that this definition does not depend on the choice of the Hadamard state $\omega$. The fact that the support of $t$ is supposed to be concentrated around of the diagonal of $\Mc^n$ it is due to the fact that $H_{\epsilon}(x,y)$ is defined only if $x$ is sufficiently close to $y$. This definition is completely consistent with (\ref{phinHN}), where now the 
$:\spa\phi(f_1)\cdots \phi(f_n)\spa:_H$ can be viewed as elements of ${\cal W}(\Mb)$ and not only of  $\Ac(\Mb)$,
 and
 it makes sense to write in particular,
$$:\spa\phi^2\spa:_H(f)\stackrel {\mbox{\scriptsize  def}} {=} 
W_2(f\delta_2)_H = \int_{\Mc^2} :\spa\phi(x)\phi(y)\spa:_H \delta(x,y)
f(x) \dvol_{\Mb^2}(x,y)\:.$$
Analogous monomials $:\spa\phi^n\spa:_H(f)$ are defined similarly as elements of ${\cal W}(\Mb)$. {\em With the said definition (\ref{phinHN}) holds true literally} and not only in the sense of quadratic forms.
\begin{remark} The presented definition  of  locally covariant Wick monomials $:\spa\phi^n\spa:_H(f)$, though satisfying general requirement of locality and covariance \cite{bfv} (see also \cite{chapt:CR}),
 remains however affected by several  ambiguities. A full classification of them is the first step of ultraviolet renormalization program \cite{hw,km}. The algebra   ${\cal W}(\Mb)$ also includes the so-called (locally covariant) time-ordered Wick polynomials,  necessary to completely perform the renormalization procedure \cite{hw2b}.
\end{remark}
The constructed formalism can be extended in order to encompass differentiated Wick polynomials  and it has a great deal of effect concerning the definition of the stress energy tensor operator \cite{stress}. It is defined as an element of ${\cal W}(\Mb)$ by subtracting the universal Hadamard singularity from the two-point function of $\omega$, before computing the relevant derivatives. 
\begin{eqnarray}\label{set}
 :\spa T_{ab}\spa:_H(f) = \int_{\Mc^2} 
D_{ab}(x,y) :\spa\phi(x) \phi(y) \spa:_H \delta(x,y) f(x) \: \dvol_{\Mb^2}(x,y)
\label{EMTensor}
\end{eqnarray}
\noindent $D_{ab}(x,y)$ is a certain symmetrized  second order partial differential operator obtained from (\ref{Tabgen})
  (cf. \cite{stress} Equation (10), and \cite{Thomas} where some minor misprints have been corrected and the signature $({-}{+}{+}{+})$ has been adopted), 
\begin{align*}
D_{ab}(x,y) := &\; D^\text{can}_{ab}(x,y) -\frac13 g_{ab} P_x\\
D^\text{can}_{ab}(x,y) :=&\; (1-2\xi)g^{b^\prime}_b\nabla_{a}\nabla_{b^\prime}-2\xi\nabla_{a}\nabla_b-\xi G_{ab}\\&\;+g_{ab}\left\{2\xi \square_x+\left(2\xi-\frac12\right)g^{c^\prime}_c\nabla^c\nabla_{c^\prime}+\frac12 m^2\right\}  \: .
\end{align*}
Here, covariant derivatives with primed indices indicate covariant derivatives w.r.t. $y$, $g^{b^\prime}_b$ denotes the parallel transport of vectors along the unique geodesic connecting $x$ and $y$, the metric $g_{ab}$ and the Einstein tensor $G_{ab}$ are considered to be evaluated at $x$.
The form of the ``canonical" piece $D^\text{can}_{ab}$ follows from the
definition of the classical stress-energy tensor, while the last term $-\frac13 g_{ab} P_x$,
giving rise to a final contribution $- \frac{g_{ab}}{3}:\spa \phi(x)P\phi(x) \spa:_H$  to the stress-energy operator,
 has been introduced in \cite{stress}. It gives no contribution classically, just in view of the very Klein-Gordon equation satisfied by the fields, however, in the quantum realm, its presence has a very important reason. Because the Hadamard parametrix satisfies the Klein-Gordon equation only up to smooth terms, the term with $P_x$ is non vanishing. Moreover, without this additional term, the above definition of $:\spa T_{ab}\spa:_H$  would not yield a conserved stress-tensor expectation value (see \cite{stress} Theorem 2.1).
On the other hand the added therm is responsible for the appearance of the famous {\em trace anomaly} \cite{wald94}. An extended discussion on conservation laws in this framework appears in \cite{HW04}.

\begin{acknowledgement}
The authors are grateful to R. Brunetti, C. Dappiaggi, C. Fewster, T. Hack,  N.Pinamonti, K. Sanders, A. Strohmaier, Y. Tanimoto and R. Verch for useful discussions and suggestions.
\end{acknowledgement}


\bibliographystyle{spmpsci}
\bibliography{Hadamard}
\end{document}